\tikzset{>=latex} 
\renewcommand*\Call[2]{\textproc{#1}(#2)}
\newcolumntype{L}[1]{>{\raggedright\let\newline\\\arraybackslash\hspace{0pt}}m{#1}}
\newcolumntype{C}[1]{>{\centering\let\newline\\\arraybackslash\hspace{0pt}}m{#1}}
\newcolumntype{R}[1]{>{\raggedleft\let\newline\\\arraybackslash\hspace{0pt}}m{#1}} 
\algnewcommand{\algorithmicand}{\textbf{and}}
\algnewcommand{\algorithmicnot}{\textbf{not}}
\DeclareMathOperator*{\argmin}{arg\,min}
\newtheorem{thm}{Theorem}
\newtheorem{prp}{Proposition}
\newtheorem{lem}[thm]{Lemma}
\newtheorem{cor}{Corollary}
\newtheorem{dfn}{Definition}
\title{A comprehensive theory of cascading via-paths and the reciprocal pointer chain method} 
\author{Brandon Smock, Joseph Wilson\thanks{Computer and Information Science and Engineering Department, E301 CSE Building, PO Box 116120, Gainesville, FL 32611.}}
\begin{document}
\maketitle

\begin{abstract}
In this paper, we consolidate and expand upon the current theory and potential applications of the set of $k$ best \emph{cascading via-paths} (CVPs) and the \emph{reciprocal pointer chain} (RPC) method for identifying them.
CVPs are a collection of up to $|V|$ paths between a source and a target node in a graph $G = (V,E)$, computed using two shortest path trees, that have distinctive properties relative to other path sets.
They have been shown to be particularly useful in geospatial applications, where they are an intuitive and efficient means for identifying a set of spatially diverse alternatives to the single shortest path between the source and target.
However, spatial diversity is not intrinsic to paths in a graph, and little theory has been developed outside of application to describe the nature of these paths and the RPC method in general.
Here we divorce the RPC method from its typical geospatial applications and develop a comprehensive theory of CVPs from an abstract graph-theoretic perspective.
Restricting ourselves to properties of the CVPs and of the entire set of $k$-best CVPs that can be computed in $O(|E| + |V| \log |V|)$, we are able to then propose, among other things, new and efficient approaches to problems such as generating a diverse set of paths and to computing the $k$ shortest loopless paths between two nodes in a graph.
We conclude by demonstrating the new theory in practice, first for a typical application of finding alternative routes in road networks and then for a novel application of identifying layer-boundaries in ground-penetrating radar (GPR) data.
It is our hope that by generalizing the RPC method, providing a sound theoretical foundation, and demonstrating novel uses, we are able to broaden its perceived applicability and stimulate new research in this area, both applied and theoretical.
\end{abstract}

\section{Introduction}\label{sec:intro}

A \emph{via-path} is the shortest path in a graph $G = (V,E)$ from a source node, $s$, to a target node, $t$, that passes through an intermediate node, $v$.
From the perspective of graph theory, the $k$-best via-paths form a convenient path \emph{set}, comparable to the $k$ shortest loopless paths \cite{yen1971finding} or $k$ shortest disjoint paths \cite{suurballe1974disjoint}, but with its own unique collection of properties.
From an applied perspective, via-paths have proved to be useful in a growing number of applications, such as finding alternate routes in road networks \cite{abraham2013alternative}, power line path planning \cite{li2013optimization}, animal dispersal route analysis in landscape ecology \cite{pinto2009beyond}, autonomous robot navigation \cite{shiller2004computing}, and layer-boundary identification in ground-penetrating radar (GPR) data \cite{smock2012reciprocal}.

The standard method for computing a set of via-paths is based on computing two shortest path trees, using a procedure such as the bi-directional Dijkstra's algorithm \cite{nicholson1966finding}.
The first (or, \emph{predecessor}) shortest path tree, $G_\beta$, encodes the shortest path $p_{s,v}$ from the source node, $s$, to each node $v \in V$.
The second (or, \emph{successor}) shortest path tree, $G_\phi$, encodes the shortest path $p_{v,t}$ from each node $v \in V$ to the target node, $t$.
Together, $p_{s,v}$ and $p_{v,t}$ combine to form the via-path $p_{s,v,t}$ for each node $v \in V$.

This two-tree method implicitly computes a single $s$--$t$ via-path for each node $v \in V$.
This is not necessarily the set of all $s$--$t$ via-paths, since multiple paths could tie for the best via-path for a given node.
Which single via-path is chosen from the set of tied candidates for a given node during tree computation affects the particular via-path computed for that node's descendent nodes in the tree.
Therefore, to distinguish the set of via-paths computed by this two-tree method from other possible sets, we refer to this set as the set of \emph{cascading} via-paths (CVPs).

A characteristic feature associated with each cascading via-path, which we denote $\tilde{p}_{s,v,t}$, in $G$ is its \emph{reciprocal pointer chain} (RPC).
A reciprocal pointer chain is a path $p_{v,u} \in G_\beta$ whose transpose $p_{u,v}$ is in $G_\phi$.
It can be shown that each CVP has a unique RPC associated with it---and even more interesting, it can be shown that the RPCs of any two CVPs are \emph{disjoint}.
In this paper, we spend considerable effort expounding further on the properties of reciprocal pointer chains, which turn out to be useful both for characterizing the properties of their associated cascading via-paths and as a feature in their own right.

Despite the fact that the two-tree method for computing CVPs has been known for some time, many of the properties of CVPs and RPCs from a theoretical standpoint have gone largely undocumented
This can be attributed to the fact that the work in this area up to this point has been primarily application-motivated.
Consequently, CVPs and the two-tree method have been independently discovered several times over the last 25 years.

In 1993, Lombard and Church \cite{lombard1993gateway} provided the first description of a via-path as a \emph{gateway shortest path} and described the two-tree method for computing what we refer to here as the set of CVPs.
In 2003, Fujita, Nakamura, and Shiller \cite{fujita2003dual} proposed the same two-tree method for computing CVPs, under the name \emph{dual Dijkstra search}, and filed for a patent \cite{nakamura2003dual} on the method that same year.
In 2006, the two-tree method for computing CVPs was proposed again in a proprietary system by Cambridge Vehicle Information Technology (CAMVIT), for application to finding alternative routes in road networks \cite{camvit2006choice}, introducing the concept of a \emph{plateau} and describing a chain of nodes that in this work we refer to as a reciprocal pointer chain.
This work led to more patents \cite{jones2012method,jones2012method2} on the two-tree method being filed in 2007.
Smock and Wilson \cite{smock2012reciprocal,smock2012efficient} proposed the two-tree method in 2012 in the context of trellis graphs and the Viterbi algorithm, although instead of using it to compute CVPs, they introduced the idea of computing just the RPCs as the end product.

This continual re-discovery has led to the lack of a standardized terminology.
Furthermore, all of these applications, including the original work by Lombard and Church, assume the nodes in $G$ have fixed coordinates, and many assume additional restrictions on the form of the edges between the nodes.
As a result, CVPs and the RPC method have never been formalized or abstracted in a way suitable for application to all graphs.
Additionally, previous claims about algorithms for computing via-paths can be shown to not hold true when these assumptions are lifted.

In this paper, we address these issues and establish for the first time a comprehensive, standardized framework and general theory of cascading via-paths and reciprocal pointer chains, placing particular emphasis on the computation and properties of the set of $k$-best CVPs.
In Section \ref{sec:method}, we review and formalize the procedure for computing the $k$-best CVPs.
By approaching the method from a graph-theoretic perspective, we can begin to sort out algorithmic inconsistencies that did not arise in earlier work.
In Section \ref{sec:theory}, we consolidate and present new theory of CVPs and RPCs.
Among our mathematical contributions, we describe RPCs in detail for the first time and show that numerous properties of CVPs derive from the properties of their RPCs, including concrete notions of path local optimality and path set diversity.
Among our computational contributions, we prove theorems describing the relationship between CVPs and the $k$ shortest paths in a graph that suggest a potentially significantly faster way to compute the $k$ shortest loopless paths using Yen's algorithm \cite{yen1971finding}.

In Section \ref{sec:results}, we present two applications that demonstrate the usefulness and versatility of the RPC method when considered from a graph-theoretic perspective.
The first application is a standard use of the method to produce a set of candidate paths, but we produce the candidates in a more principled way founded on the new theory established in this paper.
The second application is a novel use of the RPC method that leverages the properties of RPCs specifically, independent of their corresponding CVPs, to identify linear, physical features in ground-penetrating radar (GPR) images.
In demonstrating these uses and showing how they arise naturally from theoretical considerations, we aim to further establish the generality of the approach and its potential use in an even broader class of applications.
Finally, in Section \ref{sec:conclusion}, we conclude with a review of our major results and their implications.

\section{Background and Method}\label{sec:method}

In this section, we standardize and describe in detail a method to compute the set of $k$-best cascading via-paths (CVPs), given a graph $G = (V,E)$, source node $s$, target node $t$, and positive integer $k$.
We denote the set of all distinct CVPs as $\Lambda$, an ordered/ranked set of CVPs as $\hat{\Lambda}$, and an ordered subset consisting of the $k$ best CVPs as $\hat{\Lambda}^{(k)}$.
We begin with a summary of the procedure, which can be broken into four primary stages.
These stages are depicted in Figure \ref{fig:cvp_computation}, with each transition between graphs corresponding to a single stage.
\subsection{Method summary}\label{sec:summary}
\begin{enumerate}
	\item \emph{Implicit via-path computation}: Given $G$, $s$, and $t$, compute a predecessor shortest path tree $G_\beta$ and a successor shortest path tree $G_\phi$.
	\item \emph{Cascading via-path enumeration (graph partitioning)}: Given $G_\beta$ and $G_\phi$, enumerate the elements of $\Lambda$ by grouping together nodes $v \in V$ with the same CVP. This produces $\Pi$, a set of disjoint partitions of $V$, where $|\Pi| = |\Lambda|$.
	\item \emph{Via-path ranking and sub-selection}: Given $\Pi$ and a positive integer $k$, sort the paths/partitions by some measure of goodness, producing the sequence $\hat{\Pi}$, and select only the top $k$ paths, which are jointly denoted $\hat{\Pi}^{(k)}$.
	\item \emph{Via-path extraction}: Given $G_\beta$, $G_\phi$, and $\hat{\Pi}^{(k)}$, for each CVP, extract its explicit sequence of nodes from its implicit representation within the shortest path trees using a depth-first search, producing the set of explicit $k$-best CVPs, $\hat{\Lambda}^{(k)}$.
\end{enumerate}

\begin{figure}
	 \centering
	    \begin{subfigure}[]{0.19\textwidth}
	    \centering
	    \resizebox{\textwidth}{!}{
        \begin{tikzpicture} [
            basic/.style={
                circle,
                minimum size=5mm,
                draw=black,
                font=\tiny,
                thick
            },
            source_target/.style={
                minimum size=7mm,
                very thick,
                font=\small,
            },
            basic_edge/.style={
                ->,
                draw=black,
                very thick,
            },
            edge_weight/.style={
                font=\footnotesize
            }]
                \node (a) [basic,source_target] at (0,2.5) {s};
                \node (b) [basic] at (1,0.75) {};
                \node (c) [basic] at (1.5,4) {};
                \node (d) [basic] at (2,2.25) {};
                \node (e) [basic] at (2.75,0.25) {};
                \node (f) [basic] at (3.25,4.5) {};
                \node (g) [basic] at (3.75,3) {};
                \node (h) [basic] at (4.25,1.5) {};
                \node (i) [basic] at (5,4) {};
                \node (j) [basic] at (5.75,2.25) {};
                \node (k) [basic] at (6,0.5) {};
                \node (l) [basic] at (7.25,3.5) {};
                \node (m) [basic] at (7.5,1.5) {};
                \node (n) [basic,source_target] at (8.75,2.5) {t};
                
                \draw (9.75,2.5) edge [->,line width=5pt] (10.75,2.5); 
                
                \draw (a) edge [basic_edge] node[edge_weight,anchor=west]{3} (b);
                \draw (a) edge [basic_edge] node[edge_weight,anchor=south]{3} (c);
                \draw (a) edge [basic_edge] node[edge_weight,anchor=south]{2} (d);
                \draw (b) edge [basic_edge] node[edge_weight,anchor=south]{2} (d);
                \draw (b) edge [basic_edge] node[edge_weight,anchor=south]{3} (e);
                \draw (c) edge [basic_edge] node[edge_weight,anchor=east]{2} (d);
                \draw (c) edge [basic_edge] node[edge_weight,anchor=south]{2} (f);
                \draw (c) edge [basic_edge] node[edge_weight,anchor=south]{4} (g);
                \draw (d) edge [basic_edge] node[edge_weight,anchor=south]{4} (h);
                \draw (d) edge [basic_edge] node[edge_weight,anchor=south]{3} (g);
                \draw (d) edge [basic_edge] node[edge_weight,anchor=west]{3} (e);
                \draw (e) edge [basic_edge] node[edge_weight,anchor=south]{2} (h);
                \draw (e) edge [basic_edge] node[edge_weight,anchor=south]{2} (k);
                \draw (f) edge [basic_edge] node[edge_weight,anchor=west]{2} (g);
                \draw (f) edge [basic_edge] node[edge_weight,anchor=south]{2} (i);
                \draw (g) edge [basic_edge] node[edge_weight,anchor=south]{3} (i);
                \draw (g) edge [basic_edge] node[edge_weight,anchor=west]{2} (h);
                \draw (g) edge [basic_edge] node[edge_weight,anchor=south]{2} (j);
                \draw (h) edge [basic_edge] node[edge_weight,anchor=south]{1} (j);
                \draw (h) edge [basic_edge] node[edge_weight,anchor=south]{3} (k);
                \draw (i) edge [basic_edge] node[edge_weight,anchor=west]{3} (j);
                \draw (i) edge [basic_edge] node[edge_weight,anchor=south]{3} (l);
                \draw (j) edge [basic_edge] node[edge_weight,anchor=west]{2} (k);
                \draw (j) edge [basic_edge] node[edge_weight,anchor=south]{3} (l);
                \draw (j) edge [basic_edge] node[edge_weight,anchor=south]{3} (m);
                \draw (j) edge [basic_edge] node[edge_weight,anchor=south]{4} (n);
                \draw (k) edge [basic_edge] node[edge_weight,anchor=south]{2} (m);
                \draw (l) edge [basic_edge] node[edge_weight,anchor=south]{3} (n);
                \draw (m) edge [basic_edge] node[edge_weight,anchor=north]{3} (n);
        \end{tikzpicture}
	}
		 \label{fig:method_part1}
	    \end{subfigure}
	    \begin{subfigure}[]{0.19\textwidth}
              \centering
             \resizebox{\textwidth}{!}{
        \begin{tikzpicture} [
            basic/.style={
                circle,
                minimum size=5mm,
                draw=black,
                font=\tiny,
                thick
            },
            source_target/.style={
                minimum size=7mm,
                very thick,
                font=\small,
            },
            basic_edge/.style={
                ->,
                draw=black,
                very thick,
            },
            reciprocal_edge/.style={
                ->,
                bend left=10,
                draw=black,
                very thick,
            },
            back_pointer/.style={
                densely dashed
            },
            edge_weight/.style={
                font=\footnotesize
            }]
                \node (a) [basic,source_target] at (0,2.5) {s};
                \node (b) [basic,fill=black!8] at (1,0.75) {12};
                \node (c) [basic,fill=black!16] at (1.5,4) {13};
                \node (d) [basic] at (2,2.25) {11};
                \node (e) [basic,fill=black!8] at (2.75,0.25) {12};
                \node (f) [basic,fill=black!16] at (3.25,4.5) {13};
                \node (g) [basic] at (3.75,3) {11};
                \node (h) [basic] at (4.25,1.5) {11};
                \node (i) [basic,fill=black!16] at (5,4) {13};
                \node (j) [basic] at (5.75,2.25) {11};
                \node (k) [basic,fill=black!8] at (6,0.5) {12};
                \node (l) [basic,fill=black!16] at (7.25,3.5) {13};
                \node (m) [basic,fill=black!8] at (7.5,1.5) {12};
                \node (n) [basic,source_target] at (8.75,2.5) {t};
                
                 \draw (9.75,2.5) edge [->,line width=5pt] (10.75,2.5);
                
                \draw (b) edge [basic_edge,back_pointer] node[edge_weight,anchor=west]{3} (a);
                \draw (c) edge [basic_edge,back_pointer] node[edge_weight,anchor=south]{3} (a);
                \draw (d) edge [reciprocal_edge,back_pointer] node[edge_weight,anchor=north]{2} (a);
                \draw (f) edge [reciprocal_edge,back_pointer] node[edge_weight,anchor=north]{2} (c);
                \draw (h) edge [basic_edge,back_pointer] node[edge_weight,anchor=north]{4} (d);
                \draw (g) edge [reciprocal_edge,back_pointer] node[edge_weight,anchor=north]{3} (d);
                \draw (e) edge [basic_edge,back_pointer] node[edge_weight,anchor=west]{3} (d);
                \draw (i) edge [reciprocal_edge,back_pointer] node[edge_weight,anchor=north]{2} (f);
                \draw (j) edge [reciprocal_edge,back_pointer] node[edge_weight,anchor=north]{2} (g);
                \draw (k) edge [reciprocal_edge,back_pointer] node[edge_weight,anchor=north]{2} (e);
                \draw (l) edge [reciprocal_edge,back_pointer] node[edge_weight,anchor=north]{3} (i);
                \draw (m) edge [reciprocal_edge,back_pointer] node[edge_weight,anchor=north]{2} (k);
                \draw (n) edge [reciprocal_edge,back_pointer] node[edge_weight,anchor=north]{4} (j);
                
                \draw (a) edge [reciprocal_edge] node[edge_weight,anchor=south]{2} (d);
                \draw (b) edge [basic_edge] node[edge_weight,anchor=south]{3} (e);
                \draw (c) edge [reciprocal_edge] node[edge_weight,anchor=south]{2} (f);
                \draw (d) edge [reciprocal_edge] node[edge_weight,anchor=south]{3} (g);
                \draw (e) edge [reciprocal_edge] node[edge_weight,anchor=south]{2} (k);
                \draw (f) edge [reciprocal_edge] node[edge_weight,anchor=south]{2} (i);                
                \draw (g) edge [reciprocal_edge] node[edge_weight,anchor=south]{2} (j);
                \draw (h) edge [basic_edge] node[edge_weight,anchor=north]{1} (j);                
                \draw (i) edge [reciprocal_edge] node[edge_weight,anchor=south]{3} (l);
                \draw (j) edge [reciprocal_edge] node[edge_weight,anchor=south]{4} (n);
                \draw (k) edge [reciprocal_edge] node[edge_weight,anchor=south]{2} (m);
                \draw (l) edge [basic_edge] node[edge_weight,anchor=south]{3} (n);
                \draw (m) edge [basic_edge] node[edge_weight,anchor=north]{3} (n);
        \end{tikzpicture}
        }
        \label{fig:method_part2}
	      
	    \end{subfigure}
	    \begin{subfigure}[]{0.19\textwidth}
              \centering
             \resizebox{\textwidth}{!}{
        \begin{tikzpicture} [
            basic/.style={
                circle,
                minimum size=5mm,
                draw=black,
                font=\tiny,
                thick
            },
            source_target/.style={
                minimum size=7mm,
                very thick,
                font=\small,
            },
            basic_edge/.style={
                ->,
                draw=black,
                very thick,
            },
            reciprocal_edge/.style={
                ->,
                bend left=10,
                draw=black,
                very thick,
            },
            back_pointer/.style={
                densely dashed
            },
            edge_weight/.style={
                font=\footnotesize
            }]
                \node (a) [basic,source_target] at (0,2.5) {s};
                \node (b) [basic,fill=black!8] at (1,0.75) {12};
                \node (c) [basic,fill=black!16] at (1.5,4) {13};
                \node (d) [basic] at (2,2.25) {11};
                \node (e) [basic,fill=black!8] at (2.75,0.25) {12};
                \node (f) [basic,fill=black!16] at (3.25,4.5) {13};
                \node (g) [basic] at (3.75,3) {11};
                \node (h) [basic] at (4.25,1.5) {11};
                \node (i) [basic,fill=black!16] at (5,4) {13};
                \node (j) [basic] at (5.75,2.25) {11};
                \node (k) [basic,fill=black!8] at (6,0.5) {12};
                \node (l) [basic,fill=black!16] at (7.25,3.5) {13};
                \node (m) [basic,fill=black!8] at (7.5,1.5) {12};
                \node (n) [basic,source_target] at (8.75,2.5) {t};
                
                 \draw (9.75,2.5) edge [->,line width=5pt] (10.75,2.5);
        
                \draw (d) edge [reciprocal_edge,back_pointer] node[edge_weight,anchor=north]{} (a);
                \draw (f) edge [reciprocal_edge,back_pointer] node[edge_weight,anchor=north]{} (c);
                \draw (g) edge [reciprocal_edge,back_pointer] node[edge_weight,anchor=north]{} (d);
                \draw (i) edge [reciprocal_edge,back_pointer] node[edge_weight,anchor=north]{} (f);
                \draw (j) edge [reciprocal_edge,back_pointer] node[edge_weight,anchor=north]{} (g);
                \draw (k) edge [reciprocal_edge,back_pointer] node[edge_weight,anchor=north]{} (e);
                \draw (l) edge [reciprocal_edge,back_pointer] node[edge_weight,anchor=north]{} (i);
                \draw (m) edge [reciprocal_edge,back_pointer] node[edge_weight,anchor=north]{} (k);
                \draw (n) edge [reciprocal_edge,back_pointer] node[edge_weight,anchor=north]{} (j);
                
                \draw (a) edge [reciprocal_edge] node[edge_weight,anchor=south]{} (d);
                \draw (c) edge [reciprocal_edge] node[edge_weight,anchor=south]{} (f);
                \draw (d) edge [reciprocal_edge] node[edge_weight,anchor=south]{} (g);
                \draw (e) edge [reciprocal_edge] node[edge_weight,anchor=south]{} (k);
                \draw (f) edge [reciprocal_edge] node[edge_weight,anchor=south]{} (i);                
                \draw (g) edge [reciprocal_edge] node[edge_weight,anchor=south]{} (j);           
                \draw (i) edge [reciprocal_edge] node[edge_weight,anchor=south]{} (l);
                \draw (j) edge [reciprocal_edge] node[edge_weight,anchor=south]{} (n);
                \draw (k) edge [reciprocal_edge] node[edge_weight,anchor=south]{} (m);
                
                \draw[thick] plot [smooth cycle] coordinates {(2,0.25) (2.75,-0.25) (6.1,-0.15) (7.7,0.8) (8,1.85) (7.5,2) (6,1) (2.75,0.75)};
                \draw[thick] plot [smooth cycle] coordinates {(0.5,0.75) (0.6,0.35) (1,0.25) (1.4,0.35) (1.5,0.75) (1.4,1.15) (1,1.25) (0.6,1.15)};
                \draw[thick] plot [smooth cycle] coordinates {(0.8,3.75) (1.5,3.5) (3.25,4) (5,3.5) (7.25,3) (7.75,3.4) (7.25,4) (5,4.5) (3.25,5) (1.3,4.5)};
                \draw[thick] plot [smooth cycle] coordinates {(-0.5,2.5) (0,2) (2,1.75) (3.75,2.5) (5.75,1.75) (8.75,2) (9.25,2.5) (8.75,3) (5.75,2.75) (3.75,3.5) (2,2.75) (0,3)};
                \draw[thick] plot [smooth cycle] coordinates {(3.75,1.5) (3.85,1.1) (4.25,1) (4.65,1.1) (4.75,1.5) (4.65,1.9) (4.25,2) (3.85,1.9)}; 
        \end{tikzpicture}
        }
        \label{fig:method_part3}
	      
	    \end{subfigure}
	    \begin{subfigure}[]{0.19\textwidth}
              \centering
             \resizebox{\textwidth}{!}{
        \begin{tikzpicture} [
            basic/.style={
                circle,
                minimum size=5mm,
                draw=black,
                font=\tiny,
                thick
            },
            grayed/.style={
                circle,
                minimum size=5mm,
                draw=black!50,
                font=\tiny,
                thick
            },
            source_target/.style={
                minimum size=7mm,
                very thick,
                font=\small,
            },
            basic_edge/.style={
                ->,
                draw=black,
                very thick,
            },
            reciprocal_edge/.style={
                ->,
                bend left=10,
                draw=black,
                very thick,
            },
            back_pointer/.style={
                densely dashed
            },
            edge_weight/.style={
                font=\footnotesize
            }]
                \node (a) [basic,source_target] at (0,2.5) {s};
                \node (c) [basic,fill=black!16] at (1.5,4) {13};
                \node (d) [basic] at (2,2.25) {11};
                \node (e) [basic,fill=black!8] at (2.75,0.25) {12};
                \node (f) [basic,fill=black!16] at (3.25,4.5) {13};
                \node (g) [basic] at (3.75,3) {11};
                \node (i) [basic,fill=black!16] at (5,4) {13};
                \node (j) [basic] at (5.75,2.25) {11};
                \node (k) [basic,fill=black!8] at (6,0.5) {12};
                \node (l) [basic,fill=black!16] at (7.25,3.5) {13};
                \node (m) [basic,fill=black!8] at (7.5,1.5) {12};
                \node (n) [basic,source_target] at (8.75,2.5) {t};
                
                 \draw (9.75,2.5) edge [->,line width=5pt] (10.75,2.5);
                
                \draw (d) edge [reciprocal_edge,back_pointer] node[edge_weight,anchor=north]{} (a);
                \draw (f) edge [reciprocal_edge,back_pointer] node[edge_weight,anchor=north]{} (c);
                \draw (g) edge [reciprocal_edge,back_pointer] node[edge_weight,anchor=north]{} (d);
                \draw (i) edge [reciprocal_edge,back_pointer] node[edge_weight,anchor=north]{} (f);
                \draw (j) edge [reciprocal_edge,back_pointer] node[edge_weight,anchor=north]{} (g);
                \draw (k) edge [reciprocal_edge,back_pointer] node[edge_weight,anchor=north]{} (e);
                \draw (l) edge [reciprocal_edge,back_pointer] node[edge_weight,anchor=north]{} (i);
                \draw (m) edge [reciprocal_edge,back_pointer] node[edge_weight,anchor=north]{} (k);
                \draw (n) edge [reciprocal_edge,back_pointer] node[edge_weight,anchor=north]{} (j);
                
                \draw (a) edge [reciprocal_edge] node[edge_weight,anchor=south]{} (d);
                \draw (c) edge [reciprocal_edge] node[edge_weight,anchor=south]{} (f);
                \draw (d) edge [reciprocal_edge] node[edge_weight,anchor=south]{} (g);
                \draw (e) edge [reciprocal_edge] node[edge_weight,anchor=south]{} (k);
                \draw (f) edge [reciprocal_edge] node[edge_weight,anchor=south]{} (i);                
                \draw (g) edge [reciprocal_edge] node[edge_weight,anchor=south]{} (j);           
                \draw (i) edge [reciprocal_edge] node[edge_weight,anchor=south]{} (l);
                \draw (j) edge [reciprocal_edge] node[edge_weight,anchor=south]{} (n);
                \draw (k) edge [reciprocal_edge] node[edge_weight,anchor=south]{} (m);
                
                \draw[thick] plot [smooth cycle] coordinates {(2,0.25) (2.75,-0.25) (6.1,-0.15) (7.7,0.8) (8,1.85) (7.5,2) (6,1) (2.75,0.75)};
                \draw[thick] plot [smooth cycle] coordinates {(0.8,3.75) (1.5,3.5) (3.25,4) (5,3.5) (7.25,3) (7.75,3.4) (7.25,4) (5,4.5) (3.25,5) (1.3,4.5)};
                \draw[thick] plot [smooth cycle] coordinates {(-0.5,2.5) (0,2) (2,1.75) (3.75,2.5) (5.75,1.75) (8.75,2) (9.25,2.5) (8.75,3) (5.75,2.75) (3.75,3.5) (2,2.75) (0,3)};
        \end{tikzpicture}
        }
        \label{fig:method_part4}
	      
	    \end{subfigure}
	    \begin{subfigure}[]{0.17\textwidth}
              \centering
             \resizebox{\textwidth}{!}{
        \begin{tikzpicture} [
            basic/.style={
                circle,
                minimum size=5mm,
                draw=black,
                font=\tiny,
                thick
            },
            source_target/.style={
                minimum size=7mm,
                very thick,
                font=\small,
            },
             gray_edge/.style={
                ->,
                draw=black!16,
                thin,
                dashed,
            },
            via_path1/.style={
                ->,
                draw=black,
                ultra thick,
            },
            via_path2/.style={
                ->,
                draw=black,
                ultra thick,
                densely dashed,
            },
            via_path3/.style={
                ->,
                draw=black,
                ultra thick,
                densely dotted,
            },
            edge_weight/.style={
                font=\footnotesize
            }]
                \node (a) [basic,source_target] at (0,2.5) {s};
                \node (b) [basic] at (1,0.75) {};
                \node (c) [basic] at (1.5,4) {};
                \node (d) [basic] at (2,2.25) {};
                \node (e) [basic] at (2.75,0.25) {};
                \node (f) [basic] at (3.25,4.5) {};
                \node (g) [basic] at (3.75,3) {};
                \node (h) [basic] at (4.25,1.5) {};
                \node (i) [basic] at (5,4) {};
                \node (j) [basic] at (5.75,2.25) {};
                \node (k) [basic] at (6,0.5) {};
                \node (l) [basic] at (7.25,3.5) {};
                \node (m) [basic] at (7.5,1.5) {};
                \node (n) [basic,source_target] at (8.75,2.5) {t};
                
                \draw (a) edge [via_path1] node[edge_weight,anchor=south]{} (d);
                \draw (d) edge [via_path1] node[edge_weight,anchor=south]{} (g);
                \draw (g) edge [via_path1] node[edge_weight,anchor=south]{} (j);
                \draw (j) edge [via_path1] node[edge_weight,anchor=south]{} (n);
 
 		\draw (a) edge [via_path2,bend right=15] node[edge_weight,anchor=south]{} (d);
                \draw (d) edge [via_path2] node[edge_weight,anchor=west]{} (e);               
                \draw (e) edge [via_path2] node[edge_weight,anchor=south]{} (k);
	        \draw (k) edge [via_path2] node[edge_weight,anchor=south]{} (m);
                \draw (m) edge [via_path2] node[edge_weight,anchor=north]{} (n);
                
                \draw (a) edge [via_path3] node[edge_weight,anchor=south]{} (c);
                \draw (c) edge [via_path3] node[edge_weight,anchor=south]{} (f);
                \draw (f) edge [via_path3] node[edge_weight,anchor=south]{} (i);
                \draw (i) edge [via_path3] node[edge_weight,anchor=south]{} (l); 
                \draw (l) edge [via_path3] node[edge_weight,anchor=south]{} (n);
                
                \draw (a) edge [gray_edge] node[edge_weight,anchor=west]{} (b);
                \draw (b) edge [gray_edge] node[edge_weight,anchor=south]{} (d);
                \draw (b) edge [gray_edge] node[edge_weight,anchor=south]{} (e);
                \draw (c) edge [gray_edge] node[edge_weight,anchor=east]{} (d);
                \draw (c) edge [gray_edge] node[edge_weight,anchor=south]{} (g);
                \draw (d) edge [gray_edge] node[edge_weight,anchor=south]{} (h);
                \draw (e) edge [gray_edge] node[edge_weight,anchor=south]{} (h);
                \draw (f) edge [gray_edge] node[edge_weight,anchor=west]{} (g);
                \draw (g) edge [gray_edge] node[edge_weight,anchor=south]{} (i);
                \draw (g) edge [gray_edge] node[edge_weight,anchor=west]{} (h);
                \draw (h) edge [gray_edge] node[edge_weight,anchor=south]{} (j);
                \draw (h) edge [gray_edge] node[edge_weight,anchor=south]{} (k);
                \draw (i) edge [gray_edge] node[edge_weight,anchor=west]{} (j);
                \draw (j) edge [gray_edge] node[edge_weight,anchor=west]{} (k);
                \draw (j) edge [gray_edge] node[edge_weight,anchor=south]{} (l);
                \draw (j) edge [gray_edge] node[edge_weight,anchor=south]{} (m);

        \end{tikzpicture}
        }
        \label{fig:method_part5}
	      
	\end{subfigure}
        \caption{An illustration of the four stages of the procedure for computing the set of $k$-best cascading via-paths.}
        \label{fig:cvp_computation}	 
\end{figure}
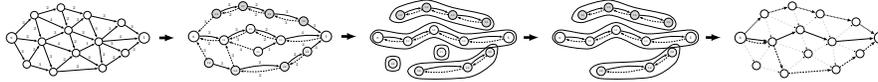

To fully describe the procedure, we now describe each of the stages in more detail.

\subsection{Implicit via-path computation}\label{sec:stage1} 

Let $G = (V,E)$ be a weighted, directed graph with vertex set $V$, edge set $E$, and non-negative edge weights.
Let $(u,v) \in E$ represent an edge from vertex $u$ to vertex $v$, and let $c(u,v)$ be the weight, or cost, associated with edge $(u,v)$, where we impose the restriction that $c(u,v) \ge 0$.
The first stage of the procedure for computing the $k$-best cascading via-paths is to compute two shortest path trees (SPT) from $G$---a \emph{predecessor} SPT, $G_\beta$, that encodes the shortest path from a source node $s$ to every node $v \in V$, and a \emph{successor} SPT, $G_\phi$, that encodes the shortest path from every node $v \in V$ to a target node $t$.

\subsubsection{Shortest path tree}\label{sec:shortestpathtree} 

\begin{figure}
    \centering
    \resizebox{0.4\textwidth}{!}{
        \begin{tikzpicture} [
            basic/.style={
                circle,
                minimum size=5mm,
                draw=black,
                font=\tiny,
                thick
            },
            source_target/.style={
                minimum size=7mm,
                very thick,
                font=\small,
            },
            basic_edge/.style={
                ->,
                draw=black,
                very thick,
            },
            edge_weight/.style={
                font=\footnotesize
            }]
                \node (a) [basic,source_target] at (0,2.5) {s};
                \node (b) [basic] at (1,0.75) {};
                \node (c) [basic] at (1.5,4) {};
                \node (d) [basic] at (2,2.25) {};
                \node (e) [basic] at (2.75,0.25) {};
                \node (f) [basic] at (3.25,4.5) {};
                \node (g) [basic] at (3.75,3) {};
                \node (h) [basic] at (4.25,1.5) {};
                \node (i) [basic] at (5,4) {};
                \node (j) [basic] at (5.75,2.25) {};
                \node (k) [basic] at (6,0.5) {};
                \node (l) [basic] at (7.25,3.5) {};
                \node (m) [basic] at (7.5,1.5) {};
                \node (n) [basic,source_target] at (8.75,2.5) {t};
                
                \draw (a) edge [basic_edge] node[edge_weight,anchor=west]{3} (b);
                \draw (a) edge [basic_edge] node[edge_weight,anchor=south]{3} (c);
                \draw (a) edge [basic_edge] node[edge_weight,anchor=south]{2} (d);
                \draw (b) edge [basic_edge] node[edge_weight,anchor=south]{2} (d);
                \draw (b) edge [basic_edge] node[edge_weight,anchor=south]{3} (e);
                \draw (c) edge [basic_edge] node[edge_weight,anchor=east]{2} (d);
                \draw (c) edge [basic_edge] node[edge_weight,anchor=south]{2} (f);
                \draw (c) edge [basic_edge] node[edge_weight,anchor=south]{4} (g);
                \draw (d) edge [basic_edge] node[edge_weight,anchor=south]{4} (h);
                \draw (d) edge [basic_edge] node[edge_weight,anchor=south]{3} (g);
                \draw (d) edge [basic_edge] node[edge_weight,anchor=west]{3} (e);
                \draw (e) edge [basic_edge] node[edge_weight,anchor=south]{2} (h);
                \draw (e) edge [basic_edge] node[edge_weight,anchor=south]{2} (k);
                \draw (f) edge [basic_edge] node[edge_weight,anchor=west]{2} (g);
                \draw (f) edge [basic_edge] node[edge_weight,anchor=south]{2} (i);
                \draw (g) edge [basic_edge] node[edge_weight,anchor=south]{3} (i);
                \draw (g) edge [basic_edge] node[edge_weight,anchor=west]{2} (h);
                \draw (g) edge [basic_edge] node[edge_weight,anchor=south]{2} (j);
                \draw (h) edge [basic_edge] node[edge_weight,anchor=south]{1} (j);
                \draw (h) edge [basic_edge] node[edge_weight,anchor=south]{3} (k);
                \draw (i) edge [basic_edge] node[edge_weight,anchor=west]{3} (j);
                \draw (i) edge [basic_edge] node[edge_weight,anchor=south]{3} (l);
                \draw (j) edge [basic_edge] node[edge_weight,anchor=west]{2} (k);
                \draw (j) edge [basic_edge] node[edge_weight,anchor=south]{3} (l);
                \draw (j) edge [basic_edge] node[edge_weight,anchor=south]{3} (m);
                \draw (j) edge [basic_edge] node[edge_weight,anchor=south]{4} (n);
                \draw (k) edge [basic_edge] node[edge_weight,anchor=south]{2} (m);
                \draw (l) edge [basic_edge] node[edge_weight,anchor=south]{3} (n);
                \draw (m) edge [basic_edge] node[edge_weight,anchor=north]{3} (n);
        \end{tikzpicture}
        } 
        \caption{An example weighted, directed graph, $G = (V,E)$, with non-negative edge weights, source node, $s$, and target node, $t$.}
        \label{fig:directed_weighted_graph}
\end{figure}
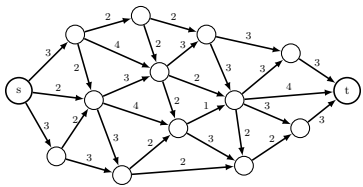

Figure \ref{fig:directed_weighted_graph} illustrates an example weighted, directed graph with non-negative edge weights, a source node labeled $s$, and target node labeled $t$.
Let us denote a shortest path from $s$ to $t$ as $p_{s,t}$, with total length or cost $c(p_{s,t})$, and let us introduce the notation $\tilde{p}_{s,t}$ to refer to a specific choice of shortest path, in case there is a tie and the shortest path $p_{s,t}$ is not unique.
It is widely known that $p_{s,t}$ has a property known as \emph{optimal substructure} \cite{cormen2001introduction}.
\begin{thm}[Optimal substructure]\label{thm:opt_sub}
If node $m \in p_{s,v}$, then $p_{s,v} = p_{s,m} \| p_{m,v}$, where here $\|$ denotes the concatenation of the paths' sequences of edges.
\end{thm}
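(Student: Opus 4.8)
The plan is to argue by the standard ``cut-and-paste'' exchange argument. Fix a shortest $s$--$v$ path $p_{s,v}$ and suppose $m$ lies on it. Write $p_{s,v} = q_1 \| q_2$, where $q_1$ is the portion of $p_{s,v}$ running from $s$ up to (a chosen occurrence of) $m$ and $q_2$ is the portion running from $m$ to $v$; since edge costs are additive along a path, $c(p_{s,v}) = c(q_1) + c(q_2)$. It then suffices to show that $q_1$ is itself a shortest $s$--$m$ path and $q_2$ a shortest $m$--$v$ path, so that we may legitimately take $p_{s,m} = q_1$ and $p_{m,v} = q_2$. (Recall that, under the $\tilde{p}$-versus-$p$ convention for ties, the claimed identity asserts only that \emph{some} valid choice of the constituent shortest paths reproduces $p_{s,v}$, equivalently that the decomposition of $p_{s,v}$ at $m$ yields objects that are themselves shortest paths.)

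First I would establish optimality of the prefix. If $q_1$ were not a shortest $s$--$m$ path, there would be an $s$--$m$ path $r$ with $c(r) < c(q_1)$; then $r \| q_2$ is an $s$--$v$ path with $c(r \| q_2) = c(r) + c(q_2) < c(q_1) + c(q_2) = c(p_{s,v})$, contradicting the assumption that $p_{s,v}$ is a shortest $s$--$v$ path. Hence $c(q_1) = c(p_{s,m})$ and $q_1$ is a shortest $s$--$m$ path; the identical argument applied to the suffix shows $q_2$ is a shortest $m$--$v$ path. Combining, $p_{s,v} = q_1 \| q_2$ is a concatenation of a shortest $s$--$m$ path with a shortest $m$--$v$ path, which is the assertion. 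Conversely, the same additive cost computation shows that concatenating any shortest $s$--$m$ path with any shortest $m$--$v$ path yields a shortest $s$--$v$ path through $m$, which is what makes the shorthand in the statement unambiguous.

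The argument is entirely elementary, and I do not expect a genuine obstacle; the only points needing care are bookkeeping ones. Because $p_{s,v}$, $p_{s,m}$, and $p_{m,v}$ need not be unique, one must be precise that the theorem does not claim every pairing of $p_{s,m}$ and $p_{m,v}$ concatenates to a given $p_{s,v}$, only that the split of $p_{s,v}$ at $m$ produces shortest paths. A second mild subtlety is the possibility that $m$ appears more than once on $p_{s,v}$, or that $p_{s,v}$ traverses a zero-cost cycle (allowed in principle since weights are only non-negative); I would dispose of this by fixing the decomposition at a single chosen occurrence of $m$, or simply by noting that shortest paths as read off a shortest path tree $G_\beta$ or $G_\phi$ are loopless, so each node occurs at most once and the decomposition is canonical.
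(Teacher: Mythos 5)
Your cut-and-paste argument is correct and complete; the paper itself supplies no proof of Theorem~\ref{thm:opt_sub}, simply citing it as the classical optimal-substructure property from Cormen et al., and your exchange argument is exactly the standard proof that citation invokes. Your care about ties and repeated occurrences of $m$ is a reasonable bonus but does not change the substance.
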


Let $b_s(v)$ denote a particular intermediate node---the \emph{predecessor} of node $v$ on the path $\tilde{p}_{s,v}$. The predecessor is the node $b_s(v)$ such that,
\begin{equation}\label{eq:recursive_predecessor_path}
\tilde{p}_{s,v} = \tilde{p}_{s,b_s(v)} \| (b_s(v),v).
\end{equation}
\noindent From the fact that $p_{s,v}$ is a shortest path, it follows that,
\begin{equation}\label{eq:forward_cost}
c(p_{s,v}) = \min_{u \in N^-(v)} [c(p_{s,u}) + c(u,v)],
\end{equation}
\begin{equation}\label{eq:predecessor}
b_s(v) = \operatorname*{arg\,min}_{u \in N^-(v)} [c(p_{s,u}) + c(u,v)],
\end{equation}
\noindent where $N^-(v)$ is the in-neighborhood of $v$.

The set of all predecessors $\{b_s(v)|v \in V\}$ can be represented graphically as a \emph{shortest path tree} (SPT). We formally define a \emph{predecessor} SPT as a directed graph $G_\beta = (V,E_\beta)$, where,

\begin{equation}\label{eq:predecessor_graph}
E_\beta = \{(v,b_s(v)) | v \in V\}.
\end{equation}

\begin{figure}
    \centering
        \resizebox{0.4\textwidth}{!}{
        \begin{tikzpicture} [
            basic/.style={
                circle,
                minimum size=5mm,
                draw=black,
                font=\tiny,
                thick
            },
            source_target/.style={
                minimum size=7mm,
                very thick,
                font=\small,
            },
            basic_edge/.style={
                ->,
                draw=black,
                very thick,
            },
            edge_weight/.style={
                font=\footnotesize
            }]
                \node (a) [basic,source_target] at (0,2.5) {s};
                \node (b) [basic] at (1,0.75) {3};
                \node (c) [basic] at (1.5,4) {3};
                \node (d) [basic] at (2,2.25) {2};
                \node (e) [basic] at (2.75,0.25) {5};
                \node (f) [basic] at (3.25,4.5) {5};
                \node (g) [basic] at (3.75,3) {5};
                \node (h) [basic] at (4.25,1.5) {6};
                \node (i) [basic] at (5,4) {7};
                \node (j) [basic] at (5.75,2.25) {7};
                \node (k) [basic] at (6,0.5) {7};
                \node (l) [basic] at (7.25,3.5) {10};
                \node (m) [basic] at (7.5,1.5) {9};
                \node (n) [basic] at (8.75,2.5) {11};
                \draw (b) edge [basic_edge] node[edge_weight,anchor=west]{3} (a);
                \draw (c) edge [basic_edge] node[edge_weight,anchor=south]{3} (a);
                \draw (d) edge [basic_edge] node[edge_weight,anchor=south]{2} (a);
                \draw (f) edge [basic_edge] node[edge_weight,anchor=south]{2} (c);
                \draw (h) edge [basic_edge] node[edge_weight,anchor=south]{4} (d);
                \draw (g) edge [basic_edge] node[edge_weight,anchor=south]{3} (d);
                \draw (e) edge [basic_edge] node[edge_weight,anchor=west]{3} (d);
                \draw (i) edge [basic_edge] node[edge_weight,anchor=south]{2} (f);
                \draw (j) edge [basic_edge] node[edge_weight,anchor=south]{2} (g);
                \draw (k) edge [basic_edge] node[edge_weight,anchor=south]{2} (e);
                \draw (l) edge [basic_edge] node[edge_weight,anchor=south]{3} (i);
                \draw (m) edge [basic_edge] node[edge_weight,anchor=south]{2} (k);
                \draw (n) edge [basic_edge] node[edge_weight,anchor=south]{4} (j);
        \end{tikzpicture}
        } 
        \caption{A predecessor shortest path tree, $G_\beta = (V,E_\beta)$, as described in Section \ref{sec:shortestpathtree}. The source node $s$ is labeled, as well as the edge weights and the cost $c(\tilde{p}_{s,v})$ of the shortest path $\tilde{p}_{s,v}$ for each node $v \in V$.}
        \label{fig:shortest_path_tree}
\end{figure}
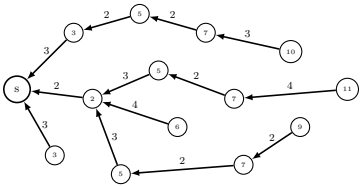

\noindent Each edge $(v,b_s(v)) \in E_\beta$ is known as a \emph{back pointer}.
Within the context of an SPT, each back pointer is a pointer to a node's parent.
In the case of a tie between multiple paths for $p_{s,v}$, an SPT defines a particular choice of best path, and throughout this paper we denote the particular choice defined by an SPT as $\tilde{p}_{s,v}$.
Dijkstra's algorithm \cite{dijkstra1959note} implemented with a Fibonacci heap \cite{fredman1987fibonacci} has the best-known worst-case complexity, of $O(|E| + |V| \log |V|)$, on arbitrary graphs for computing a SPT.

\subsubsection{Via-path}\label{sec:viapath}

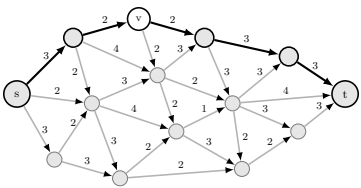
\begin{figure}
    \centering
        \resizebox{0.4\textwidth}{!}{
        \begin{tikzpicture} [
            active/.style={
                circle,
                minimum size=5mm,
                draw=black,
                fill=black!10,
                font=\tiny,
                very thick
            },
            inactive/.style={
                circle,
                minimum size=4mm,
                draw=black!50,
                fill=black!10,
                font=\tiny,
                thick
            },
            source_target/.style={
                minimum size=7mm,
                very thick,
                font=\small,
            },
            via/.style={
                minimum size=6mm,
                very thick,
                fill=white,
                font=\footnotesize
            },
            active_edge/.style={
                ->,
                draw=black,
                ultra thick,
            },
            inactive_edge/.style={
                ->,
                draw=black!30,
                very thick,
            },
            edge_weight/.style={
                font=\footnotesize
            }]
                \node (a) [active,source_target] at (0,2.5) {s};
                \node (b) [inactive] at (1,0.75) {};
                \node (c) [active] at (1.5,4) {};
                \node (d) [inactive] at (2,2.25) {};
                \node (e) [inactive] at (2.75,0.25) {};
                \node (f) [active,via] at (3.25,4.5) {v};
                \node (g) [inactive] at (3.75,3) {};
                \node (h) [inactive] at (4.25,1.5) {};
                \node (i) [active] at (5,4) {};
                \node (j) [inactive] at (5.75,2.25) {};
                \node (k) [inactive] at (6,0.5) {};
                \node (l) [active] at (7.25,3.5) {};
                \node (m) [inactive] at (7.5,1.5) {};
                \node (n) [active,source_target] at (8.75,2.5) {t};
                
                \draw (a) edge [inactive_edge] node[edge_weight,anchor=west]{3} (b);
                \draw (a) edge [active_edge] node[edge_weight,anchor=south]{3} (c);
                \draw (a) edge [inactive_edge] node[edge_weight,anchor=south]{2} (d);
                \draw (b) edge [inactive_edge] node[edge_weight,anchor=south]{2} (d);
                \draw (b) edge [inactive_edge] node[edge_weight,anchor=south]{3} (e);
                \draw (c) edge [inactive_edge] node[edge_weight,anchor=east]{2} (d);
                \draw (c) edge [active_edge] node[edge_weight,anchor=south]{2} (f);
                \draw (c) edge [inactive_edge] node[edge_weight,anchor=south]{4} (g);
                \draw (d) edge [inactive_edge] node[edge_weight,anchor=south]{4} (h);
                \draw (d) edge [inactive_edge] node[edge_weight,anchor=south]{3} (g);
                \draw (d) edge [inactive_edge] node[edge_weight,anchor=west]{3} (e);
                \draw (e) edge [inactive_edge] node[edge_weight,anchor=south]{2} (h);
                \draw (e) edge [inactive_edge] node[edge_weight,anchor=south]{2} (k);
                \draw (f) edge [inactive_edge] node[edge_weight,anchor=west]{2} (g);
                \draw (f) edge [active_edge] node[edge_weight,anchor=south]{2} (i);
                \draw (g) edge [inactive_edge] node[edge_weight,anchor=south]{3} (i);
                \draw (g) edge [inactive_edge] node[edge_weight,anchor=west]{2} (h);
                \draw (g) edge [inactive_edge] node[edge_weight,anchor=south]{2} (j);
                \draw (h) edge [inactive_edge] node[edge_weight,anchor=south]{1} (j);
                \draw (h) edge [inactive_edge] node[edge_weight,anchor=south]{3} (k);
                \draw (i) edge [inactive_edge] node[edge_weight,anchor=west]{3} (j);
                \draw (i) edge [active_edge] node[edge_weight,anchor=south]{3} (l);
                \draw (j) edge [inactive_edge] node[edge_weight,anchor=west]{2} (k);
                \draw (j) edge [inactive_edge] node[edge_weight,anchor=south]{3} (l);
                \draw (j) edge [inactive_edge] node[edge_weight,anchor=south]{3} (m);
                \draw (j) edge [inactive_edge] node[edge_weight,anchor=south]{4} (n);
                \draw (k) edge [inactive_edge] node[edge_weight,anchor=south]{2} (m);
                \draw (l) edge [active_edge] node[edge_weight,anchor=south]{3} (n);
                \draw (m) edge [inactive_edge] node[edge_weight,anchor=south]{3} (n);
        \end{tikzpicture}
        } 
        \caption{An example via-path, $p_{s,v,t}$, which is the shortest path from $s$ to $t$ that passes through node $v$, and can be decomposed into $p_{s,v} \| p_{v,t}$.}
        \label{fig:via_path}
    
\end{figure}

\begin{dfn}[Via-path]
A \emph{via-path}, denoted $p_{s,v,t}$, is the shortest path in a graph, $G = (V,E)$, from a source node, $s$, to a target node, $t$, that passes through an intermediate node, $v$.
\end{dfn}

By Theorem \ref{thm:opt_sub}, $p_{s,v,t} = p_{s,v} \| p_{v,t}$, and similarly, $c(p_{s,v,t}) = c(p_{s,v}) + c(p_{v,t})$. 
A path of this type was first described as a \emph{gateway path} \cite{lombard1993gateway,scaparra2014corridor}, with $v$ known as the path's \emph{gateway node}.
Others \cite{fujita2003dual,jones2012method} refer to node $v$ as a \emph{via point}.
The quantity $c(p_{s,v,t})$ has sometimes been referred to as the conditional minimum transit cost (CMTC) \cite{pinto2009beyond}.
A recent number of works in alternative route suggestion for road networks \cite{abraham2013alternative,luxen2012candidate} refer to a path of this type as a \emph{via-path} and node $v$ as a \emph{via-node}. In order to help standardize the terminology, we choose to adopt and expand upon this recent terminology.

\subsubsection{Cascading via-paths}\label{sec:allviapaths}

To compute the via-path from $s$ to $t$ for \emph{every} node $v \in V$, two \emph{sets} of paths need to be computed---$\{p_{s,v}|v \in V\}$ and $\{p_{v,t}|v \in V\}$.
Much like the predecessor SPT rooted at $s$ compactly encodes the set of paths $\{\tilde{p}_{s,v}|v \in V\}$, the set of paths $\{\tilde{p}_{v,t}|v \in V\}$ are compactly encoded within a \emph{successor} SPT, rooted at target node $t$.
Let $f_t(v) \in p_{s,v}$ denote the \emph{successor} of node $v$ on the path $p_{v,t}$.
In other words, the successor of $v$ is the node $f_t(v)$ such that,

\begin{equation}\label{eq:recursive_successor_path}
p_{v,t} = (v,f_t(v)) \| p_{f_t(v),t}.
\end{equation}

\noindent From the fact that $p_{v,t}$ is a shortest path, it follows that,

\begin{equation}\label{eq:backward_cost}
c(p_{v,t}) = \min_{w \in N^+(v)} [c(v,w) + c(p_{w,t})],
\end{equation}

\begin{equation}\label{eq:successor}
f_t(v) = \operatorname*{arg\,min}_{w \in N^+(v)} [c(v,w) + c(p_{w,t})],
\end{equation}

\noindent where $N^+(v)$ is the out-neighborhood of $v$.

\begin{figure}
    \centering
        \resizebox{0.4\textwidth}{!}{
        \begin{tikzpicture} [
            basic/.style={
                circle,
                minimum size=5mm,
                draw=black,
                font=\tiny,
                thick
            },
            source_target/.style={
                minimum size=7mm,
                very thick,
                font=\small,
            },
            basic_edge/.style={
                ->,
                draw=black,
                very thick,
            },
            edge_weight/.style={
                font=\footnotesize
            }]
                \node (a) [basic] at (0,2.5) {11};
                \node (b) [basic] at (1,0.75) {10};
                \node (c) [basic] at (1.5,4) {10};
                \node (d) [basic] at (2,2.25) {9};
                \node (e) [basic] at (2.75,0.25) {7};
                \node (f) [basic] at (3.25,4.5) {8};
                \node (g) [basic] at (3.75,3) {6};
                \node (h) [basic] at (4.25,1.5) {5};
                \node (i) [basic] at (5,4) {6};
                \node (j) [basic] at (5.75,2.25) {4};
                \node (k) [basic] at (6,0.5) {5};
                \node (l) [basic] at (7.25,3.5) {3};
                \node (m) [basic] at (7.5,1.5) {3};
                \node (n) [basic,source_target] at (8.75,2.5) {t};

                \draw (a) edge [basic_edge] node[edge_weight,anchor=south]{2} (d);
                \draw (b) edge [basic_edge] node[edge_weight,anchor=south]{3} (e);
                \draw (c) edge [basic_edge] node[edge_weight,anchor=south]{2} (f);
                \draw (d) edge [basic_edge] node[edge_weight,anchor=south]{3} (g);
                \draw (e) edge [basic_edge] node[edge_weight,anchor=south]{2} (k);
                \draw (f) edge [basic_edge] node[edge_weight,anchor=south]{2} (i);
                \draw (g) edge [basic_edge] node[edge_weight,anchor=south]{2} (j);
                \draw (h) edge [basic_edge] node[edge_weight,anchor=south]{1} (j);
                \draw (i) edge [basic_edge] node[edge_weight,anchor=south]{3} (l);
                \draw (j) edge [basic_edge] node[edge_weight,anchor=south]{4} (n);
                \draw (k) edge [basic_edge] node[edge_weight,anchor=south]{2} (m);
                \draw (l) edge [basic_edge] node[edge_weight,anchor=south]{3} (n);
                \draw (m) edge [basic_edge] node[edge_weight,anchor=north]{3} (n);
        \end{tikzpicture}
        } 
        \caption{A successor shortest path tree, $G_\phi = (V,E_\phi)$, as described in Section \ref{sec:allviapaths}. The target node $t$ is labeled, as well as the edge weights and the cost $c(\tilde{p}_{v,t})$ of the shortest path $\tilde{p}_{v,t}$ for each node $v \in V$.}
        \label{fig:successor_shortest_path_tree}
\end{figure}
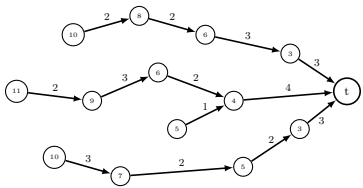

The successor shortest path tree represents the set of all successors $\{f_t(v)|v \in V\}$ graphically.
We formally define a successor shortest path tree as a directed graph $G_\phi = (V,E_\phi)$, where,

\begin{equation}
E_\phi = \{(v,f_t(v)) | v \in V\}.
\end{equation}

\noindent Each edge $(v,f_t(v)) \in E_\phi$ is known as a \emph{forward pointer} and just as in the predecessor shortest path tree, each forward pointer in the successor shortest path tree is considered a pointer to a node's parent.

We refer to the set of via-paths defined by $G_\beta$ and $G_\phi$ as the set of \emph{cascading via-paths} (CVPs).
This name is due to the chained or cascading nature of the optimal paths represented by the two trees.
As a cascading via-path is a particular choice of via-path, $p_{s,v,t}$, for a node $v$, which is specifically defined by the two shortest path trees and may be just one of several paths tied for best via-path, we denote it with special notation as $\tilde{p}_{s,v,t}$.
Even though tie-breaking may be done arbitrarily at each node when computing $G_\beta$ and $G_\phi$, the set of CVPs has more structure than a random choice of best via-path for each node, as the trees introduce dependency between the CVP of a node and those of its predecessor and successor.

\begin{figure}
    \centering
        \resizebox{0.4\textwidth}{!}{
        \begin{tikzpicture} [
            basic/.style={
                circle,
                minimum size=5mm,
                draw=black,
                font=\tiny,
                thick
            },
            source_target/.style={
                minimum size=7mm,
                very thick,
                font=\small,
            },
            basic_edge/.style={
                ->,
                draw=black,
                very thick,
            },
            reciprocal_edge/.style={
                ->,
                bend left=10,
                draw=black,
                very thick,
            },
            back_pointer/.style={
                densely dashed
            },
            edge_weight/.style={
                font=\footnotesize
            }]
                \node (a) [basic,source_target] at (0,2.5) {s};
                \node (b) [basic,fill=black!8] at (1,0.75) {12};
                \node (c) [basic,fill=black!16] at (1.5,4) {13};
                \node (d) [basic] at (2,2.25) {11};
                \node (e) [basic,fill=black!8] at (2.75,0.25) {12};
                \node (f) [basic,fill=black!16] at (3.25,4.5) {13};
                \node (g) [basic] at (3.75,3) {11};
                \node (h) [basic] at (4.25,1.5) {11};
                \node (i) [basic,fill=black!16] at (5,4) {13};
                \node (j) [basic] at (5.75,2.25) {11};
                \node (k) [basic,fill=black!8] at (6,0.5) {12};
                \node (l) [basic,fill=black!16] at (7.25,3.5) {13};
                \node (m) [basic,fill=black!8] at (7.5,1.5) {12};
                \node (n) [basic,source_target] at (8.75,2.5) {t};
                
                \draw (b) edge [basic_edge,back_pointer] node[edge_weight,anchor=west]{3} (a);
                \draw (c) edge [basic_edge,back_pointer] node[edge_weight,anchor=south]{3} (a);
                \draw (d) edge [reciprocal_edge,back_pointer] node[edge_weight,anchor=north]{2} (a);
                \draw (f) edge [reciprocal_edge,back_pointer] node[edge_weight,anchor=north]{2} (c);
                \draw (h) edge [basic_edge,back_pointer] node[edge_weight,anchor=north]{4} (d);
                \draw (g) edge [reciprocal_edge,back_pointer] node[edge_weight,anchor=north]{3} (d);
                \draw (e) edge [basic_edge,back_pointer] node[edge_weight,anchor=west]{3} (d);
                \draw (i) edge [reciprocal_edge,back_pointer] node[edge_weight,anchor=north]{2} (f);
                \draw (j) edge [reciprocal_edge,back_pointer] node[edge_weight,anchor=north]{2} (g);
                \draw (k) edge [reciprocal_edge,back_pointer] node[edge_weight,anchor=north]{2} (e);
                \draw (l) edge [reciprocal_edge,back_pointer] node[edge_weight,anchor=north]{3} (i);
                \draw (m) edge [reciprocal_edge,back_pointer] node[edge_weight,anchor=north]{2} (k);
                \draw (n) edge [reciprocal_edge,back_pointer] node[edge_weight,anchor=north]{4} (j);
                
                \draw (a) edge [reciprocal_edge] node[edge_weight,anchor=south]{2} (d);
                \draw (b) edge [basic_edge] node[edge_weight,anchor=south]{3} (e);
                \draw (c) edge [reciprocal_edge] node[edge_weight,anchor=south]{2} (f);
                \draw (d) edge [reciprocal_edge] node[edge_weight,anchor=south]{3} (g);
                \draw (e) edge [reciprocal_edge] node[edge_weight,anchor=south]{2} (k);
                \draw (f) edge [reciprocal_edge] node[edge_weight,anchor=south]{2} (i);                
                \draw (g) edge [reciprocal_edge] node[edge_weight,anchor=south]{2} (j);
                \draw (h) edge [basic_edge] node[edge_weight,anchor=north]{1} (j);                
                \draw (i) edge [reciprocal_edge] node[edge_weight,anchor=south]{3} (l);
                \draw (j) edge [reciprocal_edge] node[edge_weight,anchor=south]{4} (n);
                \draw (k) edge [reciprocal_edge] node[edge_weight,anchor=south]{2} (m);
                \draw (l) edge [basic_edge] node[edge_weight,anchor=south]{3} (n);
                \draw (m) edge [basic_edge] node[edge_weight,anchor=north]{3} (n);
        \end{tikzpicture}
        } 
        \caption{This figure depicts $G_\beta \cup G_\phi$ for a source node $s$ and target node $t$, with edge set $E_\beta$ depicted here using dashed lines and edge set $E_\phi$ depicted using solid lines. Each node $v \in V$ has two outgoing edges, one dashed and one solid. Following the chain of dashed edges from $v$ to $s$ and the chain of solid edges from $v$ to $t$ reveals the CVP, $\tilde{p}_{s,v,t}$ for each node $v$. Each node in the figure contains a label indicating $c(\tilde{p}_{s,v,t})$, the cost of its CVP.}
        \label{fig:all_via_paths}
\end{figure}
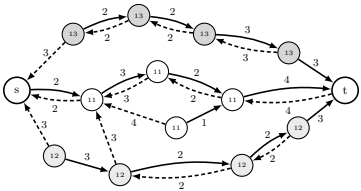

\subsection{Cascading via-path enumeration and graph partitioning}\label{sec:stage2}

The next stage (stage two) of the procedure that we describe is an efficient method for \emph{enumerating} the distinct CVPs given the two SPTs---in other words, enumerating $\Lambda$ given $G_\beta$ and $G_\phi$.
Any procedure for \emph{explicitly} representing the elements of $\Lambda$ must be at least $O(|V|^2)$, based on there being up to $|V|$ distinct CVPs, each of which is $O(|V|)$ in length.
However, the approach described here, called the \emph{reciprocal pointer chain method} \cite{smock2012reciprocal,smock2012efficient}, \emph{enumerates} $\Lambda$ without duplicates in $O(|V|)$ time---independent of the lengths of the paths.
This enumeration represents $\Lambda$ implicitly but in a way that lets us select the top $k$ elements without explicitly computing them, netting us a significant computational savings for computing the final set of $k$-best CVPs, $\hat{\Lambda}^{(k)}$, in stage four of the overall procedure, when $k \ll |\Lambda|$.

A related method for enumerating CVPs is the \emph{disjoint plateau method} \cite{camvit2006choice,jones2012method}, known simply as the \emph{plateau method} in prior work.
While in the original papers and subsequent road network literature this algorithm and the reciprocal pointer chains algorithm are considered equivalent, we show in this paper that differences between the two algorithms can arise when there are ties for best path $p_{s,v}$ or $p_{v,t}$ for a node $v \in V$.
In this case, the disjoint plateau method is only guaranteed to enumerate a subset of the full set, $\Lambda$.
We present the two algorithms in this section and sort out their differences from a theoretical perspective in more detail in Section \ref{sec:theory}.

\subsubsection{Reciprocal pointer chain method}\label{sec:rpc_method}

\begin{algorithm}[t]
	\caption{Partition $V$ into RPCs to enumerate the CVPs}
	\label{alg:partition}
	\centering
	\begin{algorithmic}[1]
	\footnotesize
		\Require $G_\beta = (V,E_\beta)$, the predecessor tree; $G_\phi = (V,E_\phi)$, the successor tree; $s$ is the source; $t$ is the target
		\Procedure{RPC}{$G_\beta$,$G_\phi$,$s$,$t$}
		\For {each node $v \in V$}
			\State $v$.visited $\gets$ FALSE \EndFor
		\While {there exists an unvisited node} \label{line:outer_loop}
			\State Choose an unvisited node $v$
			\State $v$.visited $\gets$ TRUE
			\State $p \gets$ [] \Comment{Start a new chain/partition}
			\State $p$.\Call{addToHead}{$v$} \Comment{$v$ is the first element of the chain/partition}
			\State $x \gets v$
			\While {$x$ $\ne$ $s$ \textbf{and} $G_\phi(G_\beta(x)$.parent).parent $= x$} \label{line:inner_loop1}
				\State $x \gets$ $G_\beta(x)$.parent
				\State $x$.visited $\gets$ TRUE
				\State $p$.\Call{addToHead}{$x$}
			\EndWhile
			\State $x \gets v$
				\While {$x$ $\ne$ $t$ \textbf{and} $G_\beta(G_\phi(x)$.parent).parent $= x$} \label{line:inner_loop2}
				\State $x \gets$ $G_\phi(x)$.parent
				\State $x$.visited $\gets$ TRUE
				\State $p$.\Call{addToTail}{$x$}
			\EndWhile
			\State $\Pi$.\Call{add}{$p$}
		\EndWhile
		\State \Return $\Pi$ \Comment {$\Pi$ represents the set of partitions}
		\EndProcedure
	\end{algorithmic}
\end{algorithm}

Pseudocode for the RPC method is given in Algorithm \ref{alg:partition}.
At a high level, the RPC method is a partitioning procedure that enumerates the number of distinct CVPs by instead grouping together nodes that have the same CVP, and enumerating the groups.
Each partition is referred to as a \emph{reciprocal pointer chain} (RPC), and we denote the set of RPCs as $\Pi$, where $|\Pi| = |\Lambda|$.
Grouping the nodes together can be done efficiently compared to explicitly enumerating $\Lambda$ because, as we prove in Section \ref{sec:theory}, Theorem \ref{thm:rp}, there is a simple, $O(1)$ test, called the reciprocal pointer chain test, to decide if for two nodes, $u$ and $v$, $\tilde{p}_{s,v,t} = \tilde{p}_{s,u,t}$.
Furthermore, by testing in a specific order---using a depth-first search---each node need only be tested a constant number of times to place it in a group.
The outer loop of the procedure, starting at Line \ref{line:outer_loop}, enumerates the partitions/chains one-at-a-time, creating and completing a single partition within each iteration.
It picks an unvisited node, $v$, at random and starts a new partition/chain at the beginning of the loop with $v$ as its first member.
Once the first node belonging to a partition is visited by the search, all of the remaining nodes belonging to the partition are immediately found by two new searches corresponding to the two inner loops of the procedure.

The first inner loop, starting at Line \ref{line:inner_loop1}, iteratively adds nodes to the head of the current chain.
It adds the predecessor of the head of the chain to the head of the chain if the current head is not the source node $s$ and if the predecessor of the current head has the current head as its successor.
This second condition is what we call a \emph{reciprocal pointer} between the two nodes.
Thus all pairs of neighboring nodes in the chain have reciprocal pointers between them.
The second inner loop, starting at Line \ref{line:inner_loop2}, is identical to the first inner loop but in the opposite direction from $v$.
It iteratively adds the successor of the tail of the chain to the tail if the current tail is not the target node $t$ and if the successor of the current tail has the current tail as its predecessor.

An equivalent algorithm, which may make the behavior of the RPC algorithm clearer, is given by the following.
Let us define an undirected graph, $G_\gamma = (V,E_\gamma)$, such that,
\begin{equation}\label{eq:rpc_graph}
E_\gamma = \{(v,w) | (w,v) \in E_\beta \land (v,w) \in E_\phi\}.
\end{equation}
\noindent The partitioning of nodes produced by the RPC method corresponds to the connected components of $G_\gamma$.
Described in this way, it should be clearer that there can be RPCs that contain a single node and no edges.

\subsubsection{Disjoint plateau method}\label{sec:disjoint_plateau_method}

The disjoint plateau method \cite{jones2012method} differs from the RPC method given in Algorithm \ref{alg:partition} at just two lines---lines \ref{line:inner_loop1} and \ref{line:inner_loop2}.
At line \ref{line:inner_loop1}, while the RPC method checks to see if there is a reciprocal pointer between node $x$ and its predecessor, which we shall call $w$, the disjoint plateau method checks to make sure two things are true: that $w$ has not been visited and that $c(\tilde{p}_{s,x,t}) = c(\tilde{p}_{s,w,t})$.
The explicit check to make sure $w$ has not been visited, which is not needed by the RPC method, ensures that the partitions created by the disjoint plateau method are indeed disjoint.
Similarly, at line \ref{line:inner_loop2}, the disjoint plateau method checks to make sure the predecessor of $x$---which we denote $y$---has not been visited and that $c(\tilde{p}_{s,x,t}) = c(\tilde{p}_{s,y,t})$.

The main problem with this method for the purpose of enumerating CVPs is that the two conditions that 1) $w$ is the predecessor of $x$, and 2) $c(\tilde{p}_{s,x,t}) = c(\tilde{p}_{s,w,t})$, are together not enough to ensure that $x$ and $w$ have the same CVP, as encoded within the two shortest path trees.
For both this algorithm and the RPC algorithm, the number of CVPs enumerated corresponds to the number of executions of the outer loop.
But since the disjoint plateau method uses an asymmetric criterion for putting two nodes in the same partition, whether or not the two nodes are placed together depends on the order in which they are encountered during the depth-first search.
This potentially reduces the number of CVPs identified by the plateau method relative to the RPC method, and makes the solution computed by the disjoint plateau not unique given $G_\beta$ and $G_\phi$, since it is dependent on an arbitrary ordering of the nodes.
Thus between the two, only the RPC method is correct for enumerating $\Lambda$ for arbitrary graphs.

\subsection{Via-path ranking and sub-selection}\label{sec:stage3}

The third stage of the procedure for computing the $k$-best CVPs is to first sort the distinct paths, $\Pi$, by some measure of goodness and select the $k$ best paths, where $k$ is either explicitly specified or describes the number of paths satisfying a specified threshold.
The three measures we consider here all have in common that they can be computed in $O(1)$ (constant) time per path (independent of path length), given $\Pi$, $G_\beta$, and $G_\phi$.
Despite their computational simplicity, these measures result in a number of useful and interesting properties for the set of $k$ best CVPs, which we discuss in detail in Section \ref{sec:theory}, and demonstrate in practice in Section \ref{sec:results}.

\subsubsection{Path cost/length}

The first measure of goodness we consider for ordering $\hat{\Pi}$ is the simple cost, or length, associated with each path, which for via-node, $v$, is given by,
\begin{equation}
c(p_{s,v,t}) = c(p_{s,v}) + c(p_{v,t}).
\end{equation}
\noindent The values of $c(p_{s,v})$ and $c(p_{v,t})$ are computed during the computation of the two shortest path trees, which means $c(p_{s,v,t})$ can be computed in $O(1)$ additional time.
We refer to the ordered subset of paths created by ranking and sub-selecting with this measure as the $k$ shortest CVPs.


\subsubsection{Via-node fraction}\label{sec:vnf}

The second measure we consider for ranking CVPs is the \emph{via-node fraction}, $\omega(\tilde{p}_{s,v,t})$.
Let $|p|$ denote the number of edges, and therefore $|p|+1$ the number of nodes, in a path $p$.
Let $\tilde{p}_{h,l}$ represent the RPC from node $h$ to node $l$ associated with $\tilde{p}_{s,v,t}$.
The via-node fraction is given by,
\begin{equation}
\omega(\tilde{p}_{s,v,t}) = \frac{|\tilde{p}_{h,l}|+1}{|\tilde{p}_{s,v,t}|+1}.
\end{equation}
\noindent The via-node fraction can be interpreted either as the fraction of nodes within a CVP that are in its RPC, or as the fraction of nodes within $\tilde{p}_{s,v,t}$ whose CVP is $\tilde{p}_{s,v,t}$.
The quantities $|\tilde{p}_{h,l}|$ and $|\tilde{p}_{s,v,t}|$ have not been mentioned as being computed in the previous stages.
However, they can be computed analogously to $c(\tilde{p}_{h,l})$ and $c(\tilde{p}_{s,v,t})$, which means they can be computed from the values $|\tilde{p}_{s,h}|$, $|\tilde{p}_{s,v}|$,$|\tilde{p}_{v,t}|$, and $|\tilde{p}_{l,t}|$.
These values can be computed and cached during shortest path tree computation, with $O(1)$ additional time and space per node.
For each node $v \in V$, we compute,
\begin{equation}
|\tilde{p}_{s,v}| = |\tilde{p}_{s,b_s(v)}| + 1,
\end{equation}
\begin{equation}
|\tilde{p}_{v,t}| = |\tilde{p}_{f_t(v),t}| + 1,
\end{equation}
\noindent where $|\tilde{p}_{s,s}| = |\tilde{p}_{t,t}| = 0$.


\subsubsection{RPC cost fraction}\label{sec:rpccf}

A third measure, very similar to via-node fraction, but using the edges instead of the nodes and weighting the edges by their cost, is the RPC cost fraction.
We define the RPC cost fraction, $\rho(\tilde{p}_{s,v,t})$, for path $\tilde{p}_{s,v,t}$, as,
\begin{equation}
\rho(\tilde{p}_{s,v,t}) = \frac{c(\tilde{p}_{h,l})}{c(\tilde{p}_{s,v,t})} = \frac{c(\tilde{p}_{s,l}) - c(\tilde{p}_{s,h})}{c(\tilde{p}_{s,v}) + c(\tilde{p}_{v,t})} = \frac{c(\tilde{p}_{h,t}) - c(\tilde{p}_{l,t})}{c(\tilde{p}_{s,v}) + c(\tilde{p}_{v,t})}.
\end{equation}
\noindent Using values cached during the computation of the two shortest path trees, $\rho(\tilde{p}_{s,v,t})$ can be computed in $O(1)$ additional time per path.
Ranking by RPC cost fraction has essentially the same theoretical justification as ranking by via-node fraction, though, the two can have significantly different outcomes in practice.

\subsection{Via-path extraction}\label{sec:stage4}

The final stage of the procedure we have outlined is, for each of the implicit $k$-best CVPs identified in earlier stages, to extract its explicit sequence of nodes from its implicit representation within the two shortest path trees. This produces the ordered set of $k$-best CVPs, $\hat{\Lambda}^{(k)}$. This set is not to be confused with $\hat{\Pi}^{(k)}$, which represents the same paths using their RPCs but does not indicate the full sequence of nodes in each path. The RPCs represented in set $\Pi$ are node-disjoint, whereas the CVPs in set $\Lambda$ are not.

\begin{algorithm}[t]
	\caption{Algorithm for extracting each CVP (completing each RPC)}
	\label{alg:extraction}
	\begin{algorithmic}[1]
	\footnotesize
		\Require $G_\beta = (V,E_\beta)$, the predecessor tree; $G_\phi = (V,E_\phi)$, the successor tree; $s$ is the source; $t$ is the target; $\Pi$ is the set of RPCs	
		\Procedure{ExtractCVPs}{$G_\beta$,$G_\phi$,$s$,$t$,$\Pi$}
		\For {each $r \in \Pi$}
			\State $x$ $\gets$ $r$.\Call{getHead}{}
			\While {$x$ $\ne$ $s$}
				\State $x$ $\gets$ $G_\beta$($x$).parent
				\State $r$.\Call{addToHead}{$x$}
			\EndWhile
			\State $x$ $\gets$ $r$.\Call{getTail}{}
			\While {$x$ $\ne$ $t$}
				\State $x$ $\gets$ $G_\phi$($x$).parent
				\State $r$.\Call{addToTail}{$x$}
			\EndWhile
			\State $\Lambda$.\Call{add}{$r$}
		\EndFor
		\State \Return $\Lambda$ \Comment {$\Lambda$ represents the set of explicit cascading via-paths}
		\EndProcedure
	\end{algorithmic}
\end{algorithm}

Since each RPC corresponds to a distinct CVP, extraction corresponds to \emph{completing} each RPC---in other words, extending each RPC backward in the predecessor shortest path tree until it meets $s$ and forward in the successor shortest path tree until it meets $t$.
The procedure for doing this is given in Algorithm \ref{alg:extraction}.
The worst-case computational complexity of this stage of the overall CVP procedure is $O(k|V|)$, where $k \le |V|$.

\section{Theoretical Results and Implications}\label{sec:theory}

In this section we present the theory underlying cascading via-paths and reciprocal pointer chains from both a mathematical and a computational perspective. 
Our focus is on contributions that highlight properties we believe are of particular interest in application and therefore, throughout this section, we comment on and call attention to the particular importance of each of the theorems as we present them.
We then conclude the section with a summary of implications.

\subsection{Basic properties of via-paths and cascading via-paths}

\begin{prp}\label{thm:via_path_loop}
The shortest path from $s$ to $t$ passing through a node $v$ can contain a loop.
\end{prp}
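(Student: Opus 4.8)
The plan is to establish this by exhibiting a concrete counterexample: a small weighted directed graph $G = (V,E)$ with non-negative edge weights, a source $s$, a target $t$, and an intermediate node $v$, such that every shortest $s$--$t$ walk constrained to pass through $v$ must revisit some vertex. The key observation driving the construction is that a via-path $p_{s,v,t}$ is defined as $p_{s,v} \| p_{v,t}$, and even when $p_{s,v}$ and $p_{v,t}$ are each individually loopless, their concatenation need not be: if the shortest path from $s$ to $v$ and the shortest path from $v$ to $t$ share a vertex $m \ne v$, then $p_{s,v,t}$ passes through $m$ twice.

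First I would lay out such a graph explicitly. A minimal template: take vertices $s, v, m, t$; give $s \to m$ and $m \to v$ small weights so the unique shortest path from $s$ to $v$ is $s \to m \to v$; give $v \to m$ and $m \to t$ small weights so the unique shortest path from $v$ to $t$ is $v \to m \to t$; and arrange all alternative routes from $s$ to $v$ (respectively from $v$ to $t$) that avoid $m$ to be strictly longer, or simply absent from $E$. Then the via-path $p_{s,v,t}$ is forced to be $s \to m \to v \to m \to t$, which visits $m$ twice and hence contains a loop. Next I would verify the two defining properties: (i) by construction $p_{s,v}$ and $p_{v,t}$ are the genuine shortest paths (checking the handful of alternative route costs), so by Theorem~\ref{thm:opt_sub} and the definition of via-path, $p_{s,v,t} = p_{s,v} \| p_{v,t}$ is indeed the shortest $s$--$t$ path through $v$; (ii) this path repeats the vertex $m$, so it is not loopless. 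One could present this cleanly as a small figure analogous to Figure~\ref{fig:via_path}.

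I expect no real obstacle here — the statement is essentially an observation that the via-path construction does not inherit looplessness from its two halves — so the ``hard part'' is merely cosmetic: choosing a counterexample small and transparent enough that the reader immediately sees both that the two halves are shortest paths and that they overlap. A secondary point worth a sentence in the proof is \emph{why} this is not a contradiction with the existence of shortest paths in general: a genuinely shortest (loopless) $s$--$t$ path certainly exists, but it need not pass through $v$; once we constrain the path to visit $v$, optimality can force a detour back through an already-visited node. This remark is what makes the proposition meaningful rather than a triviality, and I would include it.
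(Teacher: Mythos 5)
Your proposal is correct and matches the paper's own argument: the paper likewise proves the proposition by exhibiting a four-node counterexample (Figure~\ref{fig:via_path_loop_example}) in which a node $u$ (your $m$) lies on both $p_{s,v}$ and $p_{v,t}$, forcing $p_{s,v,t} = p_{s,u} \| p_{u,v} \| p_{v,u} \| p_{u,t}$ to contain the loop $p_{u,v} \| p_{v,u}$. Your closing remark about why this does not contradict looplessness of unconstrained shortest paths echoes the paper's own comment following the proof, so nothing is missing.
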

\begin{proof}
An example of a via-path with a loop is given in Figure \ref{fig:via_path_loop_example}. In this example, for nodes $u$ and $v$, $u \in p_{s,v}$ and $u \in p_{v,t}$, which means that $p_{s,v,t} = p_{s,u} \| p_{u,v} \| p_{v,u} \| p_{u,t}$. Segment $p_{u,v} \| p_{v,u}$ is a loop.
\end{proof}

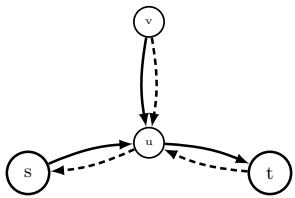
\begin{figure}
    \centering
        \resizebox{0.33\textwidth}{!}{
        \begin{tikzpicture} [
            basic/.style={
                circle,
                minimum size=5mm,
                draw=black,
                font=\tiny,
                thick
            },
            source_target/.style={
                minimum size=7mm,
                very thick,
                font=\small,
            },
            basic_edge/.style={
                ->,
                draw=black,
                very thick,
            },
            reciprocal_edge/.style={
                ->,
                bend left=10,
                draw=black,
                very thick,
            },
            back_pointer/.style={
                densely dashed
            },
            edge_weight/.style={
                font=\footnotesize
            }]
                \node (a) [basic,source_target] at (0,0) {s};
                \node (b) [basic] at (2,0.5) {u};
                \node (c) [basic] at (2,2.5) {v};
                \node (d) [basic,source_target] at (4,0) {t};
                
                \draw (b) edge [reciprocal_edge,back_pointer] node[edge_weight,anchor=north]{} (a);
                \draw (c) edge [reciprocal_edge,back_pointer] node[edge_weight,anchor=north]{} (b);
                \draw (d) edge [reciprocal_edge,back_pointer] node[edge_weight,anchor=north]{} (b);
            
                \draw (a) edge [reciprocal_edge] node[edge_weight,anchor=south]{} (b);
                \draw (b) edge [reciprocal_edge] node[edge_weight,anchor=south]{} (d);
                \draw (c) edge [reciprocal_edge,bend right=10] node[edge_weight,anchor=south]{} (b);

        \end{tikzpicture}
        } 
        \caption{This four-node graph example illustrates a case where $p_{s,v,t}$ contains a loop. This is discussed in Theorem \ref{thm:via_path_loop}.}
        \label{fig:via_path_loop_example}
\end{figure}

\noindent We comment that the shortest path between two nodes, $p_{s,t}$, does not contain a loop, but that the introduction of a via-node, $v$, that path $p_{s,v,t}$ must traverse introduces the possibility of a looped path. We note without proof that a via-path with a loop, such as $p_{s,v,t}$ in Figure \ref{fig:via_path_loop_example}, implies the existence of a via-path without a loop, such as $p_{s,u,t}$ in the same figure.

\begin{prp}\label{thm:cardinality}
$|\Lambda| \le |V|$.
\end{prp}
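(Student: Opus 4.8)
The plan is to exhibit a surjection from $V$ onto $\Lambda$ and count. By construction, the two shortest path trees $G_\beta$ and $G_\phi$ assign to each node $v \in V$ exactly one via-path, namely the cascading via-path $\tilde{p}_{s,v,t} = \tilde{p}_{s,v} \,\|\, \tilde{p}_{v,t}$, where $\tilde{p}_{s,v}$ is the unique $s$--$v$ path obtained by following back pointers in $G_\beta$ and $\tilde{p}_{v,t}$ is the unique $v$--$t$ path obtained by following forward pointers in $G_\phi$. Hence $v \mapsto \tilde{p}_{s,v,t}$ is a well-defined function $\Phi : V \to \Lambda$.

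First I would observe that $\Phi$ is onto: by the very definition of $\Lambda$ as the set of all distinct cascading via-paths, every element of $\Lambda$ equals $\tilde{p}_{s,v,t}$ for at least one $v \in V$, so it lies in the image of $\Phi$. Therefore $|\Lambda| = |\Phi(V)| \le |V|$, which is the claim. An equivalent phrasing I would mention for intuition is via the partition $\Pi$ produced by the RPC method (Algorithm \ref{alg:partition}): grouping nodes by their common CVP partitions $V$ into $|\Pi| = |\Lambda|$ nonempty blocks, and a partition of a finite set $V$ has at most $|V|$ blocks.

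There is no substantive obstacle here; the statement is essentially a restatement of the fact that a CVP is indexed by its via-node, with the indexing map $\Phi$ not necessarily injective --- distinct nodes can share a CVP, which is exactly why $|\Lambda|$ can be strictly less than $|V|$ in general. The only point worth stating carefully is that $\tilde{p}_{s,v,t}$ is genuinely defined for every $v$, i.e.\ that each $v$ is reachable from $s$ and can reach $t$ (as is implicit once $G_\beta$ and $G_\phi$ are taken to be spanning); if one does not assume this, the counting argument is applied to the set of such $v$ and the bound $|\Lambda| \le |V|$ follows a fortiori.
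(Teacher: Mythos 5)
Your argument is correct and is essentially the paper's own proof: the paper simply notes that $\Lambda = \bigcup_{v \in V} \{\tilde{p}_{s,v,t}\}$, which is exactly your surjection $v \mapsto \tilde{p}_{s,v,t}$ phrased as a union over its image. The extra remarks about non-injectivity and reachability are fine but not needed for the bound.
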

\begin{proof}
This follows directly from the fact that $\Lambda = \bigcup_{v \in V} \{\tilde{p}_{s,v,t}\}$.
\end{proof}

\noindent We include the preceding proposition to highlight that the size of the set of \emph{cascading} via-paths, $\Lambda$, as described in this paper, is tightly bounded, unlike the size of the superset of \emph{all} via-paths.

\begin{prp}
Path $\tilde{p}_{s,t} \in \Lambda$.
\end{prp}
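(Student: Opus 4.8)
The plan is to show that the shortest $s$--$t$ path, as chosen by the two shortest path trees, arises as the cascading via-path of some node $v$—and in fact of every node $v$ lying on it. Recall that $\Lambda = \bigcup_{v \in V}\{\tilde p_{s,v,t}\}$, so it suffices to exhibit a single node $v$ with $\tilde p_{s,v,t} = \tilde p_{s,t}$.

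First I would take $v = t$ (equally one could take $v = s$). The via-path through $t$ is by definition the shortest $s$--$t$ path that passes through $t$, but \emph{every} $s$--$t$ path passes through $t$, so $p_{s,t,t}$ is simply the shortest $s$--$t$ path, and $c(p_{s,t,t}) = c(p_{s,t})$. It remains to check that the \emph{particular} path $\tilde p_{s,t,t}$ selected by the trees $G_\beta$ and $G_\phi$ coincides with the particular shortest path $\tilde p_{s,t}$. By the decomposition $\tilde p_{s,v,t} = \tilde p_{s,v}\|\tilde p_{v,t}$, we have $\tilde p_{s,t,t} = \tilde p_{s,t}\|\tilde p_{t,t}$, and $\tilde p_{t,t}$ is the empty path (the successor tree $G_\phi$ is rooted at $t$, so $t$ has no forward pointer and $p_{t,t}$ has cost $0$ and length $0$). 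Hence $\tilde p_{s,t,t} = \tilde p_{s,t}$, which shows $\tilde p_{s,t} \in \Lambda$.

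I do not anticipate any real obstacle here; the only point requiring a word of care is the distinction between the set of \emph{all} shortest $s$--$t$ paths and the single representative $\tilde p_{s,t}$ encoded by $G_\beta$. The argument above handles this cleanly because the via-path notation $\tilde p_{s,v,t}$ is itself defined through the same two trees: the CVP of $t$ is literally $\tilde p_{s,t}$ concatenated with the empty path at $t$, so no tie-breaking inconsistency can arise. If one prefers symmetry, the same reasoning with $v = s$ gives $\tilde p_{s,s,t} = \tilde p_{s,s}\|\tilde p_{s,t} = \tilde p_{s,t}$, using that $G_\beta$ is rooted at $s$ so $\tilde p_{s,s}$ is empty.
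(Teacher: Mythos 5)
Your proposal is correct and follows essentially the same route as the paper: both arguments reduce to observing that the endpoint nodes $s$ and $t$ are themselves via-nodes, so their CVPs realize the shortest $s$--$t$ path. The only difference is in the finishing step---you establish the literal identity $\tilde{p}_{s,t,t} = \tilde{p}_{s,t} \| \tilde{p}_{t,t} = \tilde{p}_{s,t}$ (and symmetrically for $v=s$) using the empty path at the tree root, which is slightly sharper than the paper's argument, which only shows $c(\tilde{p}_{s,s,t}) = c(\tilde{p}_{s,t,t}) = c(\tilde{p}_{s,t})$ and then invokes the convention that any $s$--$t$ path of minimum cost may be taken to be $\tilde{p}_{s,t}$.
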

\begin{proof}
Any path $q$, from $s$ to $t$, with $c(q) = c(\tilde{p}_{s,t})$, can be considered to be $\tilde{p}_{s,t}$. It is true that $\tilde{p}_{s,s,t} \in \Lambda$ and $\tilde{p}_{s,t,t} \in \Lambda$ from the definition of $\Lambda$. Because $c(\tilde{p}_{s,s}) = c(\tilde{p}_{t,t}) = 0$, we know that even if $\tilde{p}_{s,s,t} \ne \tilde{p}_{s,t,t}$, $c(\tilde{p}_{s,s,t}) = c(\tilde{p}_{s,t,t}) = c(\tilde{p}_{s,t})$. Therefore $\tilde{p}_{s,t} \in \Lambda$.
\end{proof}

\noindent Here we call attention to the fact that the set of CVPs, and thus the set of $k$ best CVPs given various optimality criteria, does include the shortest path from $s$ to $t$. This is important when comparing CVPs to other path sets such as the $k$ shortest disjoint paths, which in general does not guarantee inclusion of the single shortest path.

\subsection{Properties of reciprocal pointer chains}

In this section we describe an important feature of each cascading via-path---its \emph{reciprocal pointer chain}. As we show in succeeding sections, the RPC is useful not only for characterizing individual paths and properties of the set of CVPs as a whole, but also as a unique feature on its own.

\begin{dfn}[Reciprocal pointer]
A reciprocal pointer connects two nodes, $u$ and $v$, whenever either $(v,u) \in E_\beta$ and $(u,v) \in E_\phi$, or $(u,v) \in E_\beta$ and $(v,u) \in E_\phi$.
\end{dfn}

\begin{thm}[Reciprocal pointer theorem]\label{thm:rp}
If $u$ and $v$ are connected by a reciprocal pointer, then $\tilde{p}_{s,u,t} = \tilde{p}_{s,v,t}$.
\end{thm}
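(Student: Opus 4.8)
The plan is to show that a reciprocal pointer between $u$ and $v$ forces the two CVPs to coincide by establishing that both paths decompose through the same pair of sub-paths. Without loss of generality, suppose $(v,u) \in E_\beta$ and $(u,v) \in E_\phi$; that is, $u = b_s(v)$ (the predecessor of $v$ in $G_\beta$) and $v = f_t(u)$ (the successor of $u$ in $G_\phi$). By Equation~\eqref{eq:recursive_predecessor_path}, $\tilde{p}_{s,v} = \tilde{p}_{s,u} \| (u,v)$, and by Equation~\eqref{eq:recursive_successor_path}, $\tilde{p}_{u,t} = (u,v) \| \tilde{p}_{v,t}$. The key observation is that the edge $(u,v)$ appears in both decompositions, so both $\tilde{p}_{s,v,t}$ and $\tilde{p}_{s,u,t}$ are built from the same ingredients: $\tilde{p}_{s,u}$, the edge $(u,v)$, and $\tilde{p}_{v,t}$.

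The main steps are then: first, write $\tilde{p}_{s,v,t} = \tilde{p}_{s,v} \| \tilde{p}_{v,t} = \tilde{p}_{s,u} \| (u,v) \| \tilde{p}_{v,t}$ using the $G_\beta$ back pointer at $v$. Second, write $\tilde{p}_{s,u,t} = \tilde{p}_{s,u} \| \tilde{p}_{u,t} = \tilde{p}_{s,u} \| (u,v) \| \tilde{p}_{v,t}$ using the $G_\phi$ forward pointer at $u$. The two right-hand sides are literally the same sequence of edges, so $\tilde{p}_{s,u,t} = \tilde{p}_{s,v,t}$, which is exactly the claim. The cost identity $c(\tilde{p}_{s,u,t}) = c(\tilde{p}_{s,v,t})$ follows immediately, though it is subsumed by the stronger path-equality statement. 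The symmetric case, where instead $(u,v) \in E_\beta$ and $(v,u) \in E_\phi$, is handled identically with the roles of $u$ and $v$ interchanged.

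The subtle point worth spelling out — and the step I expect to require the most care — is that this argument manipulates the \emph{specific} choices $\tilde{p}_{s,v}$, $\tilde{p}_{u,t}$, etc.\ encoded by the trees, not arbitrary shortest paths; the equalities in Equations~\eqref{eq:recursive_predecessor_path} and~\eqref{eq:recursive_successor_path} hold precisely because $G_\beta$ and $G_\phi$ commit to a particular parent for each node. One should also confirm that $\tilde{p}_{s,u} \| (u,v) \| \tilde{p}_{v,t}$ is well-defined as the CVP: since $u = b_s(v)$, the path $\tilde{p}_{s,u}$ is a genuine $G_\beta$-path to $u$, and since $v = f_t(u)$, $\tilde{p}_{v,t}$ is a genuine $G_\phi$-path from $v$ — so reading the dashed chain from $v$ back to $s$ and the solid chain from $v$ forward to $t$ (as in Figure~\ref{fig:all_via_paths}) yields exactly this concatenation, and likewise when started from $u$. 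No optimality re-derivation is needed; the result is essentially a bookkeeping consequence of the two recursive pointer definitions sharing the edge $(u,v)$.
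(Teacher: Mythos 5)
Your proposal is correct and follows essentially the same route as the paper's proof: the same without-loss-of-generality choice of pointer orientation, the same decompositions $\tilde{p}_{s,v} = \tilde{p}_{s,u} \| (u,v)$ and $\tilde{p}_{u,t} = (u,v) \| \tilde{p}_{v,t}$ from the two trees, and the same conclusion that both CVPs equal $\tilde{p}_{s,u} \| (u,v) \| \tilde{p}_{v,t}$. The extra remarks about the tree-committed choices and well-definedness are fine elaborations of what the paper leaves implicit, not a different argument.
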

\begin{proof}
Assume without loss of generality from the definition of reciprocal pointer that $(v,u) \in E_\beta$ and $(u,v) \in E_\phi$. Then by the definition of a shortest path tree, $\tilde{p}_{s,v} = \tilde{p}_{s,u} \| (u,v)$ and $\tilde{p}_{u,t} = (u,v) \| \tilde{p}_{v,t}$. Then $\tilde{p}_{s,u,t} = \tilde{p}_{s,u} \| (u,v) \| \tilde{p}_{v,t} = \tilde{p}_{s,v,t}$.
\end{proof}

\begin{dfn}[Reciprocal pointer chain]
A reciprocal pointer chain is a sequence of nodes $(v_1,v_2,\dotsc,v_n)$ such that for all $1 \le i \le n-1: (v_{i+1},v_i) \in E_\beta$ and $(v_i,v_{i+1}) \in E_\phi$.
\end{dfn}

\noindent If there is a reciprocal pointer chain from $u$ to $w$ or from $w$ to $u$, then we say that $u$ and $w$ are \emph{connected} by a reciprocal pointer chain.

\begin{thm}[Reciprocal pointer chain theorem]\label{thm:rpc}
The statement $\tilde{p}_{s,u,t} = \tilde{p}_{s,v,t}$, for cascading via-paths $\tilde{p}_{s,u,t}$ and $\tilde{p}_{s,v,t}$, is equivalent to the statement that $u$ and $v$ are connected by a reciprocal pointer chain.
\end{thm}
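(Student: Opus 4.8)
The plan is to prove the two directions separately. The ``if'' direction (reciprocal pointer chain $\Rightarrow$ equal CVPs) is a short induction built on Theorem~\ref{thm:rp}, while the ``only if'' direction (equal CVPs $\Rightarrow$ reciprocal pointer chain) requires a more careful structural decomposition of the common path.

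For the ``if'' direction, suppose $u$ and $v$ are connected by a reciprocal pointer chain; by the symmetry of equality we may assume the chain runs $(v_1,\dotsc,v_n)$ with $v_1 = u$, $v_n = v$, and induct on $n$. When $n = 1$ we have $u = v$. When $n \ge 2$, the pair $v_1,v_2$ is joined by a reciprocal pointer, since $(v_2,v_1)\in E_\beta$ and $(v_1,v_2)\in E_\phi$, so Theorem~\ref{thm:rp} gives $\tilde{p}_{s,v_1,t} = \tilde{p}_{s,v_2,t}$; meanwhile $(v_2,\dotsc,v_n)$ is a shorter chain, so the inductive hypothesis gives $\tilde{p}_{s,v_2,t} = \tilde{p}_{s,v_n,t}$, and chaining the two equalities finishes this direction.

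For the ``only if'' direction, assume $\tilde{p}_{s,u,t} = \tilde{p}_{s,v,t}$ and write this common via-path as an edge sequence $P = e_1 e_2 \cdots e_m$. Since a cascading via-path splits as $\tilde{p}_{s,u,t} = \tilde{p}_{s,u} \,\|\, \tilde{p}_{u,t}$, the prefix $e_1\cdots e_j$ of $P$ is exactly $\tilde{p}_{s,u}$, where $j = |\tilde{p}_{s,u}|$, and likewise $e_1\cdots e_k = \tilde{p}_{s,v}$ with $k = |\tilde{p}_{s,v}|$. Assume without loss of generality $j \le k$ (if $j = k$ the two prefixes coincide, forcing $u = v$ and the trivial one-node chain). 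Then $\tilde{p}_{s,u}$ is a prefix of $\tilde{p}_{s,v}$, and dually $\tilde{p}_{v,t}$ is a suffix of $\tilde{p}_{u,t}$; consequently $u$ sits at position $j$ of $\tilde{p}_{s,v}$, $v$ sits at position $k-j$ of $\tilde{p}_{u,t}$, and the middle segment $e_{j+1}\cdots e_k$ of $P$ is simultaneously the sub-path of $\tilde{p}_{s,v}$ from $u$ to $v$ and the sub-path of $\tilde{p}_{u,t}$ from $u$ to $v$. Writing this common segment as a node sequence $(u = w_0, w_1, \dotsc, w_r = v)$, the recursive structure (\ref{eq:recursive_predecessor_path}) of $\tilde{p}_{s,v}$ forces $w_i = b_s(w_{i+1})$, i.e.\ $(w_{i+1},w_i)\in E_\beta$, while the recursive structure (\ref{eq:recursive_successor_path}) of $\tilde{p}_{u,t}$ forces $w_{i+1} = f_t(w_i)$, i.e.\ $(w_i,w_{i+1})\in E_\phi$, for every $i$. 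That is precisely the definition of a reciprocal pointer chain from $u$ to $v$, completing the proof.

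The main obstacle is the bookkeeping in the ``only if'' direction. One must pin the two decompositions of $P$ to the specific index positions $|\tilde{p}_{s,u}|$ and $|\tilde{p}_{s,v}|$ rather than to occurrences of $u$ or $v$ as labels, since by Proposition~\ref{thm:via_path_loop} the path $P$ may contain a loop and revisit a node; and one must keep the edge-orientation conventions of $G_\beta$ and $G_\phi$ straight so that the single middle segment $e_{j+1}\cdots e_k$ genuinely certifies back pointers in $G_\beta$ and forward pointers in $G_\phi$ at the same time. Once that identification is in hand, both the reciprocal-pointer-chain property and the stated equivalence follow immediately.
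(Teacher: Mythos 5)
Your proof is correct and follows essentially the same route as the paper: the forward direction is Theorem~\ref{thm:rp} plus induction along the chain, and the converse decomposes the common via-path (WLOG with $u$ preceding $v$) so that the $u$--$v$ segment lies simultaneously in $\tilde{p}_{s,v}$ (giving back pointers in $E_\beta$) and in $\tilde{p}_{u,t}$ (giving forward pointers in $E_\phi$). Your extra bookkeeping with the split indices $|\tilde{p}_{s,u}|$ and $|\tilde{p}_{s,v}|$ is just a more careful rendering of the paper's ``without loss of generality'' step, and handles the possible-loop subtlety the paper leaves implicit.
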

\begin{proof}
By Theorem \ref{thm:rp} and induction, if $u$ and $v$ are connected by a reciprocal pointer chain, then $\tilde{p}_{s,u,t} = \tilde{p}_{s,v,t}$.
Next we prove the converse.
If $\tilde{p}_{s,u,t} = \tilde{p}_{s,v,t}$, then either $\tilde{p}_{s,u,t} = \tilde{p}_{s,u} \| \tilde{p}_{u,v} \| \tilde{p}_{v,t}$ or $\tilde{p}_{s,u,t} = \tilde{p}_{s,v} \| \tilde{p}_{v,u} \| \tilde{p}_{u,t}$.
Assume without loss of generality that $\tilde{p}_{s,u,t} = \tilde{p}_{s,u} \| \tilde{p}_{u,v} \| \tilde{p}_{v,t}$.
Then $u \in \tilde{p}_{s,v}$ and $v \in \tilde{p}_{u,t}$.
By the definition of a shortest path tree, $\forall (w,x) \in \tilde{p}_{s,v}: (x,w) \in E_\beta$.
Then because $u \in \tilde{p}_{s,v}$, $\forall (w,x) \in \tilde{p}_{u,v}: (x,w) \in E_\beta$.
By a similar argument, $\forall (w,x) \in \tilde{p}_{u,v}: (w,x) \in E_\phi$.
Thus $\forall (w,x) \in \tilde{p}_{u,v}: (x,w) \in E_\beta$ and $(w,x) \in E_\phi$, therefore the sequence of nodes in $\tilde{p}_{u,v}$ is a reciprocal pointer chain.
\end{proof}

\noindent Since a reciprocal pointer chain connects any two nodes that have the same cascading via-path, it must encompass a portion of the CVP itself.

\begin{cor}
The sequence of nodes in a reciprocal pointer chain from $u$ to $v$ is the same as in the shortest path $\tilde{p}_{u,v}$.
\end{cor}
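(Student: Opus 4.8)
The corollary follows almost immediately from the converse direction of the proof of Theorem~\ref{thm:rpc}, but since that argument was stated in terms of \emph{existence} of a reciprocal pointer chain rather than its identity, the plan is to unpack it and make the node-by-node identification explicit. First I would fix nodes $u$ and $v$ connected by a reciprocal pointer chain, and assume without loss of generality that the chain runs from $u$ to $v$, so that $(v_{i+1},v_i)\in E_\beta$ and $(v_i,v_{i+1})\in E_\phi$ for consecutive members $v_i,v_{i+1}$, with $v_1=u$ and $v_n=v$. The goal is to show $(v_1,\dotsc,v_n)=\tilde p_{u,v}$ as a sequence of nodes.

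The key step is to read the two conditions through the shortest path tree definitions. From $(v_{i+1},v_i)\in E_\beta$ we get $b_s(v_{i+1})=v_i$, hence by Equation~\eqref{eq:recursive_predecessor_path}, $\tilde p_{s,v_{i+1}} = \tilde p_{s,v_i}\,\|\,(v_i,v_{i+1})$; iterating from $i=1$ to $n-1$ and using optimal substructure (Theorem~\ref{thm:opt_sub}) gives $\tilde p_{s,v_n} = \tilde p_{s,v_1}\,\|\,(v_1,v_2)\,\|\dotsb\|\,(v_{n-1},v_n)$, so the edge sequence $(v_1,v_2),\dotsc,(v_{n-1},v_n)$ is exactly the sub-path of $\tilde p_{s,v}$ from $u$ to $v$. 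By Theorem~\ref{thm:opt_sub} that sub-path equals $\tilde p_{u,v}$ (the predecessor SPT's choice of shortest $u$--$v$ path). This already shows the chain's node sequence coincides with $\tilde p_{u,v}$. As a consistency check one can run the same argument on the $\phi$ side: $(v_i,v_{i+1})\in E_\phi$ means $f_t(v_i)=v_{i+1}$, so Equation~\eqref{eq:recursive_successor_path} gives the same edge sequence embedded in $\tilde p_{u,t}$, confirming the chain is a genuine path and the two trees agree on it.

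I do not anticipate a real obstacle here; the only thing to be careful about is the directional convention. The reciprocal pointer chain definition is asymmetric in $\beta$ versus $\phi$, and the corollary's phrase ``from $u$ to $v$'' must be matched to the orientation in which the chain was written — otherwise one would conclude the node sequence equals $\tilde p_{v,u}$ instead. Handling this is just the ``without loss of generality'' remark at the start, mirroring the symmetric reduction already used in the proof of Theorem~\ref{thm:rpc}. A secondary minor point worth a sentence: one should note that $\tilde p_{u,v}$ is well-defined as a node sequence because, by Theorem~\ref{thm:opt_sub}, the sub-path of $\tilde p_{s,v}$ starting at $u$ depends only on the SPT, so there is no tie-breaking ambiguity to worry about. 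With those remarks in place the proof is a two- or three-line unwinding of definitions.
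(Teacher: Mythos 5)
Your proposal is correct and follows essentially the reasoning the paper intends: the corollary is left as an immediate consequence of Theorem~\ref{thm:rpc}, and your node-by-node unwinding (reading $(v_{i+1},v_i)\in E_\beta$ as $b_s(v_{i+1})=v_i$, embedding the chain in $\tilde p_{s,v}$, and invoking optimal substructure to identify the $u$--$v$ segment with $\tilde p_{u,v}$) is exactly the machinery already used in the proofs of Theorems~\ref{thm:rp} and~\ref{thm:rpc}. Your cautionary remarks about orientation and tie-breaking are sensible but do not change the substance.
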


\noindent Note, however, that the converse is not true in general.

\begin{cor}\label{thm:rpc_loopless}
A reciprocal pointer chain cannot contain a loop (repeated node). 
\end{cor}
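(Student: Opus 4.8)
The plan is to reduce the statement to the fact that the unique path from any node to the root in a shortest path tree is simple. Recall that a reciprocal pointer chain is a node sequence $(v_1,\dots,v_n)$ with $(v_{i+1},v_i)\in E_\beta$ for every $1\le i\le n-1$; by the definition of $E_\beta$ this says exactly that $v_i = b_s(v_{i+1})$, i.e.\ $v_i$ is the parent of $v_{i+1}$ in the predecessor tree $G_\beta$.

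First I would record the structural fact, already used implicitly in the proof of Theorem \ref{thm:rpc}, that $G_\beta$ is a tree (an anti-arborescence rooted at $s$): every $v\ne s$ has exactly one outgoing back pointer, and iterating $b_s(\cdot)$ from any node terminates at $s$. In particular there is no directed cycle among back pointers, so the sequence obtained by repeatedly applying $b_s$ to any starting node visits each vertex at most once. (If one prefers an even more elementary justification, the values $c(\tilde p_{s,v})$ are finite and non-increasing along back pointers, and a repeated node would force a zero-cost cycle of back pointers inside a tree, a contradiction.)

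Second I would observe that a reciprocal pointer chain $(v_1,\dots,v_n)$ is precisely the reversal of an initial segment of such a back-pointer walk: starting at $v_n$ and applying $b_s$ produces $v_{n-1},v_{n-2},\dots,v_1$ (and then continues on to $s$). Since that entire walk is simple, its sub-walk $v_1,\dots,v_n$ contains no repeated vertex, which is exactly the assertion that the chain contains no loop. Equivalently, one can invoke the preceding corollary: the node sequence of an RPC from $u$ to $v$ coincides with $\tilde p_{u,v}$, which is the $u$-to-$v$ portion of the tree path $\tilde p_{s,v}$ and hence loopless.

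I do not expect a genuine obstacle here; the proof is essentially a one-line consequence of the tree structure of $G_\beta$. The only point deserving a sentence of care is spelling out why the back-pointer relation is acyclic, so that "walking up the tree" is legitimately simple — and that is immediate from the definition of a shortest path tree.
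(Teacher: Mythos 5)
Your argument is correct and matches the paper's (implicit) justification: the paper leaves this as an immediate corollary of the fact that an RPC's node sequence coincides with the tree-encoded shortest path $\tilde{p}_{u,v}$, i.e.\ a walk up the predecessor tree $G_\beta$, which is simple. Your spelled-out version via the acyclicity of the back-pointer relation is the same one-line tree argument, just made explicit.
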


\noindent This is notable because, though $p_{s,v,t}$ as a whole can contain a loop, Corollary \ref{thm:rpc_loopless} says there are potentially large subpaths of $p_{s,v,t}$ that cannot.

\begin{dfn}[Maximal reciprocal pointer chain]
A \emph{maximal} RPC is an RPC whose set of nodes is not a subset of the set of nodes of another RPC.
\end{dfn}

\noindent The set of RPCs, $\Pi$, computed by the RPC algorithm described in Section \ref{sec:rpc_method} is the set of maximal RPCs. We often refer to elements of this set as simply RPCs, though, rather than maximal RPCs, because non-maximal RPCs are not given much consideration.

\begin{thm}[Maximal RPC disjointness theorem]\label{thm:mrpc_disjointness}
For any two distinct maximal RPCs, $R_1$ and $R_2$, $|R_1 \cap R_2| = 0$.
\end{thm}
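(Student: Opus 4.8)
The plan is to argue by contradiction using the equivalence established in Theorem~\ref{thm:rpc}. Suppose $R_1$ and $R_2$ are distinct maximal RPCs with a common node $v$. I would first invoke the corollary/observation that the node sequence of a maximal RPC is exactly the node sequence of some shortest path $\tilde{p}_{h,l}$ in $G_\beta \cup G_\phi$: since $v$ lies on $R_1$, every node of $R_1$ has the same CVP as $v$, namely $\tilde{p}_{s,v,t}$, by Theorem~\ref{thm:rp} applied inductively along the chain; likewise every node of $R_2$ has CVP equal to $\tilde{p}_{s,v,t}$. Hence, by the ``$\Leftarrow$'' direction of Theorem~\ref{thm:rpc}, every node of $R_1$ is connected by a reciprocal pointer chain to every node of $R_2$.

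Next I would show this forces $R_1 \cup R_2$ to itself be (the node set of) an RPC, contradicting maximality of each. The key structural fact to extract is that the reciprocal pointer relation, restricted to the vertices sharing a fixed CVP, behaves like a linear order: each node $v \ne s$ with CVP $\tilde p_{s,v,t}$ that is ``reciprocal'' has a unique predecessor $b_s(v)$ in $G_\beta$, and the reciprocity condition $(b_s(v),v)\in E_\phi$ is exactly the condition placing the edge $(v,b_s(v))$ in $E_\gamma$; symmetrically for the successor side. So in the graph $G_\gamma$ of equation~\eqref{eq:rpc_graph}, every vertex has in-RPC-degree at most one toward $s$ and at most one toward $t$, i.e. each connected component of $G_\gamma$ is a simple path (this also reproves Corollary~\ref{thm:rpc_loopless}). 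A maximal RPC is precisely a connected component of $G_\gamma$. Two connected components sharing a vertex are equal; that is the contradiction. Concretely: if $v \in R_1 \cap R_2$, walk from $v$ toward $s$ along reciprocal pointers --- the next node is forced to be $b_s(v)$ (the unique $G_\beta$-parent), provided the reciprocity test passes, and it passes or fails independently of which chain we think we are in --- so the head-extension of $R_1$ and of $R_2$ past $v$ coincide node for node; similarly walking toward $t$ the tail-extensions coincide. Hence $R_1$ and $R_2$ have the same node sequence, contradicting that they are distinct.

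The main obstacle is making precise the ``uniqueness of the extension'' step and pinning down what ``maximal'' buys us: I need that a maximal RPC cannot be properly extended at either end, and that the only candidate extension at the head of a chain ending in node $x \ne s$ is $b_s(x)$, with the extension legal iff $f_t(b_s(x)) = x$ --- a condition depending only on $x$ and the two trees, not on the rest of the chain. Once that determinism is in hand, a short induction along the shared portion shows $R_1$ and $R_2$ agree everywhere, so distinctness is impossible and $|R_1 \cap R_2| = 0$. I would present it cleanly via the $G_\gamma$ formulation: maximal RPCs are connected components of $G_\gamma$, distinct connected components are vertex-disjoint, done. The bulk of the writing is justifying the bijection between maximal RPCs and components of $G_\gamma$, which follows directly from the equivalent-algorithm remark already made after equation~\eqref{eq:rpc_graph}.
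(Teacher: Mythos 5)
Your proposal is correct, but its core is genuinely different from the paper's proof. The paper argues entirely through the CVP-equality machinery: if $u_0 \in R_1 \cap R_2$, then every node of $R_1$ and every node of $R_2$ has the same CVP as $u_0$, so by Theorem \ref{thm:rpc} any $u_1 \in R_1$ and $u_2 \in R_2$ are connected by a reciprocal pointer chain, and maximality then forces $R_1 = R_2$, a contradiction. Your first paragraph retraces that setup, but the proof you actually lean on is the structural one in your second paragraph: each node has exactly one back pointer and one forward pointer, so the only candidate head-extension of a chain ending at $x \ne s$ is $b_s(x)$, legal iff $f_t(b_s(x)) = x$ (symmetrically at the tail), making chain growth deterministic; hence maximal RPCs are exactly the connected components of $G_\gamma$ from Equation \ref{eq:rpc_graph}, and distinct components cannot share a vertex. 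This route buys something the paper's does not: it yields Theorem \ref{thm:rpc_uniqueness} and the correctness of Algorithm \ref{alg:partition} essentially for free, and it never needs Theorems \ref{thm:rp} and \ref{thm:rpc}; the paper's route is shorter because it reuses exactly those theorems. One small patch is needed: the degree-at-most-two observation gives that components of $G_\gamma$ are paths \emph{or cycles}, not simple paths outright, and a cycle component would host distinct maximal chains sharing every node, which is precisely what you are trying to exclude. Rule out cycles explicitly—e.g., a $G_\gamma$-cycle would force $f_t(v_1) = v_2, \dotsc, f_t(v_n) = v_1$, impossible since $G_\phi$ is a tree rooted at $t$—or simply cite Corollary \ref{thm:rpc_loopless} instead of claiming to reprove it.
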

\begin{proof}
Suppose there is a node $u_0$ common to both RPCs. Then for an arbitrary node $u_1 \in R_1$, $\tilde{p}_{s,u_0,t} = \tilde{p}_{s,u_1,t}$. Similarly, for an arbitrary node $u_2 \in R_2$, it must be the case that $\tilde{p}_{s,u_0,t} = \tilde{p}_{s,u_2,t}$. As a result, $\tilde{p}_{s,u_1,t} = \tilde{p}_{s,u_2,t}$. Thus $u_1$ and $u_2$ must be part of the same RPC. But if every node in $R_1$ and $R_2$ are part of the same RPC, then $R_1 = R_2$, which contradicts the fact that they are distinct RPCs. Therefore, if two maximal RPCs are distinct, they must be completely disjoint.
\end{proof}

\begin{thm}[Maximal RPC uniqueness theorem]\label{thm:rpc_uniqueness}
Each node, $v \in V$, is a member of exactly one maximal RPC.
\end{thm}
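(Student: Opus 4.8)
The plan is to prove the statement in two halves: \emph{existence} (every $v \in V$ is a member of at least one maximal RPC) and \emph{uniqueness} (no $v$ is a member of two distinct maximal RPCs), the latter being essentially a restatement of Theorem~\ref{thm:mrpc_disjointness}. Uniqueness I would dispatch first: if some $v$ belonged to two distinct maximal RPCs $R_1$ and $R_2$, then $v \in R_1 \cap R_2$, contradicting $|R_1 \cap R_2| = 0$ from Theorem~\ref{thm:mrpc_disjointness}; hence $v$ lies in at most one maximal RPC. (Here I would also note in passing that two maximal RPCs with a common node are forced by the same theorem to be equal, so ``membership in a maximal RPC'' is unambiguous at the level of node sets.)

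For existence, the key observation is that the length-one sequence $(v)$ is trivially a reciprocal pointer chain whose node set is $\{v\}$, so the collection $\mathcal{R}_v$ of RPCs whose node set contains $v$ is nonempty. By Corollary~\ref{thm:rpc_loopless} an RPC has no repeated node, so every RPC has at most $|V|$ nodes; therefore I may choose $R \in \mathcal{R}_v$ whose node set has maximum cardinality. I then claim $R$ is maximal: if it were not, its node set would be a proper subset of the node set of some RPC $R'$, but then $v \in R'$ as well, so $R' \in \mathcal{R}_v$ and $R'$ has strictly more nodes than $R$, contradicting the extremal choice of $R$. Hence $R$ is a maximal RPC containing $v$, and combining the two halves gives ``exactly one.''

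The step I expect to require the most care is the existence argument, specifically ensuring that ``maximal'' is read as a condition on a chain's \emph{node set} (not on its ordering as a sequence) so that the maximum-cardinality extremal argument is legitimate, and explicitly invoking Corollary~\ref{thm:rpc_loopless} to guarantee RPC node sets are finite of bounded size so that such a maximum exists. An alternative route would be structural: in the graph $G_\gamma$ defined in Section~\ref{sec:rpc_method}, each node has $G_\gamma$-in-degree and $G_\gamma$-out-degree at most one (a node has a unique predecessor in $G_\beta$ and a unique successor in $G_\phi$), and $G_\gamma$ is acyclic by Corollary~\ref{thm:rpc_loopless}, so its connected components are simple directed paths, each of which is precisely one maximal RPC covering each of its nodes once; but the short extremal argument above seems the cleaner choice.
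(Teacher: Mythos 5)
Your proposal is correct and follows essentially the same route as the paper: existence via the degenerate single-node RPC plus an extremal argument yielding a maximal RPC containing $v$, and uniqueness as an immediate consequence of Theorem~\ref{thm:mrpc_disjointness}. Your version is in fact slightly more careful than the paper's, since you explicitly invoke Corollary~\ref{thm:rpc_loopless} to bound RPC sizes and justify that a largest chain containing $v$ exists, where the paper appeals loosely to ``induction.''
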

\begin{proof}
A single node, $v$, by itself satisfies the definition of a reciprocal pointer chain---though it is a degenerate case without any reciprocal pointers, and a path length of 0.
Thus every node belongs to at least one RPC.
This RPC is either a maximal RPC, or there must be a larger RPC that is a superset of the non-maximal RPC.
By induction, there must be some largest RPC and because it is the largest, it cannot be the subset of any other RPC---thus it is a maximal RPC.
Therefore, every node belongs to at least one maximal RPC.
By Theorem \ref{thm:mrpc_disjointness}, $v$ can be a member of at most one maximal RPC.
Therefore, every node is a member of exactly one maximal RPC.
\end{proof}

\begin{thm}\label{thm:rpc_cvp}
The sets $\Lambda$ (the CVPs) and $\Pi$ (the maximal RPCs) are in a one-to-one relationship.
\end{thm}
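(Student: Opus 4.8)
The plan is to exhibit an explicit bijection between $\Lambda$ and $\Pi$ by sending each maximal RPC to the common cascading via-path of its nodes, and to check this map is well-defined, injective, and surjective using the machinery already established. First I would define the map $\Phi \colon \Pi \to \Lambda$ as follows: given a maximal RPC $R$, pick any node $v \in R$ and set $\Phi(R) = \tilde{p}_{s,v,t}$. This is well-defined (independent of the choice of $v$) because any two nodes $v, v' \in R$ are connected by a reciprocal pointer chain, so by the Reciprocal Pointer Chain Theorem (Theorem \ref{thm:rpc}) $\tilde{p}_{s,v,t} = \tilde{p}_{s,v',t}$. By definition $\Lambda = \bigcup_{v \in V} \{\tilde{p}_{s,v,t}\}$, so $\Phi(R) \in \Lambda$, confirming the codomain is correct.

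Next I would check surjectivity. Any element of $\Lambda$ has the form $\tilde{p}_{s,v,t}$ for some $v \in V$. By the Maximal RPC Uniqueness Theorem (Theorem \ref{thm:rpc_uniqueness}), $v$ lies in exactly one maximal RPC $R$, and then $\Phi(R) = \tilde{p}_{s,v,t}$ by construction, so $\Phi$ is onto.

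For injectivity, suppose $R_1 \neq R_2$ are distinct maximal RPCs with $\Phi(R_1) = \Phi(R_2)$. Choose $u_1 \in R_1$ and $u_2 \in R_2$; then $\tilde{p}_{s,u_1,t} = \tilde{p}_{s,u_2,t}$. By the converse direction of Theorem \ref{thm:rpc}, $u_1$ and $u_2$ are connected by a reciprocal pointer chain, hence lie in a common RPC; since every node lies in a unique maximal RPC (Theorem \ref{thm:rpc_uniqueness}) and the union of two RPCs sharing a node is contained in a single maximal RPC, this forces $R_1 = R_2$ (this is essentially the argument already used in the proof of Theorem \ref{thm:mrpc_disjointness}), a contradiction. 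Hence $\Phi$ is injective, and therefore a bijection, establishing the one-to-one correspondence.

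I do not anticipate a serious obstacle here, since all the heavy lifting has been done in Theorems \ref{thm:rp}, \ref{thm:rpc}, \ref{thm:mrpc_disjointness}, and \ref{thm:rpc_uniqueness}; the only point requiring a little care is the well-definedness of $\Phi$, i.e.\ ensuring the choice of representative node within an RPC does not matter, which is exactly the forward direction of Theorem \ref{thm:rpc}. An alternative, perhaps cleaner, phrasing would be to note that "having the same cascading via-path" is an equivalence relation on $V$ whose classes are precisely the maximal RPCs (by Theorem \ref{thm:rpc}), and that $\Lambda$ is by definition the set of these equivalence classes' common via-paths; the correspondence is then immediate. I would likely present the explicit-bijection version for concreteness.
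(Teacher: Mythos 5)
Your proposal is correct and follows essentially the same route as the paper: the paper's proof likewise combines Theorem \ref{thm:rpc_uniqueness} (each node lies in exactly one maximal RPC) with Theorem \ref{thm:rpc} (equal CVPs exactly for nodes connected by a reciprocal pointer chain) to conclude that distinct maximal RPCs correspond to distinct CVPs. You simply make the bijection $\Phi$ and its well-definedness, injectivity, and surjectivity explicit, which the paper leaves implicit in a more terse argument.
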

\begin{proof}
By Theorem \ref{thm:rpc_uniqueness}, each node $v \in V$ is a member of exactly one maximal RPC. By Theorem \ref{thm:rpc}, all of the nodes in this maximal RPC have the same CVP, and all of the nodes in different maximal RPCs have a different CVP. Therefore, each distinct RPC corresponds to a distinct CVP.
\end{proof}

\noindent In other words, Theorem \ref{thm:rpc_cvp} indicates that by enumerating the distinct RPCs, we enumerate the distinct CVPs.
This proves the correctness of the RPC algorithm for enumerating CVPs described in Section \ref{sec:method}.

\begin{cor} \label{rpc_disjoint}
The set of maximal RPCs, $\Pi$, completely partitions the nodes of $V$ into $|\Pi| = |\Lambda|$ disjoint subsets.
\end{cor}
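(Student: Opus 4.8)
The plan is to assemble this corollary directly from the three immediately preceding theorems, treating it as a consolidation rather than a fresh argument. Recall that a \emph{partition} of $V$ is a collection of nonempty, pairwise disjoint subsets whose union is $V$; so I would verify coverage, disjointness, and the block count in turn.

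First, coverage and nonemptiness: by Theorem~\ref{thm:rpc_uniqueness}, every node $v \in V$ belongs to exactly one maximal RPC. The existence half of that statement immediately gives $\bigcup_{R \in \Pi} R = V$, and it also guarantees each block of $\Pi$ is nonempty, since every maximal RPC contains at least the node that generated it. Second, disjointness: this is precisely Theorem~\ref{thm:mrpc_disjointness}, which asserts $|R_1 \cap R_2| = 0$ for any two distinct maximal RPCs; equivalently, the uniqueness half of Theorem~\ref{thm:rpc_uniqueness} yields the same conclusion, since a node lying in two distinct blocks would contradict membership in \emph{exactly} one. Coverage together with disjointness is exactly the statement that $\Pi$ partitions $V$.

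Finally, the cardinality claim $|\Pi| = |\Lambda|$ is Theorem~\ref{thm:rpc_cvp}, which establishes a one-to-one correspondence between the maximal RPCs and the CVPs; a bijection between two finite sets forces equal cardinality. Combining the partition property with this count gives the corollary as stated.

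I do not anticipate any real obstacle here: the entire content is inherited from Theorems~\ref{thm:mrpc_disjointness}, \ref{thm:rpc_uniqueness}, and \ref{thm:rpc_cvp}. The only point that needs a moment's care is recording that the blocks are nonempty, so that the word ``partition'' carries its standard meaning; I would draw this from the ``at least one'' direction of Theorem~\ref{thm:rpc_uniqueness} rather than leave it implicit.
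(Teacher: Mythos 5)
Your proposal is correct and matches the paper's treatment: the paper presents this corollary as a direct re-wording of Theorems \ref{thm:mrpc_disjointness}, \ref{thm:rpc_uniqueness}, and \ref{thm:rpc_cvp}, which is exactly the consolidation you carry out (coverage and uniqueness from Theorem \ref{thm:rpc_uniqueness}, disjointness from Theorem \ref{thm:mrpc_disjointness}, and the count $|\Pi| = |\Lambda|$ from Theorem \ref{thm:rpc_cvp}). Your added remark about nonemptiness of the blocks is a minor but harmless refinement of the same argument.
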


\indent This final corollary is primarily a re-wording of earlier theorems but we demonstrate its particular usefulness in Section \ref{sec:ksp_theory}, where we show it has implications for potentially greatly reducing the computation required to compute the $k$ shortest loopless paths.

\subsection{Properties of plateaus}\label{sec:plateau}

In this section, we discuss the \emph{plateau}, a term that has been used synonymously with \emph{reciprocal pointer chain} in prior work.
However, the concept of a plateau arises from a subtly different perspective, and we sort out the differences that can arise in general graphs.
Consider the following theorem:

\begin{thm}\label{thm:via_path_decrease_cost}
If node $u \in p_{s,v}$ or $u \in p_{v,t}$, for a pair of nodes $u$ and $v$, then $c(p_{s,u,t}) \le c(p_{s,v,t})$.
\end{thm}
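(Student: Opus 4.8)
The plan is to prove the statement for the case $u \in p_{s,v}$; the case $u \in p_{v,t}$ follows by a symmetric argument (reversing the roles of $s$ and $t$, or equivalently working in the transposed graph). So suppose $u$ lies on a shortest path $p_{s,v}$ from $s$ to $v$.

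First I would invoke optimal substructure (Theorem~\ref{thm:opt_sub}): since $u \in p_{s,v}$, we have $c(p_{s,v}) = c(p_{s,u}) + c(p_{u,v})$. The goal is to exhibit \emph{some} $s$--$t$ path through $u$ whose cost is at most $c(p_{s,v,t})$, since $c(p_{s,u,t})$ is by definition the \emph{minimum} cost over all such paths. The natural candidate is the concatenation $q = p_{s,u} \,\|\, p_{u,v} \,\|\, p_{v,t}$, which passes through $u$ (and also through $v$). Its cost is $c(p_{s,u}) + c(p_{u,v}) + c(p_{v,t}) = c(p_{s,v}) + c(p_{v,t}) = c(p_{s,v,t})$, where the last equality is again optimal substructure applied to the via-path $p_{s,v,t} = p_{s,v}\,\|\,p_{v,t}$. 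Since $q$ is an $s$--$t$ walk through $u$ of cost $c(p_{s,v,t})$, and $c(p_{s,u,t})$ is the minimum cost of any such walk, we conclude $c(p_{s,u,t}) \le c(q) = c(p_{s,v,t})$.

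The one subtlety I would be careful about is that $q$ need not be a simple path---it may repeat nodes (indeed Proposition~\ref{thm:via_path_loop} shows via-paths can contain loops)---but this does not matter: a via-path is defined as the shortest \emph{path} through $v$, and with non-negative edge weights the minimum-cost walk and minimum-cost path through a node have the same cost (any repeated node can be short-circuited without increasing cost). So it is legitimate to bound $c(p_{s,u,t})$ by the cost of the walk $q$. This is the only place where the non-negativity of edge weights is used, and it is the step I would state most carefully; everything else is two applications of Theorem~\ref{thm:opt_sub} and the definition of a via-path as a minimizer.

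Finally, for the case $u \in p_{v,t}$, the symmetric candidate walk is $p_{s,v}\,\|\,p_{v,u}\,\|\,p_{u,t}$, which passes through $u$ and has cost $c(p_{s,v}) + c(p_{v,u}) + c(p_{u,t}) = c(p_{s,v}) + c(p_{v,t}) = c(p_{s,v,t})$ by optimal substructure on $p_{v,t} = p_{v,u}\,\|\,p_{u,t}$; the same minimality argument then gives $c(p_{s,u,t}) \le c(p_{s,v,t})$. I do not anticipate any real obstacle here---the result is essentially immediate from optimal substructure---so the ``hard part'' is merely the bookkeeping of noting that the exhibited object is a walk rather than a simple path and that this is harmless.
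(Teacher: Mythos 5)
Your argument is correct. Note that the paper does not actually supply a proof of Theorem \ref{thm:via_path_decrease_cost}---it calls the result straightforward and attributes it to Shiller et al.\ \cite{shiller2004computing}---and what you have written is precisely that standard argument: two applications of optimal substructure (Theorem \ref{thm:opt_sub}) plus minimality of the via-path through $u$. Your caution about $q$ being a walk rather than a simple path is handled consistently with the paper's own conventions, since Proposition \ref{thm:via_path_loop} already treats $p_{s,u,t}$ as the concatenation $p_{s,u} \| p_{u,t}$ (which may repeat nodes), so the chain $c(p_{s,u,t}) = c(p_{s,u}) + c(p_{u,t}) \le c(p_{s,u}) + c(p_{u,v}) + c(p_{v,t}) = c(p_{s,v}) + c(p_{v,t}) = c(p_{s,v,t})$ goes through directly.
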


\noindent The proof of Theorem \ref{thm:via_path_decrease_cost} is straightforward and is originally due to Shiller, et al.\cite{shiller2004computing}
This theorem forms the basis for the concept of a plateau, a term first coined by Jones \cite{jones2012method} but not rigorously defined.
A plateau is intended to encompass the following:

\begin{dfn}[Plateau]
A plateau is a sequence of nodes $(v_1,v_2,\dotsc,v_n)$ such that for all $1 \le i < n$, $(v_i,v_{i+1}) \in E$ and $c(p_{s,v_i,t}) = c(p_{s,v_{i+1},t})$.
\end{dfn}

\noindent This same concept is referred to as a \emph{valley} by Shiller, with both terms motivated by imagining a set of nodes residing in a two-dimensional planar graph, and via-path cost viewed as a three-dimensional surface above the plane.

Similar to the maximal RPC, we can define the concept of a maximal plateau.

\begin{dfn}[Maximal plateau]
A \emph{maximal} plateau is a plateau whose set of nodes is not a subset of the set of nodes of another plateau.
\end{dfn}

\noindent It follows from the definitions of plateau and reciprocal pointer chain that if a sequence of nodes $(v_1,v_2,\dotsc,v_n)$ corresponds to a maximal reciprocal pointer chain, it also corresponds to a maximal plateau---but the converse is not true.

\begin{prp}\label{thm:multi_plateau}
A node can be a member of more than one maximal plateau.
\end{prp}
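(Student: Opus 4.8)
The statement is an existence claim, so the plan is to exhibit a small weighted digraph in which some vertex lies in two distinct maximal plateaus. Before constructing it, I would first pin down \emph{why} this is possible, in contrast to the disjointness of maximal RPCs (Theorem~\ref{thm:mrpc_disjointness}): membership in an RPC is essentially forced, because the back pointer and forward pointer out of each node are single choices fixed by the two SPTs, so the maximal RPC containing a node is uniquely determined. A plateau, by contrast, only asks that consecutive nodes be joined by an edge of $E$ and share the same via-path cost $c(p_{s,\cdot,t})$; this condition is ``non-functional.'' Concretely, let $H = (V, E_H)$ with $E_H = \{(u,w) \in E : c(p_{s,u,t}) = c(p_{s,w,t})\}$; then (maximal) plateaus are exactly (maximal-by-node-set) directed paths in $H$, and as soon as a vertex of $H$ has out-degree at least two with the two branches leading into parts of $H$ that cannot be threaded onto a single directed path, that vertex sits on two incomparable maximal paths.

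The second step is to realize such an $H$ with genuine shortest-path data. I would take five vertices $s, v, b, d, t$ with edges $s \to v$, $v \to b$, $v \to d$, $b \to t$, $d \to t$, plus a cheap direct edge $s \to t$, choosing, say, $c(s \to v) = c(v \to b) = c(v \to d) = 1$, $c(b \to t) = c(d \to t) = 5$, and $c(s \to t) = 3$. Computing the via-path costs: the global shortest $s$--$t$ cost is $3$ (the direct edge), while $c(p_{s,v,t}) = c(p_{s,b,t}) = c(p_{s,d,t}) = 7$, so $v, b, d$ share a via-path cost strictly larger than that of $s$ and $t$. Among $\{v,b,d\}$ the only $E$-edges joining equal-via-cost vertices are $v \to b$ and $v \to d$ (the edges $b \to t$, $d \to t$, $s \to v$ join vertices of different via-path cost, and there is no edge between $b$ and $d$ or back into $v$). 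Hence the directed paths of $H$ through these vertices are exactly $(v,b)$ and $(v,d)$: neither node set contains the other, and neither is contained in a larger plateau, since no directed $E$-path threads all three of $v, b, d$. Therefore $\{v,b\}$ and $\{v,d\}$ are two distinct maximal plateaus, both containing $v$, which proves the proposition; I would record this example in a small figure analogous to Figure~\ref{fig:via_path_loop_example}.

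The main obstacle is bookkeeping rather than conceptual: I must choose weights so that $c(p_{s,v,t}) = c(p_{s,b,t}) = c(p_{s,d,t})$ holds exactly, and --- the delicate point --- so that no further $E$-edge joins two vertices of that common via-path cost in a way that threads $v, b, d$ onto one directed path, since such an edge (most dangerously an edge between $b$ and $d$) would make $\{v,b\}$ and $\{v,d\}$ subsets of the single plateau $\{v,b,d\}$ and destroy the example. Making $b$ and $d$ dead ends, each with a single out-edge running straight to $t$ and with no edge between them, is exactly what rules this out; the cheap $s \to t$ shortcut is a convenience that also keeps $s$ and $t$ off that plateau. Verification then reduces to the handful of shortest-path computations sketched above.
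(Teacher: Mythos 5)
Your proof is correct: with the chosen weights, $v$, $b$, $d$ all have via-path cost $7$ while $s$ and $t$ have cost $3$, the only equal-cost edges are $v \to b$ and $v \to d$, and no directed path threads all three of $v,b,d$, so $(v,b)$ and $(v,d)$ are indeed two distinct maximal plateaus sharing $v$. The paper argues the same way---by exhibiting an example---but simply points to its running example (Figure \ref{fig:all_via_paths}), where two tied shortest $s$--$t$ paths of cost $11$ are both maximal plateaus sharing four nodes; your purpose-built five-node graph is an equally valid, self-contained alternative.
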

\begin{proof}
We demonstrate this in Figure \ref{fig:all_via_paths}. There are two distinct paths from node $s$ to $t$ that have a total cost of 11, where for every node $v$ in the two paths, $c(p_{s,v,t}) = 11$. Both of these paths are maximal plateaus. Four nodes are common to both paths, therefore we have four examples of nodes that belong to more than one maximal plateau.
\end{proof}

\begin{cor}\label{thm:plateau_not_rpc}
A sequence of nodes $(v_1,v_2,\dotsc,v_n)$ that corresponds to a maximal plateau does not necessarily correspond to a maximal RPC.
\end{cor}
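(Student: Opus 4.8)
The plan is to obtain the corollary immediately from Proposition \ref{thm:multi_plateau} together with Theorem \ref{thm:rpc_uniqueness}, arguing by contradiction. Suppose every node sequence that corresponds to a maximal plateau also corresponded to a maximal RPC. By Proposition \ref{thm:multi_plateau} there is a node $v$ belonging to two distinct maximal plateaus $P_1$ and $P_2$ --- concretely, the two $s$--$t$ paths of cost $11$ in Figure \ref{fig:all_via_paths} are both maximal plateaus and share four nodes, any one of which may serve as $v$. Under our assumption $P_1$ and $P_2$ would be the node sequences of two distinct maximal RPCs, both containing $v$. But Theorem \ref{thm:rpc_uniqueness} says each node belongs to \emph{exactly one} maximal RPC, so $v$ cannot lie in two distinct maximal RPCs, a contradiction. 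Hence some maximal plateau's node sequence is not a maximal RPC, which is exactly what the corollary claims (``not necessarily'' being an existential failure statement, so a single witness suffices).

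It is worth recalling, as already noted in the text preceding Proposition \ref{thm:multi_plateau}, that the converse direction does hold: a maximal RPC is always a maximal plateau, since reciprocal pointers force equality of via-path costs between neighbors (Theorem \ref{thm:rp} plus Theorem \ref{thm:via_path_decrease_cost}). The asymmetry being exploited here is that the plateau condition (an edge of $G$ between nodes of equal via-path cost) is strictly weaker than the RPC condition (a reciprocal back-/forward-pointer pair), because equal cost does not force the two shortest path trees to have broken ties consistently; this is precisely why plateaus can overlap while RPCs, by Theorem \ref{thm:mrpc_disjointness}, cannot.

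I expect no real obstacle: the corollary is a direct logical consequence of the two cited results, and the only point needing a clause of care is making the quantifier explicit, which the reference to Figure \ref{fig:all_via_paths} (via Proposition \ref{thm:multi_plateau}) handles. If a fully self-contained argument were wanted instead of invoking Proposition \ref{thm:multi_plateau}, I would inspect Figure \ref{fig:all_via_paths} directly: exhibit the two cost-$11$ source-to-target paths, verify consecutive nodes along each have equal via-path cost and that neither extends (so each is a maximal plateau), then note they overlap in a node that, by the reciprocal-pointer structure drawn in the figure, sits in a single maximal RPC, so at least one of the two plateau node sequences fails to be a maximal RPC.
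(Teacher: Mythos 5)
Your argument is correct and essentially matches the paper's: both rest on Proposition~\ref{thm:multi_plateau} supplying two overlapping maximal plateaus and on the fact that maximal RPCs cannot overlap, the only cosmetic difference being that you invoke Theorem~\ref{thm:rpc_uniqueness} where the paper cites Theorem~\ref{thm:mrpc_disjointness} directly, and the former is an immediate consequence of the latter. The side remarks (the converse direction and the explicit figure-based witness) are consistent with the text and do not affect the core argument.
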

\begin{proof}
Since by Theorem \ref{thm:mrpc_disjointness}, any two maximal RPCs must be disjoint, then two non-disjoint maximal plateaus could not both be maximal RPCs.
\end{proof}

\noindent Like a via-path, a maximal plateau is an intrinsic feature of the graph and the choice of source and target nodes, but a maximal reciprocal pointer chain is extrinsic---it depends on $G_\beta$ and $G_\phi$, which are not necessarily unique (deterministic) given a graph $G$, source node $s$, and target node $t$.

Dissecting the disjoint plateau algorithm, it becomes clear that each plateau identified by the algorithm corresponds to the cascading via-path, $\tilde{p}_{s,v,t}$ for the node $v$ that was first added to the plateau.
Every node $u$ in the plateau is also in $\tilde{p}_{s,v,t}$.
But as a consequence of Proposition \ref{thm:multi_plateau} and Corollary \ref{thm:plateau_not_rpc}, a node $u$ can be in this plateau and it be the case that $\tilde{p}_{s,u,t} \ne \tilde{p}_{s,v,t}$.
In this case, a disjoint plateau may never be created that corresponds to CVP $\tilde{p}_{s,u,t}$.
Thus, unlike the set of reciprocal pointer chains, the set of disjoint plateaus is not guaranteed to enumerate all of the CVPs, $\Lambda = \{\tilde{p}_{s,v,t} | v \in V\}$, encoded in $G_\beta$ and $G_\phi$---enumerating only a subset in cases where it fails to enumerate the entire set.

\subsection{The $k$ shortest paths}\label{sec:ksp_theory}

By the definition of a via-path, $p_{s,v,t}$, for a node $v$, for any path $q$ such that $c(q) < c(p_{s,v,t})$, $v \notin q$.
From this definition it is possible to derive some interesting connections between via-paths and the $k$ shortest paths, including a way to efficiently reduce graph $G$ to a much smaller graph $G^\prime$ that yields the same $k$ shortest paths as $G$.

\begin{lem}\label{thm:kshortestpaths1}
If the $k$th shortest path, $p^{(k)}_{s,t}$, contains any node, $v$, such that $\forall i < k, v \notin p^{(i)}_{s,t}$, then $p^{(k)}_{s,t}$ is a via-path.
\end{lem}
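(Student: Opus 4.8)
The plan is to pin $p^{(k)}_{s,t}$ against the via-path $p_{s,v,t}$, i.e.\ the minimum-cost $s$--$t$ path that passes through $v$, and show the two have equal cost. Once that is established, $p^{(k)}_{s,t}$ is itself a minimum-cost $s$--$t$ path through $v$, hence a via-path with via-node $v$ (in the same loose sense used elsewhere in the paper, where any $s$--$t$ path of cost $c(\tilde p_{s,t})$ is taken to \emph{be} $\tilde p_{s,t}$).

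First I would record the trivial direction: since $p^{(k)}_{s,t}$ passes through $v$ by hypothesis and $p_{s,v,t}$ is by definition the cheapest $s$--$t$ path through $v$, we have $c(p_{s,v,t}) \le c(p^{(k)}_{s,t})$ (Theorem~\ref{thm:via_path_decrease_cost} applied with $v$ on $p^{(k)}_{s,t}$ gives the same inequality). Then I would rule out strict inequality by contradiction. Suppose $c(p_{s,v,t}) < c(p^{(k)}_{s,t})$. By the defining property of the list of $k$ shortest $s$--$t$ paths, every $s$--$t$ path whose cost is strictly below $c(p^{(k)}_{s,t})$ already appears among $p^{(1)}_{s,t},\dots,p^{(k-1)}_{s,t}$; hence $p_{s,v,t} = p^{(i)}_{s,t}$ for some $i < k$. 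But $v \in p_{s,v,t}$, so $v \in p^{(i)}_{s,t}$, contradicting the hypothesis that $v \notin p^{(i)}_{s,t}$ for all $i < k$. Therefore $c(p_{s,v,t}) = c(p^{(k)}_{s,t})$, and $p^{(k)}_{s,t}$ attains the minimum cost over all $s$--$t$ paths through $v$, i.e.\ it is a via-path.

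The one place that needs care---and the ``main obstacle'' such as it is---is the step invoking the ``defining property'' of the $k$-shortest-paths list: it must be stated precisely how the enumeration $p^{(1)}_{s,t},p^{(2)}_{s,t},\dots$ handles ties, so that one can legitimately conclude that a path strictly cheaper than $p^{(k)}_{s,t}$ must be one of the first $k-1$ entries (all that is actually needed is that the first $k$ entries are $k$ \emph{distinct} $s$--$t$ paths and that no path outside the list costs less than $p^{(k)}_{s,t}$). If the intended object is instead the $k$ shortest \emph{loopless} paths, one extra remark is required, since $p_{s,v,t}$ need not be loopless (Proposition~\ref{thm:via_path_loop}): run the same contradiction argument with a cheapest \emph{loopless} $s$--$t$ path through $v$ in place of $p_{s,v,t}$. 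No further machinery beyond the definition of a via-path and this bookkeeping is needed.
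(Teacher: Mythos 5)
Your proposal is correct and follows essentially the same route as the paper's own (very terse) proof: any $s$--$t$ path strictly cheaper than $p^{(k)}_{s,t}$ must occur among the first $k-1$ paths, none of which contain $v$, so $p^{(k)}_{s,t}$ is a minimum-cost $s$--$t$ path through $v$ and hence a via-path. Your added remarks on tie-handling and the loopless variant are sound bookkeeping that the paper leaves implicit, but they do not change the argument.
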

\begin{proof}
Because $\forall i < k, v \notin p^{(i)}_{s,t}$ and $\forall i > k, p^{(i)}_{s,t} \ge p^{(k)}_{s,t}$, there can be no shorter path than $p^{(k)}_{s,t}$ that contains node $v$.
\end{proof}

\noindent The contrapositive of Lemma \ref{thm:kshortestpaths1} is that if the $k$th shortest path is not a via-path, then it involves only nodes present in the $(k-1)$ shortest paths.

\begin{lem}\label{thm:kshortestpaths3}
All paths shorter than a given CVP, $\tilde{p}_{s,v,t}$, include just those nodes in the set of CVPs shorter than $\tilde{p}_{s,v,t}$.
\end{lem}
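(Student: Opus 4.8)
The plan is to argue node by node. Fix the CVP $\tilde p_{s,v,t}$, and let $q$ be any $s$--$t$ path with $c(q) < c(\tilde p_{s,v,t})$. I want to show that every node lying on $q$ also lies on some CVP whose cost is strictly below $c(\tilde p_{s,v,t})$; since, conversely, every such CVP is itself an $s$--$t$ path shorter than $\tilde p_{s,v,t}$, this will pin down the node set appearing on short paths as exactly the node set appearing on short CVPs (the ``just those nodes'' in the statement).

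For the main (forward) direction, take an arbitrary node $w \in q$. Because $q$ passes through $w$, it is in particular an $s$--$t$ path through $w$, so by the definition of a via-path $c(p_{s,w,t}) \le c(q)$. Now the cascading via-path $\tilde p_{s,w,t}$ is a particular tie-broken choice of best via-path for $w$, read off the two shortest path trees, so by optimal substructure (Theorem \ref{thm:opt_sub}) and the definition of the SPTs, $c(\tilde p_{s,w,t}) = c(\tilde p_{s,w}) + c(\tilde p_{w,t}) = c(p_{s,w}) + c(p_{w,t}) = c(p_{s,w,t})$. Chaining the inequalities gives $c(\tilde p_{s,w,t}) \le c(q) < c(\tilde p_{s,v,t})$, so $\tilde p_{s,w,t}$ is a CVP strictly shorter than $\tilde p_{s,v,t}$, and $w$ lies on it since $w$ is its via-node. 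As $w$ was arbitrary, every node of $q$ — and hence, ranging over all admissible $q$, every node on any path shorter than $\tilde p_{s,v,t}$ — belongs to the union of the node sets of the CVPs shorter than $\tilde p_{s,v,t}$. The reverse inclusion is immediate: any CVP $\tilde p_{s,u,t}$ is an $s$--$t$ path, so if $c(\tilde p_{s,u,t}) < c(\tilde p_{s,v,t})$ it is one of the short paths, and its nodes are trivially among the nodes on short paths.

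I do not expect a genuine obstacle here; the argument is short once the right object is named. The one point that must be stated with care is the identity $c(\tilde p_{s,w,t}) = c(p_{s,w,t})$ — that a cascading via-path attains the minimum via-path cost for its via-node — since this is precisely what licenses replacing the generic via-path $p_{s,w,t}$ by the \emph{cascading} via-path $\tilde p_{s,w,t}$ in the chain of inequalities, which is what makes the conclusion a statement about CVPs rather than about arbitrary via-paths. The degenerate cases $w = s$ and $w = t$ need no separate treatment, since $p_{s,s,t}$ and $p_{s,t,t}$ are just shortest $s$--$t$ paths and already satisfy the bound because $c(q)$ is at least the $s$--$t$ distance.
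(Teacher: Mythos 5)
Your proof is correct and follows essentially the same route as the paper: the paper's proof cites Lemma~\ref{thm:kshortestpaths1} for the fact that any node on a shorter path lies on a via-path of at most that cost, and then uses exactly your key identity --- that the cascading via-path of a node attains the minimum via-path cost for that node --- to pass from via-paths to CVPs. You simply inline the content of Lemma~\ref{thm:kshortestpaths1} and spell out the cost identity and the trivial reverse inclusion, which the paper leaves implicit.
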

\begin{proof}
By Theorem \ref{thm:kshortestpaths1}, all paths shorter than a path, $p$, include only those nodes in the set of via-paths shorter than $p$.
Since each node participates in a cascading via-path with the same cost as any via-path it participates in, the set of cascading via-paths with a cost below a particular cost, $c$, contains the same nodes as the set of all via-paths with cost below $c$.
\end{proof}

\begin{lem}\label{thm:shorter_rpcs}
All paths shorter than a given CVP, $\tilde{p}_{s,v,t}$, include just those nodes in the set of RPCs associated with the CVPs shorter than $\tilde{p}_{s,v,t}$.
\end{lem}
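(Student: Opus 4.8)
The plan is to deduce Lemma~\ref{thm:shorter_rpcs} from Lemma~\ref{thm:kshortestpaths3} by establishing a single set equality: the set of nodes that appear in the CVPs shorter than $\tilde{p}_{s,v,t}$ coincides with the set of nodes that appear in the (maximal) RPCs associated with those CVPs. Granting this, Lemma~\ref{thm:kshortestpaths3} --- which already bounds the node set of any path shorter than $\tilde{p}_{s,v,t}$ by the node set of the CVPs shorter than $\tilde{p}_{s,v,t}$ --- immediately yields the claim, with ``CVPs'' replaced by their associated ``RPCs.''

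To prove the equality I would argue both inclusions. For ``nodes in short CVPs $\subseteq$ nodes in the associated RPCs,'' fix a node $u$ lying on a CVP $\tilde{p}_{s,w,t}$ with $c(\tilde{p}_{s,w,t}) < c(\tilde{p}_{s,v,t})$. Since $u$ lies on $\tilde{p}_{s,w,t}$, Theorem~\ref{thm:via_path_decrease_cost} gives $c(\tilde{p}_{s,u,t}) = c(p_{s,u,t}) \le c(p_{s,w,t}) = c(\tilde{p}_{s,w,t}) < c(\tilde{p}_{s,v,t})$, so $u$'s own CVP is strictly shorter than $\tilde{p}_{s,v,t}$. By Theorem~\ref{thm:rpc_uniqueness} (and Corollary~\ref{rpc_disjoint}) $u$ belongs to exactly one maximal RPC $R$, and by Theorems~\ref{thm:rpc} and~\ref{thm:rpc_cvp} every node of $R$ has CVP $\tilde{p}_{s,u,t}$; hence $R$ is precisely the RPC associated with a CVP shorter than $\tilde{p}_{s,v,t}$ and it contains $u$.

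For the reverse inclusion, take a node $u$ in a maximal RPC $R$ whose associated CVP $q$ satisfies $c(q) < c(\tilde{p}_{s,v,t})$. By Theorem~\ref{thm:rpc} (equivalently Theorem~\ref{thm:rpc_cvp}) the CVP of $u$ is exactly $q$, and a node always lies on its own via-path, so $u \in q$, a CVP shorter than $\tilde{p}_{s,v,t}$. The two inclusions give the set equality, and composing it with Lemma~\ref{thm:kshortestpaths3} completes the argument.

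The only subtlety --- the ``main obstacle,'' though a minor one --- is that a node appearing \emph{inside} a short CVP need not have that CVP as its \emph{own} CVP; Theorem~\ref{thm:via_path_decrease_cost} is exactly what is needed to control the cost of the node's own via-path, after which the one-to-one RPC/CVP correspondence (Theorems~\ref{thm:rpc_uniqueness} and~\ref{thm:rpc_cvp}) places the node in the correct RPC. Everything else is routine bookkeeping about set membership.
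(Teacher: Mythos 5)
Your proof is correct. The paper, however, argues directly rather than by reduction to Lemma~\ref{thm:kshortestpaths3}: for any node $u$ on a path shorter than $\tilde{p}_{s,v,t}$, it notes that $c(\tilde{p}_{s,u,t}) < c(\tilde{p}_{s,v,t})$ (the via-path through $u$ being the shortest path containing $u$), and that by Theorems~\ref{thm:rpc_uniqueness} and~\ref{thm:rpc} the unique maximal RPC containing $u$ is associated with $u$'s own CVP, which is therefore one of the CVPs shorter than $\tilde{p}_{s,v,t}$. Your route---Lemma~\ref{thm:kshortestpaths3} composed with an identification of the node set of the shorter CVPs with the node set of their RPCs---rests on exactly the same two ingredients (the cost bound, which you obtain via Theorem~\ref{thm:via_path_decrease_cost} applied to a node of a shorter CVP, and the one-to-one RPC/CVP correspondence), so it is essentially a repackaging: the detour through Lemma~\ref{thm:kshortestpaths3} is harmless but buys nothing, since the same cost argument that handles a node of a shorter CVP handles a node of an arbitrary shorter path directly. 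Also note that the reverse inclusion in your set equality (every node of an RPC of a shorter CVP lies on that CVP) is not needed: the lemma only asserts that shorter paths contain no nodes outside the RPC set, so the forward inclusion suffices; your converse is correct but superfluous.
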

\begin{proof}
By Theorem \ref{thm:rpc_uniqueness}, every node, $u$, belongs to some maximal RPC. By Theorem \ref{thm:rpc}, this RPC is associated with a CVP $\tilde{p}_{s,u,t}$, with cost $c(\tilde{p}_{s,u,t})$. If $u$ is in a path shorter than $\tilde{p}_{s,v,t}$, then it must be the case that $c(\tilde{p}_{s,u,t}) < c(\tilde{p}_{s,v,t})$. Thus all nodes in paths shorter than $\tilde{p}_{s,v,t}$ are in an RPC associated with a CVP shorter than $\tilde{p}_{s,v,t}$.
\end{proof}

Let us define a graph $G^\prime = (V^\prime,E^\prime)$ such that $V^\prime$ contains every node in the RPC segments of the $k$ shortest CVPs, and $E^\prime = \{(u,v) | u \in V^\prime \land v \in V^\prime \land (u,v) \in E\}$.

\begin{thm}\label{thm:ksp_gprime}
Assuming $c(p_{s,t}^{(k)}) \ne c(p_{s,t}^{(k+1)})$---in other words, the $k$ shortest paths of $G$ are unique---the $k$ shortest paths of $G^\prime = (V^\prime,E^\prime)$ are the same as the $k$ shortest paths of $G$.
\end{thm}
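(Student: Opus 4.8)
The plan is to use that $G' = (V', E')$ is nothing more than the subgraph of $G$ induced on the vertex set $V'$: every edge of $E$ between two vertices of $V'$ is kept, so any $s$--$t$ path of $G$ all of whose vertices lie in $V'$ is automatically an $s$--$t$ path of $G'$ of the same cost, and conversely every $s$--$t$ path of $G'$ is an $s$--$t$ path of $G$. Granting this, the theorem reduces to a single containment: that each of the $k$ shortest paths $p^{(1)}_{s,t}, \dots, p^{(k)}_{s,t}$ of $G$ uses only vertices of $V'$. Indeed, once those $k$ paths are seen to be paths of $G'$, the set of $s$--$t$ paths of $G'$ is a subset of that of $G$ containing the $k$ cheapest ones, while the hypothesis $c(p^{(k)}_{s,t}) \neq c(p^{(k+1)}_{s,t})$ forces every remaining $s$--$t$ path of $G$, hence of $G'$, to cost strictly more than $c(p^{(k)}_{s,t})$; so $p^{(1)}_{s,t}, \dots, p^{(k)}_{s,t}$ are exactly the $k$ shortest paths of $G'$.

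For the containment I would invoke Lemma~\ref{thm:shorter_rpcs}. If $|\Lambda| \le k$ then $V' = V$ by Corollary~\ref{rpc_disjoint} and there is nothing to prove, so assume $|\Lambda| > k$ and let $\tilde{p}_{s,w,t}$ be a CVP of the $(k+1)$-st smallest cost. The CVPs strictly cheaper than $\tilde{p}_{s,w,t}$ are among the $k$ shortest CVPs, so their RPC segments lie in $V'$; hence, by Lemma~\ref{thm:shorter_rpcs}, it suffices to know that each $p^{(j)}_{s,t}$ with $j \le k$ is shorter than $\tilde{p}_{s,w,t}$, i.e.\ that $c(p^{(k)}_{s,t}) < c(\tilde{p}_{s,w,t})$. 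To obtain this I would count: every CVP is an $s$--$t$ path, distinct CVPs are distinct $s$--$t$ paths, and the hypothesis says there are exactly $k$ distinct $s$--$t$ paths of cost at most $c(p^{(k)}_{s,t})$, namely $p^{(1)}_{s,t}, \dots, p^{(k)}_{s,t}$. Were $c(\tilde{p}_{s,w,t}) \le c(p^{(k)}_{s,t})$, then $\tilde{p}_{s,w,t}$ and the $k$ CVPs ranked cheaper than it would be $k+1$ distinct $s$--$t$ paths of cost at most $c(p^{(k)}_{s,t})$, a contradiction.

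I expect the only genuinely delicate point to be this counting step, because a via-path, and therefore a CVP, may contain a loop (Proposition~\ref{thm:via_path_loop}), so one must fix what ``the $k$ shortest paths'' are: reading them as $s$--$t$ walks, the argument is clean provided no zero-cost directed cycle lies on an $s$--$t$ walk (so that ``the $k$-th shortest walk'' is well defined, which the hypothesis already tacitly assumes), and the ties that can arise among CVPs, and among $s$--$t$ paths at the threshold cost $c(p^{(k)}_{s,t})$, are precisely what the assumption $c(p^{(k)}_{s,t}) \neq c(p^{(k+1)}_{s,t})$ is there to neutralize. Everything else is routine: the induced-subgraph accounting above, and --- if one prefers to argue the containment node-by-node rather than through Lemma~\ref{thm:shorter_rpcs} --- the identification, via Theorems~\ref{thm:rpc_uniqueness}, \ref{thm:rpc}, and~\ref{thm:rpc_cvp}, of the unique maximal RPC containing a node with the RPC segment of that node's CVP.
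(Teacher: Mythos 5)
Your proposal is correct and follows essentially the same route as the paper's proof: it reduces the claim to showing the $k$ shortest paths of $G$ use only nodes of $V^\prime$, obtains this from Lemma~\ref{thm:shorter_rpcs} applied with the $(k+1)$-st cheapest CVP as the cost threshold, and finishes with the induced-subgraph containment. The only difference is that you spell out the counting step (that $c(p^{(k)}_{s,t})$ is strictly below the $(k+1)$-st CVP cost) and the $|\Lambda| \le k$ edge case, which the paper asserts more tersely.
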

\begin{proof}
A subgraph of graph $G$ cannot contain any paths that are not in $G$. 
Therefore, the $k$th shortest path of a subgraph of $G$ cannot be shorter than the $k$th shortest path of $G$.
Let $\tilde{p}_{s,v,t}^{(k)}$ be the $k$th shortest CVP in $G$.
There are at least $k$ shorter paths than $\tilde{p}_{s,v,t}^{(k+1)}$ in $G$.
By Lemma \ref{thm:shorter_rpcs}, the $k$ shortest paths of $G$ contain no nodes other than those in the RPCs associated with the $k$ shortest CVPs.
Node set $V^\prime$ contains all of the nodes in the RPCs associated with the $k$ shortest CVPs.
Edge set $E^\prime$ contains all of the edges in $E$ between nodes in $V^\prime$.
Therefore, any path involving only nodes in the RPCs associated with the $k$ shortest CVPs of $G$ is in $G^\prime$.
Thus, the $k$ shortest paths of $G$ are also in $G^\prime$, and must be the $k$ shortest paths of $G^\prime$. 
\end{proof}

\noindent It is not difficult to show that if the $k$ shortest paths of $G$ is not a unique set, due to $c(p_{s,t}^{(k)}) = c(p_{s,t}^{(k+1)})$, then the $k$ shortest paths of $G^\prime$ simply exhibit the same non-uniqueness.

While it is clear that computing the $k$ shortest paths of a subgraph $G^\prime$ such that $|V^\prime| \ll |V|$ and $|E^\prime| \ll |E|$ takes much less time than computing the $k$ shortest paths of $G$, it needs to be determined if this reduction is enough to offset the time required to compute $G^\prime$.
Unfortunately, this is a question that requires more space to fully answer than we can devote in this paper.
$G^\prime$ can be computed in $O(|E| + |V| \log |V|)$ time.
The lazy version of Eppstein's algorithm \cite{jimenez2003lazy} for computing the $k$ shortest paths operates in $O(|E| + |V| \log |V| + k \log k)$.
Because both algorithms have the same asymptotic dependency on the size of $G$, it is not clear from asymptotic analysis alone that computing and substituting $G^\prime$ would yield a net reduction in the time it takes to compute the $k$ shortest paths of $G$.

However, computing $G^\prime$ could yield a significant reduction in the overall time required to compute the $k$ shortest \emph{loopless} paths of $G$.
Yen's algorithm \cite{yen1971finding} for computing the $k$ shortest loopless paths operates in $O(k |V| (|E| + |V| \log |V|)$ time.
Computing $G^\prime$ requires an order of magnitude less computation than Yen's algorithm.
Additionally, a reduction in the sizes of the sets of nodes and edges of $G$ by a factor of $N$ yields approximately an $N^2$ reduction in computation for Yen's algorithm.
We should note that because CVPs are not guaranteed to be loopless, computing the RPCs of the $k$ shortest CVPs no longer guarantees $k$ loopless paths in $G^\prime$.
This can be accounted for by computing the RPCs of the $j \ge k$ shortest CVPs, where the $j$ shortest CVPs include $k$ loopless CVPs.
Even though it is not clear how much smaller $G^\prime$ can be made compared to $G$ without considering the specific characteristics of $G$ and the value of $k$, overall this analysis suggests the use of RPCs has the potential to substantially reduce the time required to compute the $k$ shortest loopless paths.

\subsection{Locally optimal paths}\label{sec:vnf_theory}

%

A common motivation for computing multiple paths between two nodes is to generate a set of \emph{candidates} for solving a single best-path problem, when much of the optimization criteria is unknown or unspecified.
However, typical objectives for computing multiple paths, such as the $k$ shortest paths, may not be best-suited to efficiently optimizing a diverse candidate set in cases such as these.
A conceptually different approach that may be more useful is to compute a set of \emph{locally optimal} paths, according to the known criteria.
In general, a solution is considered locally optimal if it is optimal within a small, continuous \emph{neighborhood} surrounding it within the space of all solutions.
Reducing the set of all paths to just those that are locally optimal eliminates potentially redundant candidates---within clusters of similar paths, only the best path according to the specified criteria is chosen to be a candidate.

To specify a concrete notion of similarity for paths, we use the Jaccard similarity,
\begin{equation}
J(p,q) = \frac{|p \cap q|}{|p \cup q|}.
\end{equation}
\noindent From this, the Jaccard distance, which we denote $\delta(p,q)$, is defined as,
\begin{equation}
\delta(p,q) = 1 - J(p,q).
\end{equation}

\noindent This allows us to define a space of paths in which to look for locally optimal solutions.
In the context of paths, however, it is not clear how to precisely define local optimality because the space of solutions is discrete rather than continuous, consisting of isolated points.
Shiller, et al. \cite{shiller2004computing} consider a path to be locally optimal if any small perturbation to the path maintains or increases its length/cost.
However, this definition is not of practical use without defining what constitutes a \emph{small} perturbation for a path.

The primary problem is that the discrete nature of paths suggests an arbitrary cut-off for what is and is not locally optimal.
Instead, we would like to realize a fuzzier notion of local optimality that is real-valued, defined on the interval $[0,1]$, indicating the \emph{degree} to which a path is locally optimal.
We propose to accomplish this by measuring the \emph{size} of the neighborhood, $\mathcal{N}(p)$, around path $p$ in the space of all paths, in which $p$ is the optimal path.

\begin{align}\label{eq:neighborhood}
\mathcal{N}(p) = \min_q \ \ &\delta(p,q),\\
\text{subject to} \ \ &c(p) > c(q). \nonumber
\end{align}
\noindent We denote the closest such path as,
\begin{align}\label{eq:neighborhood2}
\hat{q} = \argmin_q \ \ &\delta(p,q),\\
\text{subject to} \ \ &c(p) > c(q). \nonumber
\end{align}

\noindent For each path, $p$, the smallest possible value for $\mathcal{N}(p)$ is the distance to its closest neighbor, so $\mathcal{N}(p) > 0$. Additionally, $\mathcal{N}(p) \le 1$, with equality when $p$ is the shortest path in the graph.

This definition of local optimality gives us a framework for choosing the $k$ most locally optimal paths, which as we stated can be useful for generating candidate solutions in under-specified path optimization problems.
However, computing the Jaccard distance between every pair of paths can be computationally expensive. 
In this case, the following theorems provide an alternative framework that can be implemented much more efficiently.

\begin{thm}[Path similarity bound theorem]\label{thm:similarity_bound}
For paths $p$ and $q$, if $c(p) > c(q)$, then $\delta(p,q) \ge \omega(p)$.
\end{thm}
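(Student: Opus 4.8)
The plan is to show that the hypothesis $c(p)>c(q)$ forces $q$ to avoid \emph{every} node of the reciprocal pointer chain of the CVP $p$, and then to observe that this containment alone pins down the Jaccard similarity. Throughout I identify a path with its set of nodes and write $n(\cdot)$ for the number of nodes, so that for $p=\tilde p_{s,v,t}$ with associated RPC $R=\tilde p_{h,l}$ we have $\omega(p)=n(R)/n(p)$, where $n(R)=|\tilde p_{h,l}|+1$ and $n(p)=|\tilde p_{s,v,t}|+1$.

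The key step --- and the only place the CVP/RPC theory is used --- is the claim that no node of $R$ lies on $q$. Let $u\in R$. Since all nodes of a maximal RPC are connected to one another by a reciprocal pointer chain, Theorem~\ref{thm:rpc} gives $\tilde p_{s,u,t}=\tilde p_{s,v,t}=p$, hence $c(p_{s,u,t})=c(p)>c(q)$. If $u$ were on $q$, then $q$ would be an $s$--$t$ path through $u$ of cost strictly below $c(p_{s,u,t})$, contradicting the defining minimality of the via-path $p_{s,u,t}$ as the shortest $s$--$t$ path through $u$. So $u\notin q$, proving the claim. (As a byproduct $s,t\notin R$ whenever such a $q$ exists; equivalently, the statement is vacuous when $p$ is itself a shortest $s$--$t$ path, since then no $q$ with $c(q)<c(p)$ exists.)

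Given the claim, $p\cap q\subseteq p\setminus R$, so $|p\cap q|\le n(p)-n(R)$, while $p\cup q\supseteq p$ gives $|p\cup q|\ge n(p)$; hence
\[
J(p,q)=\frac{|p\cap q|}{|p\cup q|}\le\frac{n(p)-n(R)}{n(p)}=1-\omega(p),
\]
and therefore $\delta(p,q)=1-J(p,q)\ge\omega(p)$, as desired. The remaining work is purely bookkeeping; the subtle points to get right are the key claim above and the fact that, to line up with the definition of $\omega$, the Jaccard similarity here must be taken over \emph{nodes} (with an edge-based similarity the same scheme would go through only after redefining $\omega$ in terms of edge counts).
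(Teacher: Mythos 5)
Your proof is correct and follows essentially the same route as the paper's: show that every node of the RPC of $p$ has CVP cost equal to $c(p)$ and hence cannot lie on the cheaper path $q$, then combine $|p\cup q|\ge |p|$ with the disjointness of the RPC from $q$ to bound the Jaccard similarity by $1-\omega(p)$. You merely spell out a bit more explicitly the via-path-minimality contradiction and the node-counting conventions that the paper leaves implicit.
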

\begin{proof}
The via-node fraction, $\omega(p)$, has the form $\frac{|r_p|}{|p|}$, where $r_p$ is the sub-path of the RPC belonging to path $p$. Therefore, to show that $\delta(p,q) \ge \omega(p)$, we need to show that, $1 - \frac{|p \cap q|}{|p \cup q|} \ge \frac{|r_{p}|}{|p|}$. Because $|p| \le |p \cup q|$, it follows that $1 - \frac{|p \cap q|}{|p \cup q|} \ge 1 - \frac{|p \cap q|}{|p|}$. Thus we can prove the theorem by proving instead that $1 - \frac{|p \cap q|}{|p|} \ge \frac{|r_{p}|}{|p|}$. Multiplying both sides by $|p|$ and rearranging, we get $|p| \ge |r_{p}| + |p \cap q|$. Because $r_{p} \subseteq p$, we can prove this is true if we can show that $|r_{p} \cap q| = 0$. For any element $w \in r_{p}$, $c(\tilde{p}_{s,w,t}) = c(p)$. But since $c(q) < c(p)$, we know that $w \notin q$. This implies that $|r_{p} \cap q| = 0$, which proves the theorem.
\end{proof}

Theorem \ref{thm:similarity_bound} says that a path's via-node fraction places an upper limit on how similar a shorter path can be to it, according to Jaccard similarity. A corollary of this is the following:

\begin{cor}\label{thm:neighborhood}
$\mathcal{N}(p) \ge \omega(p)$.
\end{cor}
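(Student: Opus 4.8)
The plan is to read off the corollary directly from Theorem~\ref{thm:similarity_bound} by applying it pointwise across the minimization in~\eqref{eq:neighborhood}. First I would recall that $\mathcal{N}(p) = \min_q \delta(p,q)$ where the minimum ranges over exactly those paths $q$ with $c(q) < c(p)$. For any single such feasible $q$, Theorem~\ref{thm:similarity_bound} (with the roles of $p$ and $q$ as stated there, since $c(p) > c(q)$) gives $\delta(p,q) \ge \omega(p)$. Thus $\omega(p)$ is a lower bound for every term appearing in the minimization defining $\mathcal{N}(p)$, and since a quantity that lower-bounds every element of a set also lower-bounds the minimum of that set, we conclude $\mathcal{N}(p) \ge \omega(p)$.

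The only point requiring a moment of care is the degenerate case in which the feasible set of~\eqref{eq:neighborhood} is empty, i.e.\ when $p$ is a shortest path in the graph and no $q$ with $c(q) < c(p)$ exists. In that situation, as noted in the text immediately preceding the corollary, $\mathcal{N}(p) = 1$. Since the via-node fraction satisfies $\omega(p) = \tfrac{|r_p|+1}{|p|+1} \le 1$ (because $r_p \subseteq p$), the inequality $\mathcal{N}(p) \ge \omega(p)$ still holds, so the corollary is valid in all cases.

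I do not anticipate a real obstacle here: the content of the corollary is carried entirely by Theorem~\ref{thm:similarity_bound}, and the remaining step is the elementary observation that a uniform lower bound on a family of values bounds their infimum. If anything, the "hard part" is purely bookkeeping --- being explicit that the constraint $c(p) > c(q)$ in the definition of $\mathcal{N}(p)$ is precisely the hypothesis needed to invoke the path similarity bound theorem, and handling the empty-feasible-set boundary case as above.
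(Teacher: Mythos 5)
Your argument is correct and matches the paper's intent exactly: the corollary is presented as an immediate consequence of Theorem~\ref{thm:similarity_bound}, obtained by applying that bound to every feasible $q$ in the minimization defining $\mathcal{N}(p)$. Your extra handling of the empty-feasible-set case (where $\mathcal{N}(p)=1$) is a harmless and reasonable piece of bookkeeping that the paper leaves implicit.
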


This provides a lower bound on the local optimality of each path, based on via-node fraction, $\omega(p)$, that is much more efficient to compute because it does not require distance comparisons between every pair of paths.
Since only CVPs have $\omega(p) > 0$, the maximal minimum local optimality is achieved with this bound by selecting the $k$ CVPs with the greatest via-node fraction. 

\subsection{Path set diversity}

In the previous section, we defined local optimality as a measurable quantity, and considered it as an objective for producing a set of paths.
In this section, we consider a related objective to this---the \emph{diversity} of the set of paths as a whole.

\begin{equation}\label{eq:diversity}
\eta(\mathcal{P}) = \frac{2}{|\mathcal{P}|(|\mathcal{P}| - 1)}\sum_{i < j}^{} \delta(\mathcal{P}_i,\mathcal{P}_j),
\end{equation}
\noindent which is the mean Jaccard distance between all pairs of paths in the set. Note that because $0 \le \delta(\mathcal{P}_i,\mathcal{P}_j) \le 1$, we have that $0 \le \eta(\mathcal{P}) \le 1$, as well.

One way to select a set of good paths with a large amount of diversity is to choose the $k$ shortest disjoint paths, for which $\eta(\mathcal{P})$ is maximized, with $\eta(\mathcal{P}) = 1$. However, optimizing this objective is computationally complex and in many applications, disjointness can be too severe of a restriction. Instead, we show that using CVPs, it is possible to simultaneously favor short paths and diversity without this restriction.

\begin{lem}\label{lem:cvp_distance_bound}
For CVPs, $p$ and $q$, $\delta(p,q) \ge \min(\omega(p),\omega(q))$.
\end{lem}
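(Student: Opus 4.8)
The plan is to reduce to Theorem~\ref{thm:similarity_bound} whenever the two CVPs have different costs, and to handle the equal-cost case separately using the disjointness of reciprocal pointer chains (Theorem~\ref{thm:mrpc_disjointness}). Since the Jaccard distance $\delta$ is symmetric and the claimed bound $\min(\omega(p),\omega(q))$ is symmetric in $p$ and $q$, I would first assume without loss of generality that $c(p) \ge c(q)$.

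If $c(p) > c(q)$, the argument is immediate: Theorem~\ref{thm:similarity_bound} gives $\delta(p,q) \ge \omega(p)$, and trivially $\omega(p) \ge \min(\omega(p),\omega(q))$, so $\delta(p,q) \ge \min(\omega(p),\omega(q))$. (The case $c(q) > c(p)$ is the same statement with the roles of $p$ and $q$ exchanged, which is legitimate since $\delta(p,q) = \delta(q,p)$.) So everything comes down to the case $c(p) = c(q)$ with $p \ne q$, which is where Theorem~\ref{thm:similarity_bound} cannot simply be invoked: its proof uses the \emph{strict} inequality $c(q) < c(p)$ to conclude that no node of $r_p$ lies on $q$, and when $c(p) = c(q)$ a node of $r_p$ can genuinely appear on the path $q$ (it merely cannot appear on $r_q$, by disjointness).

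For the equal-cost case, the plan is to use that $p$ and $q$ are distinct CVPs, so by Theorem~\ref{thm:rpc_cvp} they correspond to distinct maximal RPCs $r_p$ and $r_q$, and by Theorem~\ref{thm:mrpc_disjointness} these RPCs are node-disjoint. Since $r_p \subseteq p$, $r_q \subseteq q$, and $r_p \cap r_q = \emptyset$, both RPCs sit inside $p \cup q$ without overlapping, so $|p \cup q| \ge |r_p| + |r_q|$; combining this with $|p \cup q| \ge |p|$ and $|p \cup q| \ge |q|$, and expanding $\delta(p,q) = 1 - |p \cap q|/|p \cup q|$ (equivalently $\delta(p,q) = 2 - (|p| + |q|)/|p \cup q|$ via inclusion--exclusion), I would derive $\delta(p,q) \ge \min(|r_p|/|p|,\, |r_q|/|q|) = \min(\omega(p),\omega(q))$ by a counting argument on these quantities.

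I expect the main obstacle to be precisely this counting step in the equal-cost case. Because a node of $r_p$ may lie on $q$ and a node of $r_q$ may lie on $p$, one cannot reuse the clean bound $|p \cap q| \le |p| - |r_p|$ from the proof of Theorem~\ref{thm:similarity_bound}; instead one must argue that the portions of the two paths that can genuinely overlap are essentially their non-RPC portions and track the interplay among $|r_p|$, $|p|$, $|r_q|$, $|q|$, and $|p \cup q|$ carefully. This is exactly why the bound must be stated with $\min(\omega(p),\omega(q))$ rather than with $\omega(p)$ or $\omega(q)$ individually, and getting this bookkeeping right --- including being consistent about whether $|\cdot|$ counts nodes or edges --- is the delicate part of the proof.
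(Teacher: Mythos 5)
Your reduction of the unequal-cost case to Theorem~\ref{thm:similarity_bound} is fine, but the equal-cost case --- which you yourself identify as the heart of the matter --- is left as a ``counting argument,'' and that counting argument cannot succeed from the ingredients you list. Writing $P=|p|$, $Q=|q|$, $a=|r_p|$, $b=|r_q|$, $I=|p\cap q|$, $U=|p\cup q|$, the only constraints you extract are $U=P+Q-I$, $a\le P$, $b\le Q$, and $U\ge a+b$ (from $r_p\cap r_q=\emptyset$). These do not imply $1-I/U\ge\min(a/P,\,b/Q)$: take $P=Q=10$, $a=b=5$, $I=8$, so $U=12\ge a+b$, yet $1-I/U=1/3<1/2=\min(a/P,b/Q)$. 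So pure bookkeeping on $|r_p|,|p|,|r_q|,|q|,|p\cup q|$ is not merely delicate, it is insufficient; some structural fact beyond $|r_p\cap r_q|=0$ must be injected.

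The missing ingredient is exactly the one the paper's proof supplies: from the disjointness of the maximal RPCs (Theorem~\ref{thm:mrpc_disjointness}) it argues that at least one of $|r_p\cap q|=0$ or $|r_q\cap p|=0$ holds --- i.e.\ one path's RPC avoids the \emph{entire other path}, not just the other path's RPC. With that in hand, the step isolated inside the proof of Theorem~\ref{thm:similarity_bound} (namely $|r_p\cap q|=0\Rightarrow|p|\ge|r_p|+|p\cap q|\Rightarrow\delta(p,q)\ge\omega(p)$) applies verbatim to whichever of the two paths enjoys it, yielding $\delta(p,q)\ge\omega(p)$ or $\delta(p,q)\ge\omega(q)$ and hence the minimum bound, with no case split on costs at all. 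You correctly observe that when $c(p)=c(q)$ a node of $r_p$ may genuinely lie on $q$; what your proposal lacks is any argument that this cannot happen in both directions simultaneously (or an alternative fact of comparable strength), and without it the equal-cost case remains unproved.
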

\begin{proof}
The via-node fraction, $\omega(p)$, has the form $\frac{|r_p|}{|p|}$, where $r_p$ is the sub-path of $p$ corresponding to its RPC. By Theorem \ref{thm:mrpc_disjointness}, because $r_p \ne r_q$, $|r_p \cap r_q| = 0$. From this, it follows that either $|r_p \cap q| = 0$ or $|r_q \cap p| = 0$. In the proof for Theorem \ref{thm:similarity_bound}, we proved that $\delta(p,q) \ge \omega(p)$ if $|r_{p} \cap q| = 0$. Since either $|r_p \cap q| = 0$ or $|r_q \cap p| = 0$, it follows that either $\delta(p,q) \ge \omega(p)$ or $\delta(p,q) \ge \omega(q)$, which implies that $\delta(p,q) \ge \min(\omega(p),\omega(q))$.
\end{proof}

\begin{thm}\label{thm:diversity_bound}
If $\breve{\Pi}$ is an ordered set of CVPs, sorted in descending order by via-node fraction, then
\begin{equation}\label{eq:diversity_bound}
\eta(\breve{\Pi}) \ge \frac{2}{|\breve{\Pi}|(|\breve{\Pi}| - 1)}\sum_{j=2}^{|\breve{\Pi}|} (j-1) \omega(\breve{\Pi}_j),
\end{equation}
\end{thm}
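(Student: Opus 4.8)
The plan is to reduce the bound on the \emph{average} pairwise Jaccard distance $\eta(\breve{\Pi})$ to a termwise bound on each summand $\delta(\breve{\Pi}_i,\breve{\Pi}_j)$, and then reorganize the resulting double sum. The two ingredients I would invoke are Lemma \ref{lem:cvp_distance_bound} (for CVPs $p,q$, $\delta(p,q)\ge\min(\omega(p),\omega(q))$) and the hypothesis that $\breve{\Pi}$ is sorted in \emph{descending} order of via-node fraction.

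First I would observe that, since $\omega(\breve{\Pi}_1)\ge\omega(\breve{\Pi}_2)\ge\dotsb\ge\omega(\breve{\Pi}_{|\breve{\Pi}|})$, for any pair of indices $i<j$ we have $\min(\omega(\breve{\Pi}_i),\omega(\breve{\Pi}_j))=\omega(\breve{\Pi}_j)$. Combining this with Lemma \ref{lem:cvp_distance_bound} applied to the CVPs $\breve{\Pi}_i$ and $\breve{\Pi}_j$ gives, for every $i<j$,
\begin{equation*}
\delta(\breve{\Pi}_i,\breve{\Pi}_j)\ \ge\ \omega(\breve{\Pi}_j).
\end{equation*}

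Next I would sum this inequality over all pairs $i<j$ and recognize the right-hand side as a sum in which the term $\omega(\breve{\Pi}_j)$ appears exactly once for each choice of $i\in\{1,\dots,j-1\}$, i.e.
\begin{equation*}
\sum_{i<j}\delta(\breve{\Pi}_i,\breve{\Pi}_j)\ \ge\ \sum_{i<j}\omega(\breve{\Pi}_j)\ =\ \sum_{j=2}^{|\breve{\Pi}|}(j-1)\,\omega(\breve{\Pi}_j).
\end{equation*}
Multiplying through by the positive normalizing constant $\frac{2}{|\breve{\Pi}|(|\breve{\Pi}|-1)}$ and using the definition of $\eta$ in Equation \eqref{eq:diversity} yields exactly Equation \eqref{eq:diversity_bound}.

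I do not expect a genuine obstacle here: the theorem is essentially a bookkeeping consequence of the per-pair bound. The only point requiring a little care is the index manipulation in the double sum (making sure $\omega(\breve{\Pi}_j)$ is counted with multiplicity $j-1$ and that the sum starts at $j=2$), and the fact that the substitution $\min(\omega(\breve{\Pi}_i),\omega(\breve{\Pi}_j))=\omega(\breve{\Pi}_j)$ relies crucially on the \emph{descending} ordering hypothesis — if the list were sorted ascending, the bound would instead involve $\omega(\breve{\Pi}_i)$ and the counted multiplicities would differ. One should also note implicitly that all elements of $\breve{\Pi}$ are distinct CVPs (so that Lemma \ref{lem:cvp_distance_bound} applies with $r_p\ne r_q$), which is part of what it means to be an ordered set of CVPs.
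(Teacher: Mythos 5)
Your proposal is correct and follows essentially the same route as the paper: apply Lemma \ref{lem:cvp_distance_bound} pairwise, use the descending ordering to replace $\min(\omega(\breve{\Pi}_i),\omega(\breve{\Pi}_j))$ by $\omega(\breve{\Pi}_j)$ for $i<j$, collapse the inner sum to the multiplicity factor $(j-1)$, and normalize. The only difference is cosmetic ordering of these steps, so there is nothing further to add.
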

\begin{proof}
By Lemma \ref{lem:cvp_distance_bound},
\begin{equation}\label{eq:diversity_bound2}
\eta(\breve{\Pi}) \ge \frac{2}{|\breve{\Pi}|(|\breve{\Pi}| - 1)}\sum_{j=2}^{|\breve{\Pi}|}\sum_{i=1}^{j-1} \min(\omega(\breve{\Pi}_{i}),\omega(\breve{\Pi}_{j})).
\end{equation} 
\noindent Because $\breve{\Pi}$ is sorted in descending order by via-node fraction, $\omega(\breve{\Pi}_{j-1}) \ge \omega(\breve{\Pi}_{j})$. Therefore,
\begin{equation}\label{eq:diversity_bound2a}
\eta(\breve{\Pi}) \ge \frac{2}{|\breve{\Pi}|(|\breve{\Pi}| - 1)}\sum_{j=2}^{|\breve{\Pi}|}\sum_{i=1}^{j-1} \omega(\breve{\Pi}_{j}).
\end{equation} 
\noindent Since the term in the inner sum does not depend on index $i$,
\begin{equation}\label{eq:diversity_bound3}
\sum_{i=1}^{j-1} \omega(\hat{\Pi}_j) = (j-1) \omega(\hat{\Pi}_j).
\end{equation} 
\noindent Substituting Equation \ref{eq:diversity_bound3} into Equation \ref{eq:diversity_bound2}, yields Equation \ref{eq:diversity_bound}.
\end{proof}

One simple way to use Theorem \ref{thm:diversity_bound} to create a set of $k$ paths that is simultaneously short and diverse is to first select the $l > k$ CVPs with the greatest via-node fraction.
Any subset of $k$ paths from this set must have a diversity at least as great as that of the $k$-of-$l$ CVPs with the smallest via-node fraction within this set.
Then, with a minimum guaranteed diversity, one could choose from among the $l$ CVPs, the $k$ shortest CVPs.
This approach, which simultaneously optimizes diversity and length, is more efficient than choosing the $k$ shortest disjoint paths and does not impose a potentially severe constraint such as disjointness.

%

\section{Applications}\label{sec:results}

In this section, we demonstrate the RPC method in practice by applying it to two different problems: alternative route finding in road networks and layer-boundary identification in ground-penetrating radar (GPR) data.
These applications offer insight into the types of problems to which the RPC method can be applied and how the theory can serve as a principled foundation for solving them.

\subsection{Alternative route finding in road networks}\label{sec:alternative_routes}

\begin{table}
\centering
\small
    \begin{tabular}{r p{6.5cm} r}
    \hline
    Types & Description & Speed\\ \hline
    11--15 & Primary road with limited access & 70 mph\\
    25 & Primary road without limited access (separated) & 60 mph\\
    21--24 & Primary road without limited access (unseparated) & 55 mph\\
    31--35, 38 & Secondary and connecting road & 37.5 mph\\
    41--48 & Local, neighborhood, and rural road & 22.5 mph\\
    \hline
    \end{tabular}
  \caption{The different categories of road within the Florida road network dataset and the speed we assigned to each, capturing the presumed average speed of a typical driver in congestion-free conditions, taking the presence of traffic lights and signs into account.}
  \label{tab:speed_of_travel}
\end{table}

The first application we present is the problem of alternative route finding in road networks \cite{abraham2013alternative}. 
This is an example of a candidate-set generation problem, in which a single optimal solution is desired but the optimality of the solution is subject to criteria that are either unknown or under-specified at computation time.
The goal in such a problem is to compute a set of candidates that are good according to the specified criteria, yet diverse, from which a selection can be made later based on additional criteria.
This problem arises in automated driving route suggestion by navigation assistance systems because while these systems offer fast computation of route suggestions according to objective criteria, they are often unable to account for the particular set of unknown or changing preferences that a driver may have in choosing a best route.

 We present an experiment using road network data for the state of Florida, United States \cite{demetrescu20069th} to compute a set of alternative driving routes from the city of Tampa to the city of Miami. 
The dataset consists of a weighted, undirected graph, containing 1,048,506 nodes and 1,330,551 edges.
Each edge represents a segment of road and each edge weight indicates the length (distance) of the road segment.
Road segments in this dataset are categorized into 5 different high-level categories and 24 different low-level categories, by their characteristics and purpose.

Although the dataset specifies physical distance, we want to use travel time to optimize the alternative route suggestion.
To do so, we use the fact that the dataset contains information about the types of roads involved to estimate an expected speed of travel across each segment (edge), based on its road type.
The speeds we assign to each road type are listed in Table \ref{tab:speed_of_travel}.
From these speeds, we are able to compute an expected time of travel for each edge in the graph.

To compute the set of alternative routes, we use the four-stage method outlined in Section \ref{sec:summary}.
Instead of determining an absolute ranking for the CVPs using some measure and selecting the top $k$ paths, we specify two separate criteria and a threshold for each, and select each path satisfying these two thresholds.
In terms of path properties, these criteria are length (travel time) and RPC cost fraction.
The motivation for using travel time is to ensure that each alternative route is not significantly longer than the single best route, while the motivation for using RPC cost fraction, which can be interpreted as a node-weighted version of via-node fraction, is to encourage local optimality and diversity among the paths.
Automatically learning good values for these thresholds is out of the scope of this paper.
Instead we adjust them manually so that they balance path shortness and diversity in the resulting set.
For travel time, we specified the threshold in terms of the ratio of the length of each path divided by the length of the shortest path from Tampa to Miami, and set an upper bound for this of 1.33.
For RPC cost fraction, which for a path can vary between 0 and 1, we set a lower threshold of 0.175.

\begin{figure}[t]
	 \centering
	 \begin{subfigure}[b]{0.49\textwidth}
	 \centering
	     \resizebox{0.8\textwidth}{!}{
			 \begin{tikzpicture}
			 \begin{axis}[ymin=25.150396,ymax= 28.3475538,xmin=-83.051126,xmax=-79.833118,
				 height=14.8cm,width=13.7cm,ticks=none,hide axis]
			 \addplot graphics [ymin=25.250396,ymax= 28.8475538,xmin=-82.851126,xmax=-80.033118] {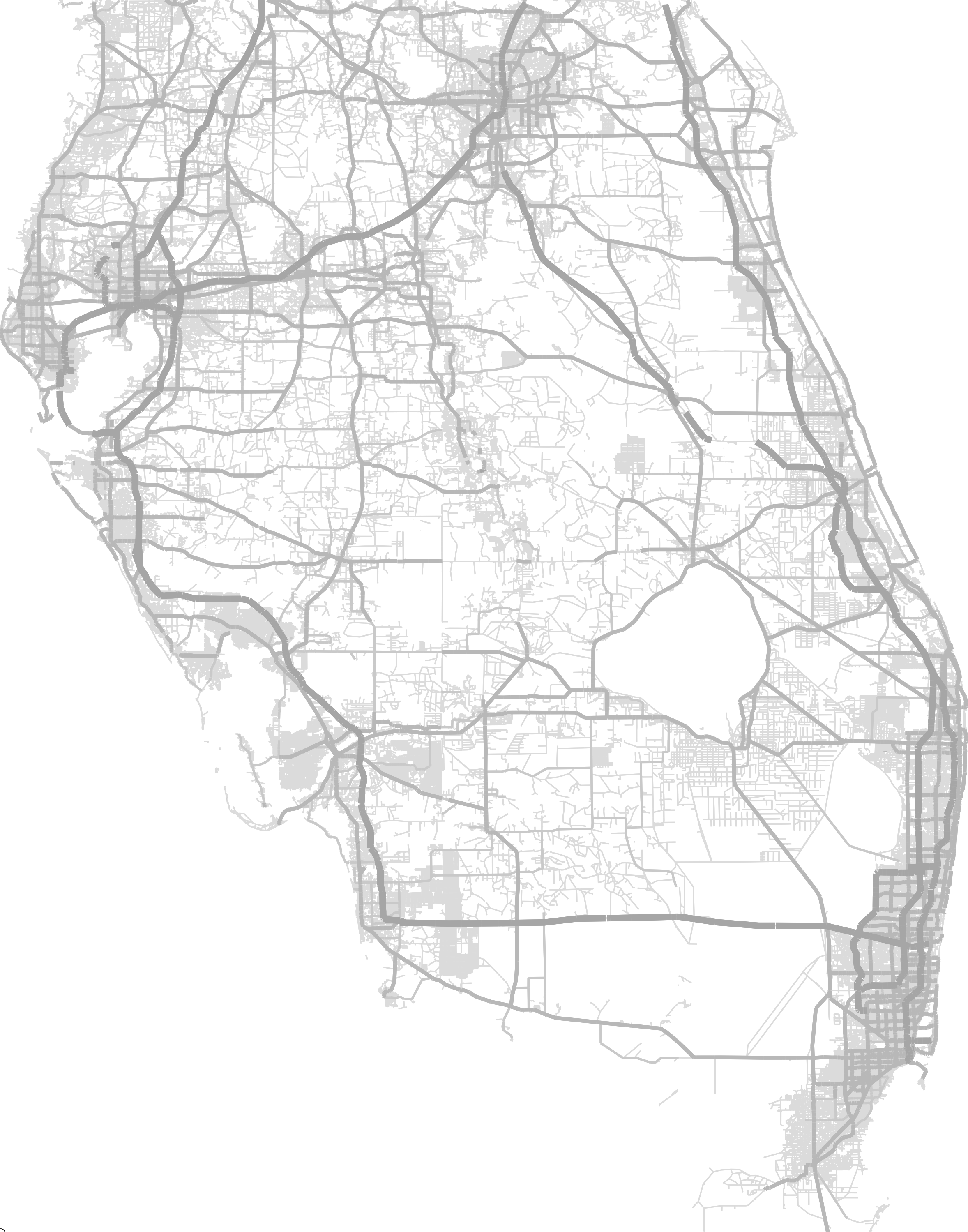};
	 	 	 \addplot[mark=*,mark size=3pt,color=white] coordinates {(-82.465, 27.971)};
		 	 \addplot[mark=*,mark size=1.5pt,color=black] coordinates {(-82.465, 27.971)};
    			 \node [] at (axis cs:-82.465, 28.071) {\contourlength{2pt} \contour{white}{\LARGE \textsf{Tampa}}};
	 		 \addplot[mark=*,mark size=3pt,color=white] coordinates {(-80.2241, 25.7877)};
			 \addplot[mark=*,mark size=1.5pt,color=black] coordinates {(-80.2241, 25.7877)};
			 \node [] at (axis cs:-80.2241, 25.6877) {\contourlength{2pt} \contour{white}{\LARGE \textsf{Miami}}};	
				 \addplot[color=Blue!70!Gray,line width=2.5,dash pattern=on 3pt off 1pt]
				 table[x index=1,y index=2] {RPCs_Tampa_to_Miami_time_7.txt};
				 \addplot[color=Blue!70!Gray,line width=2.5,dash pattern=on 3pt off 1pt]
				 table[x index=1,y index=2] {RPCs_Tampa_to_Miami_time_6.txt};
				 \addplot[color=Blue!70!Gray,line width=2.5,dash pattern=on 3pt off 1pt]
				 table[x index=1,y index=2] {RPCs_Tampa_to_Miami_time_5.txt};
				 \addplot[color=Blue!70!Gray,line width=2.5,dash pattern=on 3pt off 1pt]
				 table[x index=1,y index=2] {RPCs_Tampa_to_Miami_time_4.txt};
				 \addplot[color=Blue!70!Gray,line width=2.5,dash pattern=on 3pt off 1pt]
				 table[x index=1,y index=2] {RPCs_Tampa_to_Miami_time_3.txt};
				 \addplot[color=Blue!70!Gray,line width=2.5,dash pattern=on 3pt off 1pt]
				 table[x index=1,y index=2] {RPCs_Tampa_to_Miami_time_2.txt};
				 \addplot[color=Blue!70!Gray,line width=2.5,dash pattern=on 3pt off 1pt]
				 table[x index=1,y index=2] {RPCs_Tampa_to_Miami_time_1.txt};
				 \node[] at (axis cs:-80.879139, 26.172387) (b0) {};
					 \node[circle,draw=black,inner sep=1pt,fill=black,fill opacity=1,text=white,text opacity=1,align=right,callout absolute pointer=(b0),above left=-5pt and -5pt of b0] {\small{1}};					 \node[] at (axis cs:-81.433914, 26.448313) (b1) {};
					 \node[circle,draw=black,inner sep=1pt,fill=black,fill opacity=1,text=white,text opacity=1,align=right,callout absolute pointer=(b1),above left=-5pt and -5pt of b1] {\small{2}};					 \node[] at (axis cs:-80.618461, 27.647494) (b2) {};
					 \node[circle,draw=black,inner sep=1pt,fill=black,fill opacity=1,text=white,text opacity=1,align=right,callout absolute pointer=(b2),above left=-5pt and -5pt of b2] {\small{3}};					  \node[] at (axis cs:-81.248225, 25.910843) (b3) {};
					 \node[circle,draw=black,inner sep=1pt,fill=black,fill opacity=1,text=white,text opacity=1,align=right,callout absolute pointer=(b3),above left=-5pt and -5pt of b3] {\small{4}};					 \node[] at (axis cs:-81.248225, 26.768397) (b4) {};
					 \node[circle,draw=black,inner sep=1pt,fill=black,fill opacity=1,text=white,text opacity=1,align=right,callout absolute pointer=(b4),above left=-5pt and -5pt of b4] {\small{5}};				 	\node[] at (axis cs:-81.348225, 28.181889) (b5) {};
					 \node[circle,draw=black,inner sep=1pt,fill=black,fill opacity=1,text=white,text opacity=1,align=right,callout absolute pointer=(b5),above left=-5pt and -5pt of b5] {\small{6}};				 	\node[] at (axis cs:-80.820697, 27.243747) (b6) {};
					 \node[circle,draw=black,inner sep=1pt,fill=black,fill opacity=1,text=white,text opacity=1,align=right,callout absolute pointer=(b6),above left=-5pt and -5pt of b6] {\small{7}};
				 \end{axis}
			 \end{tikzpicture}
		}
		 \caption{Reciprocal pointer chains.}
		 \label{sfig:tampa_miami_rpcs}
	 \end{subfigure}
	 \begin{subfigure}[b]{0.49\textwidth}
	 \centering
	     \resizebox{0.8\textwidth}{!}{
		 \begin{tikzpicture}
			 \begin{axis}[ymin=25.150396,ymax= 28.3475538,xmin=-83.051126,xmax=-79.833118,
				 height=14.8cm,width=13.7cm,ticks=none,hide axis]
			 \addplot graphics [ymin=25.250396,ymax= 28.8475538,xmin=-82.851126,xmax=-80.033118] {tampa_miami_background_map_01.png};
	 	 	 \addplot[mark=*,mark size=3pt,color=white] coordinates {(-82.465, 27.971)};
		 	 \addplot[mark=*,mark size=1.5pt,color=black] coordinates {(-82.465, 27.971)};
    			 \node [] at (axis cs:-82.465, 28.071) {\contourlength{2pt} \contour{white}{\LARGE \textsf{Tampa}}};
	 		 \addplot[mark=*,mark size=3pt,color=white] coordinates {(-80.2241, 25.7877)};
			 \addplot[mark=*,mark size=1.5pt,color=black] coordinates {(-80.2241, 25.7877)};
			 \node [] at (axis cs:-80.2241, 25.6877) {\contourlength{2pt} \contour{white}{\LARGE \textsf{Miami}}};	
				 \addplot[color=Red!70!Gray,line width=2.5,dash pattern=on 3pt off 3pt]
				 table[x index=1,y index=2] {explicit_via_paths_Tampa_to_Miami_time_7.txt};
				 \addplot[color=Red!70!Gray,line width=2.5,dash pattern=on 3pt off 3pt]
				 table[x index=1,y index=2] {explicit_via_paths_Tampa_to_Miami_time_6.txt};
				 \addplot[color=Red!70!Gray,line width=2.5,dash pattern=on 3pt off 3pt]
				 table[x index=1,y index=2] {explicit_via_paths_Tampa_to_Miami_time_5.txt};
				 \addplot[color=Red!70!Gray,line width=2.5,dash pattern=on 3pt off 3pt]
				 table[x index=1,y index=2] {explicit_via_paths_Tampa_to_Miami_time_4.txt};
				 \addplot[color=Red!70!Gray,line width=2.5,dash pattern=on 3pt off 3pt]
				 table[x index=1,y index=2] {explicit_via_paths_Tampa_to_Miami_time_3.txt};
				 \addplot[color=Red!70!Gray,line width=2.5,dash pattern=on 3pt off 3pt]
				 table[x index=1,y index=2] {explicit_via_paths_Tampa_to_Miami_time_2.txt};
				 \addplot[color=Red,line width=2.5]
				 table[x index=1,y index=2] {explicit_via_paths_Tampa_to_Miami_time_1.txt};
				 \node[] at (axis cs:-80.879139, 26.172387) (h0) {};
				 \node[rectangle callout,rounded corners,draw=black,inner sep=3pt,fill=white,fill opacity=0.7,text opacity=1,align=right,callout absolute pointer=(h0),above=5pt of h0] (b0)
					 {\small{\textbf{\textsf{4 hrs, 4 mins}}} \\[-2mm] \scriptsize{\textsf{278.9 miles (448.9 km)}}};
					 \node[circle,draw=black,inner sep=1pt,fill=black,fill opacity=1,text=white,text opacity=1,align=right,callout absolute pointer=(b0),above left=-5pt and -5pt of b0] {\small{1}};					 \node[] at (axis cs:-81.433914, 26.448313) (h1) {};
				 \node[rectangle callout,rounded corners,draw=black,inner sep=3pt,fill=white,fill opacity=0.7,text opacity=1,align=right,callout absolute pointer=(h1),left=5pt of h1] (b1)
					 {\small{\textbf{\textsf{4 hrs, 10 mins}}} \\[-2mm] \scriptsize{\textsf{272.7 miles (438.8 km)}}};
					 \node[circle,draw=black,inner sep=1pt,fill=black,fill opacity=1,text=white,text opacity=1,align=right,callout absolute pointer=(b1),above left=-5pt and -5pt of b1] {\small{2}};					 \node[] at (axis cs:-80.618461, 27.647494) (h2) {};
				 \node[rectangle callout,rounded corners,draw=black,inner sep=3pt,fill=white,fill opacity=0.7,text opacity=1,align=right,callout absolute pointer=(h2),above=5pt of h2] (b2)
					 {\small{\textbf{\textsf{4 hrs, 14 mins}}} \\[-2mm] \scriptsize{\textsf{266.3 miles (428.6 km)}}};
					 \node[circle,draw=black,inner sep=1pt,fill=black,fill opacity=1,text=white,text opacity=1,align=right,callout absolute pointer=(b2),above left=-5pt and -5pt of b2] {\small{3}};					  \node[] at (axis cs:-81.248225, 25.910843) (h3) {};
				 \node[rectangle callout,rounded corners,draw=black,inner sep=3pt,fill=white,fill opacity=0.7,text opacity=1,align=right,callout absolute pointer=(h3),below=5pt of h3] (b3)
					 {\small{\textbf{\textsf{4 hrs, 19 mins}}} \\[-2mm] \scriptsize{\textsf{277.6 miles (446.8 km)}}};
					 \node[circle,draw=black,inner sep=1pt,fill=black,fill opacity=1,text=white,text opacity=1,align=right,callout absolute pointer=(b3),above left=-5pt and -5pt of b3] {\small{4}};					 \node[] at (axis cs:-81.248225, 26.768397) (h4) {};
				 \node[rectangle callout,rounded corners,draw=black,inner sep=3pt,fill=white,fill opacity=0.7,text opacity=1,align=right,callout absolute pointer=(h4),above=5pt of h4] (b4)
					 {\small{\textbf{\textsf{4 hrs, 22 mins}}} \\[-2mm] \scriptsize{\textsf{273.9 miles (440.8 km)}}};
					 \node[circle,draw=black,inner sep=1pt,fill=black,fill opacity=1,text=white,text opacity=1,align=right,callout absolute pointer=(b4),above left=-5pt and -5pt of b4] {\small{5}};				 	\node[] at (axis cs:-81.348225, 28.181889) (h5) {};
				 \node[rectangle callout,rounded corners,draw=black,inner sep=3pt,fill=white,fill opacity=0.7,text opacity=1,align=right,callout absolute pointer=(h5),right=5pt of h5] (b5)
					 {\small{\textbf{\textsf{4 hrs, 22 mins}}} \\[-2mm] \scriptsize{\textsf{292.7 miles (471.1 km)}}};
					 \node[circle,draw=black,inner sep=1pt,fill=black,fill opacity=1,text=white,text opacity=1,align=right,callout absolute pointer=(b5),above left=-5pt and -5pt of b5] {\small{6}};				 	\node[] at (axis cs:-80.820697, 27.243747) (h6) {};
				 \node[rectangle callout,rounded corners,draw=black,inner sep=3pt,fill=white,fill opacity=0.7,text opacity=1,align=right,callout absolute pointer=(h6),above left=5pt and 5pt of h6] (b6)
					 {\small{\textbf{\textsf{4 hrs, 25 mins}}} \\[-2mm] \scriptsize{\textsf{262.7 miles (422.8 km)}}};
					 \node[circle,draw=black,inner sep=1pt,fill=black,fill opacity=1,text=white,text opacity=1,align=right,callout absolute pointer=(b6),above left=-5pt and -5pt of b6] {\small{7}};
			 \end{axis}
		 \end{tikzpicture}
		 } 
		 \caption{Shortest route and alternative routes.}
		 \label{sfig:tampa_miami_alternative_routes}
	\end{subfigure}
	\caption{Figure \ref{sfig:tampa_miami_rpcs} shows seven RPCs computed for the Florida road network with Tampa as the source and Miami as the target. Figure \ref{sfig:tampa_miami_alternative_routes} shows the corresponding CVPs, which for this application serve as alternative routes. The shortest route is depicted with a solid line, while the alternative routes are dashed. Because the full routes for some of these paths may be ambiguous, they are depicted individually in Figure \ref{fig:tampa_miami_individual_routes}.}
	\label{fig:tampa_miami_routes}
\end{figure}

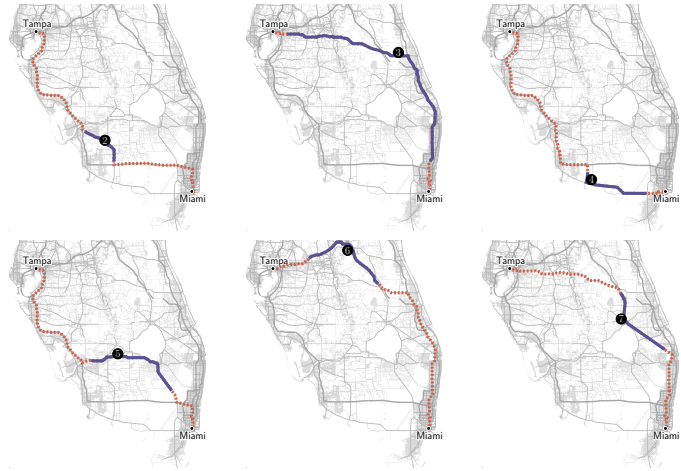
\begin{figure}[t]
	 \centering
	    \begin{subfigure}[b]{0.25\textwidth}
              \centering
              \resizebox{1.0\textwidth}{!}{
		 \begin{tikzpicture}
			 \begin{axis}[ymin=25.150396,ymax= 28.3475538,xmin=-83.051126,xmax=-79.833118,
				 height=14.8cm,width=14.2cm,ticks=none,hide axis]
			 \addplot graphics [ymin=25.250396,ymax= 28.8475538,xmin=-82.851126,xmax=-80.033118] {tampa_miami_background_map_01.png};
			 \addplot[mark=*,mark size=3pt,color=white] coordinates {(-82.465, 27.971)};
		 	 \addplot[mark=*,mark size=1.5pt,color=black] coordinates {(-82.465, 27.971)};
    			 \node [] at (axis cs:-82.465, 28.071) {\contourlength{2pt} \contour{white}{\LARGE \textsf{Tampa}}};
	 		 \addplot[mark=*,mark size=3pt,color=white] coordinates {(-80.2241, 25.7877)};
			 \addplot[mark=*,mark size=1.5pt,color=black] coordinates {(-80.2241, 25.7877)};
			 \node [] at (axis cs:-80.2241, 25.6877) {\contourlength{2pt} \contour{white}{\LARGE \textsf{Miami}}};	
				 \addplot[color=Red!70!Gray,line width=5,dash pattern=on 3pt off 3pt]
				 table[x index=1,y index=2] {explicit_via_paths_Tampa_to_Miami_time_2.txt};
				 \addplot[color=Blue!70!Gray,line width=5]
				 table[x index=1,y index=2] {RPCs_Tampa_to_Miami_time_2.txt};
				 \node[] at (axis cs:-81.433914, 26.448313) (b1) {};
					 \node[circle,draw=black,inner sep=1pt,fill=black,fill opacity=1,text=white,text opacity=1,align=right,callout absolute pointer=(b1),above left=-5pt and -5pt of b1] {\LARGE{2}};	
			 \end{axis}
		 \end{tikzpicture}
	      }
	    \end{subfigure}
	    \begin{subfigure}[b]{0.25\textwidth}
              \centering
              \resizebox{1.0\textwidth}{!}{
		 \begin{tikzpicture}
			 \begin{axis}[ymin=25.150396,ymax= 28.3475538,xmin=-83.051126,xmax=-79.833118,
				 height=14.8cm,width=14.2cm,ticks=none,hide axis]
			 \addplot graphics [ymin=25.250396,ymax= 28.8475538,xmin=-82.851126,xmax=-80.033118] {tampa_miami_background_map_01.png};
			 \addplot[mark=*,mark size=3pt,color=white] coordinates {(-82.465, 27.971)};
		 	 \addplot[mark=*,mark size=1.5pt,color=black] coordinates {(-82.465, 27.971)};
    			 \node [] at (axis cs:-82.465, 28.071) {\contourlength{2pt} \contour{white}{\LARGE \textsf{Tampa}}};
	 		 \addplot[mark=*,mark size=3pt,color=white] coordinates {(-80.2241, 25.7877)};
			 \addplot[mark=*,mark size=1.5pt,color=black] coordinates {(-80.2241, 25.7877)};
			 \node [] at (axis cs:-80.2241, 25.6877) {\contourlength{2pt} \contour{white}{\LARGE \textsf{Miami}}};	
			 \addplot[color=Red!70!Gray,line width=5,dash pattern=on 3pt off 3pt]
				 table[x index=1,y index=2] {explicit_via_paths_Tampa_to_Miami_time_3.txt};
				 \addplot[color=Blue!70!Gray,line width=5]
				 table[x index=1,y index=2] {RPCs_Tampa_to_Miami_time_3.txt};
				 \node[] at (axis cs:-80.618461, 27.647494) (b2) {};
					 \node[circle,draw=black,inner sep=1pt,fill=black,fill opacity=1,text=white,text opacity=1,align=right,callout absolute pointer=(b2),above left=-5pt and -5pt of b2] {\LARGE{3}};	
			 \end{axis}
		 \end{tikzpicture}
	      }
	    \end{subfigure}
	    \begin{subfigure}[b]{0.25\textwidth}
              \centering
              \resizebox{1.0\textwidth}{!}{
		 \begin{tikzpicture}
			 \begin{axis}[ymin=25.150396,ymax= 28.3475538,xmin=-83.051126,xmax=-79.833118,
				 height=14.8cm,width=14.2cm,ticks=none,hide axis]
			 \addplot graphics [ymin=25.250396,ymax= 28.8475538,xmin=-82.851126,xmax=-80.033118] {tampa_miami_background_map_01.png};
			 \addplot[mark=*,mark size=3pt,color=white] coordinates {(-82.465, 27.971)};
		 	 \addplot[mark=*,mark size=1.5pt,color=black] coordinates {(-82.465, 27.971)};
    			 \node [] at (axis cs:-82.465, 28.071) {\contourlength{2pt} \contour{white}{\LARGE \textsf{Tampa}}};
	 		 \addplot[mark=*,mark size=3pt,color=white] coordinates {(-80.2241, 25.7877)};
			 \addplot[mark=*,mark size=1.5pt,color=black] coordinates {(-80.2241, 25.7877)};
			 \node [] at (axis cs:-80.2241, 25.6877) {\contourlength{2pt} \contour{white}{\LARGE \textsf{Miami}}};	
			 \addplot[color=Red!70!Gray,line width=5,dash pattern=on 3pt off 3pt]
				 table[x index=1,y index=2] {explicit_via_paths_Tampa_to_Miami_time_4.txt};
				 \addplot[color=Blue!70!Gray,line width=5]
				 table[x index=1,y index=2] {RPCs_Tampa_to_Miami_time_4.txt};
				  \node[] at (axis cs:-81.248225, 25.910843) (b3) {};
					 \node[circle,draw=black,inner sep=1pt,fill=black,fill opacity=1,text=white,text opacity=1,align=right,callout absolute pointer=(b3),above left=-5pt and -5pt of b3] {\LARGE{4}};	
			 \end{axis}
		 \end{tikzpicture}
	      }
	    \end{subfigure}
	    
	 \centering
	    \begin{subfigure}[b]{0.25\textwidth}
              \centering
              \resizebox{1.0\textwidth}{!}{
		 \begin{tikzpicture}
			 \begin{axis}[ymin=25.150396,ymax= 28.3475538,xmin=-83.051126,xmax=-79.833118,
				 height=14.8cm,width=14.2cm,ticks=none,hide axis]
			 \addplot graphics [ymin=25.250396,ymax= 28.8475538,xmin=-82.851126,xmax=-80.033118] {tampa_miami_background_map_01.png};
			 \addplot[mark=*,mark size=3pt,color=white] coordinates {(-82.465, 27.971)};
		 	 \addplot[mark=*,mark size=1.5pt,color=black] coordinates {(-82.465, 27.971)};
    			 \node [] at (axis cs:-82.465, 28.071) {\contourlength{2pt} \contour{white}{\LARGE \textsf{Tampa}}};
	 		 \addplot[mark=*,mark size=3pt,color=white] coordinates {(-80.2241, 25.7877)};
			 \addplot[mark=*,mark size=1.5pt,color=black] coordinates {(-80.2241, 25.7877)};
			 \node [] at (axis cs:-80.2241, 25.6877) {\contourlength{2pt} \contour{white}{\LARGE \textsf{Miami}}};	
			\addplot[color=Red!70!Gray,line width=5,dash pattern=on 3pt off 3pt]
				 table[x index=1,y index=2] {explicit_via_paths_Tampa_to_Miami_time_5.txt};
				 \addplot[color=Blue!70!Gray,line width=5]
				 table[x index=1,y index=2] {RPCs_Tampa_to_Miami_time_5.txt};
				 \node[] at (axis cs:-81.248225, 26.768397) (b4) {};
					 \node[circle,draw=black,inner sep=1pt,fill=black,fill opacity=1,text=white,text opacity=1,align=right,callout absolute pointer=(b4),above left=-5pt and -5pt of b4] {\LARGE{5}};
			 \end{axis}
		 \end{tikzpicture}
	      }
	    \end{subfigure}
	    \begin{subfigure}[b]{0.25\textwidth}
              \centering
              \resizebox{1.0\textwidth}{!}{
		 \begin{tikzpicture}
			 \begin{axis}[ymin=25.150396,ymax= 28.3475538,xmin=-83.051126,xmax=-79.833118,
				 height=14.8cm,width=14.2cm,ticks=none,hide axis]
			 \addplot graphics [ymin=25.250396,ymax= 28.8475538,xmin=-82.851126,xmax=-80.033118] {tampa_miami_background_map_01.png};
			 \addplot[mark=*,mark size=3pt,color=white] coordinates {(-82.465, 27.971)};
		 	 \addplot[mark=*,mark size=1.5pt,color=black] coordinates {(-82.465, 27.971)};
    			 \node [] at (axis cs:-82.465, 28.071) {\contourlength{2pt} \contour{white}{\LARGE \textsf{Tampa}}};
	 		 \addplot[mark=*,mark size=3pt,color=white] coordinates {(-80.2241, 25.7877)};
			 \addplot[mark=*,mark size=1.5pt,color=black] coordinates {(-80.2241, 25.7877)};
			 \node [] at (axis cs:-80.2241, 25.6877) {\contourlength{2pt} \contour{white}{\LARGE \textsf{Miami}}};	
			 \addplot[color=Red!70!Gray,line width=5,dash pattern=on 3pt off 3pt]
				 table[x index=1,y index=2] {explicit_via_paths_Tampa_to_Miami_time_6.txt};
				 \addplot[color=Blue!70!Gray,line width=5]
				 table[x index=1,y index=2] {RPCs_Tampa_to_Miami_time_6.txt};
			 	\node[] at (axis cs:-81.348225, 28.181889) (b5) {};
					 \node[circle,draw=black,inner sep=1pt,fill=black,fill opacity=1,text=white,text opacity=1,align=right,callout absolute pointer=(b5),above left=-5pt and -5pt of b5] {\LARGE{6}};
			 \end{axis}
		 \end{tikzpicture}
	      }
	    \end{subfigure}
	    \begin{subfigure}[b]{0.25\textwidth}
              \centering
              \resizebox{1.0\textwidth}{!}{
		 \begin{tikzpicture}
			 \begin{axis}[ymin=25.150396,ymax= 28.3475538,xmin=-83.051126,xmax=-79.833118,
				 height=14.8cm,width=14.2cm,ticks=none,hide axis]
			 \addplot graphics [ymin=25.250396,ymax= 28.8475538,xmin=-82.851126,xmax=-80.033118] {tampa_miami_background_map_01.png};
			 \addplot[mark=*,mark size=3pt,color=white] coordinates {(-82.465, 27.971)};
		 	 \addplot[mark=*,mark size=1.5pt,color=black] coordinates {(-82.465, 27.971)};
    			 \node [] at (axis cs:-82.465, 28.071) {\contourlength{2pt} \contour{white}{\LARGE \textsf{Tampa}}};
	 		 \addplot[mark=*,mark size=3pt,color=white] coordinates {(-80.2241, 25.7877)};
			 \addplot[mark=*,mark size=1.5pt,color=black] coordinates {(-80.2241, 25.7877)};
			 \node [] at (axis cs:-80.2241, 25.6877) {\contourlength{2pt} \contour{white}{\LARGE \textsf{Miami}}};	
			 \addplot[color=Red!70!Gray,line width=5,dash pattern=on 3pt off 3pt]
				 table[x index=1,y index=2] {explicit_via_paths_Tampa_to_Miami_time_7.txt};
				 \addplot[color=Blue!70!Gray,line width=5]
				 table[x index=1,y index=2] {RPCs_Tampa_to_Miami_time_7.txt};
			 	\node[] at (axis cs:-80.820697, 27.243747) (b6) {};
					 \node[circle,draw=black,inner sep=1pt,fill=black,fill opacity=1,text=white,text opacity=1,align=right,callout absolute pointer=(b6),above left=-5pt and -5pt of b6] {\LARGE{7}};
			 \end{axis}
		 \end{tikzpicture}
	      }
	    \end{subfigure}
	\caption{This figure shows the full alternative routes depicted in Figure \ref{fig:tampa_miami_routes}. The RPC portion of each route is illustrated in solid blue, while the rest of the full path is dashed red.}
	\label{fig:tampa_miami_individual_routes}
\end{figure}

Seven CVPs computed using the RPC method met the minimum thresholds for both normalized length and RPC cost fraction.
These are depicted in Figures \ref{fig:tampa_miami_routes} and \ref{fig:tampa_miami_individual_routes}.
Figure \ref{sfig:tampa_miami_rpcs} shows the RPC associated with each of the CVPs.
Figure \ref{sfig:tampa_miami_alternative_routes} shows the full CVPs overlaid on top of one another, along with time and distance information about each route.
Additional information about the CVPs is given in Table \ref{tab:tampa_miami_routes}.

Because it is not clear from Figure \ref{sfig:tampa_miami_alternative_routes} what some of the full routes are, we break these out individually and show them in full in Figure \ref{fig:tampa_miami_individual_routes}.
Importantly, the routes found by our method appear to be qualitatively reasonable and match those that drivers with expert knowledge of the local roads would consider.

\begin{table}
\centering
\small
    \begin{tabular}{R{1cm} R{2.5cm} R{2cm} R{2cm} R{2.3cm}}
    \hline
    Path \newline No. & Total \newline Time & Total \newline Time Rank & RPC Cost \newline Fraction & RPC Cost \newline Fraction Rank\\ \hline
    1 & 4 hrs, 4 mins & 1 & 1.0 & 1\\
    2 & 4 hrs, 10 mins & 2 & 0.20 & 7\\
    3 & 4 hrs, 14 mins & 3 & 0.83 & 2\\
    4 & 4 hrs, 19 mins & 4 & 0.27 & 6\\
    5 & 4 hrs, 22 mins & 5 & 0.38 & 3\\
    6 & 4 hrs, 22 mins & 6 & 0.3 & 5\\
    7 & 4 hrs, 25 mins & 7 & 0.3 & 4\\
    \hline
    \end{tabular}
  \caption{The $k = 7$ alternative routes, depicted in Figures \ref{fig:tampa_miami_routes} and \ref{fig:tampa_miami_individual_routes}, that met the specified thresholds of RPC cost fraction $\ge$ 0.175 and total cost $\le$ shortest path cost * 1.33. In this case, with the shortest path cost being 4 hrs, 4 mins, the threshold on total path cost was set at 5 hrs, 25 mins. All of the paths meeting both thresholds are well under the cost threshold.}
  \label{tab:tampa_miami_routes}
\end{table}

In general, the preceding result demonstrates the suitability of CVPs for generating a set of candidate solutions to an under-specified shortest path problem.
What is particularly notable about this result is that it was produced from new theory that allowed us to directly incorporate path diversity into the best-paths objective, through the use of an intrinsic and efficient-to-compute property of each individual CVP.
That each CVP contributes independently to this objective avoids a combinatorial optimization problem for finding the best set of paths, and the theoretical motivation provides a principled approach for creating a diverse set of alternatives.

\subsection{Layer-boundary identification in GPR data}\label{sec:layer_results}

The second problem we consider is the identification of boundaries between different soil layers in ground-penetrating radar (GPR) images.
Identifying layer-boundaries is an important problem in landmine detection using GPR.
A GPR system operates by transmitting an electromagnetic pulse downward into the ground and recording a return signal at a receiver.
When the pulse propagating through the air and ground reaches the boundary between two materials with different dielectric properties, a portion of it reflects back up to the receiver.
This allows GPR to sense buried objects whose dielectric properties differ from that of the surrounding soil.

\begin{figure}
    \centering
         \resizebox{0.6\textwidth}{!}{
            \begin{tikzpicture}
             \begin{axis}[
              xmin=0.5,xmax=24.5,ymin=0.5,ymax=415.5,
              y dir=reverse,
              height=7cm,width=0.4\textwidth,
              xlabel=Width (Channel),ylabel=Sample (Time)
              ]
              \addplot graphics [xmin=0.5,xmax=24.5,ymin=0.5,ymax=415.5] {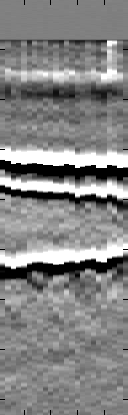};

              \end{axis}
              \node[] at (3.1,4.4) (n0) {};
              \node[rectangle,line width=0,text=red,outer sep=2pt,inner sep=0,callout absolute pointer=(n0),above right=3pt and 20pt of n0] (n0label) {\footnotesize{\textsf{Antenna enclosure}}};
              \draw[draw=red,line width=1.5] (n0.center) -- (n0label.south west);
              
              \node[] at (3.1,3.3) (n1) {};
              \node[rectangle,line width=0,text=red,outer sep=2pt,inner sep=0,callout absolute pointer=(n1),above right=3pt and 20pt of n1] (n1label) {\footnotesize{\textsf{Ground surface}}};
              \draw[draw=red,line width=1.5] (n1.center) -- (n1label.south west);

              \node[] at (3.1,2.9) (n2) {};
              \node[rectangle,line width=0,text=red,outer sep=2pt,inner sep=0,callout absolute pointer=(n2),right=20pt of n2] (n2label) {\footnotesize{\textsf{Subsurface layer boundary}}};
              \draw[draw=red,line width=1.5] (n2.center) -- (n2label.west);

              \node[] at (3.1,2.1) (n3) {};
              \node[rectangle,line width=0,text=red,outer sep=2pt,inner sep=0,callout absolute pointer=(n3),below right=3pt and 20pt of n3] (n3label) {\footnotesize{\textsf{Subsurface layer boundary}}};
              \draw[draw=red,line width=1.5] (n3.center) -- (n3label.north west);
            \end{tikzpicture}
          } 
        \caption{An example grayscale ground-penetrating radar (GPR) image. Four boundary-like features of interest within the image are labeled that we would like to automatically detect, including the reflections due to the ground surface and two subsurface soil layer boundaries.}
        \label{fig:radar_example}
\end{figure}

However, layer-boundaries can frequently mimic or obscure landmine signatures in radar images.
An example GPR image, which is created by spacing multiple receivers in a straight line and roughly depicts a two-dimensional cross-section of the ground, is shown in Figure \ref{fig:radar_example}.
In this image, color represents signal amplitude, with white corresponding to positive amplitude, and black corresponding to negative amplitude, and each column in the image is the return signal recorded at a single antenna.
Each strong reflection in the return signal shows up as a \emph{bounce}---a strong positive deviation from equilibrium, followed by a strong negative deviation from equilibrium (or vice versa).

While buried landmines tend to stand out from the soil and cause significant reflections, other large sources of contrasting materials can cause these, as well, such as different soil layers within the ground.
The wide extent of the layers causes the reflections at their boundaries to be primarily depicted as parallel white and black horizontal bands, as shown in Figure \ref{fig:radar_example}.
In individual images, this interference may be difficult to filter out from that of a target signature.
But because soil layering is a phenomenon likely to persist over many radar images of the ground taken in succession, layer-boundaries have the potential to be tracked and their interference mitigated.

\subsubsection{Identifying layer-boundaries using reciprocal pointer chains}

In this section, we present a novel and efficient method for identifying layer boundaries in GPR images.
Efficient methods for processing GPR are particularly important to landmine detection, as the detection occurs in real-time and is time-critical.
Our method builds on an earlier approach to accurately detecting just the air-ground boundary in GPR images \cite{wood2011comparison,smock2011dynamax+}.
The earlier approach represents each pixel in a radar image as a node in a trellis graph.
There is a directed edge between each node/pixel $u$, in column $i$, and each node/pixel $v$, in column $i+1$.
Each edge between neighboring nodes is assigned as a weight a probability score that accounts for the probability of a layer-boundary between the nodes given the local evidence.
This score is maximal when there is both a strong GPR reflection present and when the edge is oriented close to the expected orientation of the ground, which with no prior evidence would be completely horizontal.
The full air-ground boundary is identified as the most probable (most strongly weighted) path across the trellis graph using the Viterbi algorithm.

To find layer boundaries beyond just the air-ground boundary, we use the same graph structure and apply the RPC method to identify locally optimal paths.
However, instead of using the CVPs to identify layer boundaries, we associate the layer boundaries with the RPCs specifically.
There are three notable physical features of layer boundaries that make RPCs particularly suited for identifying them.
First of all, like RPCs, layer boundaries are physically disjoint.
And while the boundaries between different layers may abut, they cannot cross each other.
Likewise, it has been theoretically proven that in the model used here, the use of a Gaussian distribution to describe the orientation probabilities for the edges prevents the RPCs from physically crossing each other \cite{smock2012efficient}.
This is an interesting theoretical result that is specific to the graph model used here.
Thirdly, layer-boundaries can be any width inside of the radar image.
Similarly, RPCs can be anywhere from the full length of the trellis or as short as a single node.
This is in contrast with CVPs, which must span the full length of the trellis.
Thus it is RPCs, and not CVPs, which are most useful here for identifying layer-boundary features.

\begin{figure}[t]
    \centering
    \resizebox{0.8\textwidth}{!}{
           \begin{tabular}{@{}C{0.32\textwidth}C{0.32\textwidth}C{0.32\textwidth}@{}} 
            \begin{tikzpicture}
             \begin{axis}[
              xmin=0.5,xmax=24.5,ymin=0.5,ymax=415.5,
              y dir=reverse,
              height=6cm,width=0.31\textwidth,
              xlabel=Width (Channel),ylabel=Sample (Time)
              ]
              \addplot graphics [xmin=0.5,xmax=24.5,ymin=0.5,ymax=415.5] {gpr_figure_8985.png};
              \end{axis}
            \end{tikzpicture}
&
            \begin{tikzpicture}
             \begin{axis}[
              xmin=0.5,xmax=24.5,ymin=0.5,ymax=415.5,
              y dir=reverse,
              height=6cm,width=0.31\textwidth,
              xlabel=Width (Channel),ylabel=Sample (Time)
              ]
              \addplot graphics [xmin=0.5,xmax=24.5,ymin=0.5,ymax=415.5] {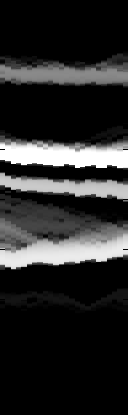};
                  \foreach \y in {1,...,20} {
                  	\addplot[color=white,mark=o,mark size=0.8,line width=0.8]
    			table[header=false,x index=0,y index=\y] {all_RPCs_8985.txt};
			\addplot[color=black,mark=o,mark size=0.35,line width=0.45]			
			table[header=false,x index=0,y index=\y] {all_RPCs_8985.txt};
    		}
              \end{axis}
            \end{tikzpicture}
&
            \begin{tikzpicture}
             \begin{axis}[
              xmin=0.5,xmax=24.5,ymin=0.5,ymax=415.5,
              y dir=reverse,
              height=6cm,width=0.31\textwidth,
              xlabel=Width (Channel),ylabel=Sample (Time)
              ]
              \addplot graphics [xmin=0.5,xmax=24.5,ymin=0.5,ymax=415.5] {gpr_figure_8985.png};
                  \foreach \y in {1,...,4} {
			\addplot[color=black,line width=1.6]
    			table[header=false,x index=0,y index=\y] {layer_boundaries_8985.txt};
			\addplot[color=orange,line width=1.4]
    			table[header=false,x index=0,y index=\y] {layer_boundaries_8985.txt};
    		}
                \end{axis}
            \end{tikzpicture}
\\	
               (a) GPR & (b) Inverse cost map and reciprocal pointer chains & (c) Layer boundaries
	 \end{tabular}
        } 
         \caption{An example of subsurface layer-boundary identification in GPR data using RPCs. These results are discussed in detail in Section \ref{sec:layer_results}.}
	\label{fig:example_8985}     
\end{figure}

\begin{figure}[t]
    \centering
        \resizebox{0.8\textwidth}{!}{
           \begin{tabular}{@{}C{0.32\textwidth}C{0.32\textwidth}C{0.32\textwidth}@{}} 
            \centering
            \begin{tikzpicture}
             \begin{axis}[
              xmin=0.5,xmax=24.5,ymin=0.5,ymax=415.5,
              y dir=reverse,
              height=6cm,width=0.31\textwidth,
              xlabel=Width (Channel),ylabel=Sample (Time)
              ]
              \addplot graphics [xmin=0.5,xmax=24.5,ymin=0.5,ymax=415.5] {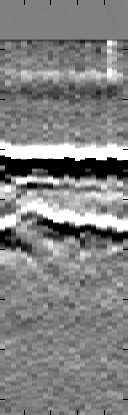};
              \end{axis}
            \end{tikzpicture}
&
            \begin{tikzpicture}
             \begin{axis}[
              xmin=0.5,xmax=24.5,ymin=0.5,ymax=415.5,
              y dir=reverse,
              height=6cm,width=0.31\textwidth,
              xlabel=Width (Channel),ylabel=Sample (Time)
              ]
              \addplot graphics [xmin=0.5,xmax=24.5,ymin=0.5,ymax=415.5] {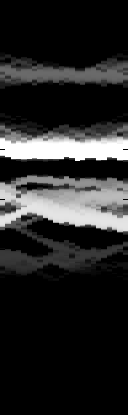};
                  \foreach \y in {1,...,20} {
                  	\addplot[color=white,mark=o,mark size=0.8,line width=0.8]
    			table[header=false,x index=0,y index=\y] {all_RPCs_4818.txt};
			\addplot[color=black,mark=o,mark size=0.35,line width=0.45]			
			table[header=false,x index=0,y index=\y] {all_RPCs_4818.txt};
    		}
              \end{axis}
            \end{tikzpicture}
&
            \begin{tikzpicture}
             \begin{axis}[
              xmin=0.5,xmax=24.5,ymin=0.5,ymax=415.5,
              y dir=reverse,
              height=6cm,width=0.31\textwidth,
              xlabel=Width (Channel),ylabel=Sample (Time)
              ]
              \addplot graphics [xmin=0.5,xmax=24.5,ymin=0.5,ymax=415.5] {gpr_figure_4818.png};
                  \foreach \y in {1,...,5} {
			\addplot[color=black,line width=1.6]
    			table[header=false,x index=0,y index=\y] {layer_boundaries_4818.txt};
			\addplot[color=orange,line width=1.4]
    			table[header=false,x index=0,y index=\y] {layer_boundaries_4818.txt};
    		}
                \end{axis}
            \end{tikzpicture}
\\	
               (a) GPR & (b) Inverse cost map and reciprocal pointer chains & (c) Layer boundaries
  	 \end{tabular}
        } 
         \caption{A second example of subsurface layer-boundary identification in GPR data using RPCs, also discussed in detail in Section \ref{sec:layer_results}.}
	\label{fig:example_4818}     
\end{figure}

Figures \ref{fig:example_8985} and \ref{fig:example_4818} show qualitative results of the approach on two different GPR images.
Figures \ref{fig:example_8985}.a and \ref{fig:example_4818}.a show the original GPR images.
Figures \ref{fig:example_8985}.b and \ref{fig:example_4818}.b show the RPCs computed for the radar images.
These are overlaid on top of a plot of the via-path probability (inverse cost) of every pixel, where white indicates higher probability and black indicates lower probability.
Plotting the via-path probability for every node shows the plateaus in the graph, which clearly coincide with the RPCs in this case.
This also helps to illustrate the local optimality of the paths.
Figures \ref{fig:example_8985}.c and \ref{fig:example_4818}.c, meanwhile, show the final result with the identified layer boundaries in orange.
To produce the final set of layer boundaries, we ranked the RPCs and thresholded them based on path cost.

A qualitative evaluation of the results shows this method to be useful at identifying layer boundaries and locating their positions with a relatively high-degree of accuracy.
Overall, this demonstrates that RPCs can be an effective strategy to agglomerate local evidence, and their disjointness, efficient computation, and variable length proved to be particularly well-matched to the problem of layer-boundary identification in GPR.

\section{Conclusion}\label{sec:conclusion}

In this work, we have presented the first comprehensive formalization of cascading via-paths (CVPs) and the reciprocal pointer chain (RPC) method from a graph-theoretic point of view.
Prior work in this area was application-oriented and lacked both a satisfying theoretical motivation and the abstraction necessary for the method to be applicable beyond just specific types of graphs.
This work is not only the first attempt to consolidate the theory and standardize the terminology surrounding the approach, but also greatly expands upon the theory to open up new avenues for theoretical work and practical application.

Our theoretical contributions are both mathematical and computational in nature.
Among these, we showed that due to hidden assumptions about the type of graph used in prior applications, a previously unknown difference existed between two earlier proposed strategies---the RPC method and the disjoint plateau method---for enumerating via-paths given two shortest path trees.
We sorted out this difference, distinguishing CVPs from the set of all possible via-paths and establishing the RPC method as the only correct procedure for CVP enumeration in all types of graphs.

In addition, we proved the key result that the characteristic RPCs of any two CVPs must be disjoint.
From this, we derived a number of properties of CVPs that can be computed in $O(|E| + |V| \log |V|)$ time.
These theorems enable a versatile set of new and efficient path computation approaches.
For example, we derived a lower bound on the similarity between a CVP and any shorter path between $s$ and $t$, which does not require any comparisons between paths.
Among other things, this allows us to compute a diverse set of paths without having to consider a combinatorial optimization over all sets of paths, which is important in applications concerned with efficient candidate-set generation.
Additionally, we used the disjointness property to propose, using RPCs, a potentially significantly faster approach to computing the $k$ shortest loop-less paths in a graph.

Finally, we presented two applications of CVPs and RPCs---alternative route finding in road networks and layer-boundary identification in ground-penetrating radar (GPR) images---along with qualitative results that demonstrate the RPC method is effective at finding locally optimal solutions to path problems.
While this has been demonstrated to an extent previously, part of our motivation has been to provide a much more general foundation and principled justification for the effectiveness of these methods.
Additionally, we demonstrated a completely novel use of the RPC method with the RPCs themselves as the end product for identifying layer-boundaries, due to their ability to agglomerate local evidence for linear features in a radar image.
Overall, our hope is that by demonstrating the versatility, depth, and practicality of the method and supporting theory, we are able to significantly broaden their perceived applicability and stimulate new research in this area, both applied and theoretical.


\section*{Acknowledgements}

This work was supported in part by the Army Research Office under Grant W911NF-08-10410. The views and conclusions contained in this document are those of the authors and should not be interpreted as representing the official policies either expressed or implied, of the Army Research Office, Army Research Laboratory, or the U.S. Government.  The U.S. Government is authorized to reproduce and distribute reprints for Government purposes notwithstanding any copyright notation hereon. The authors would like to thank L. Dai, R. Weaver, P. Howard, and T. Donzelli for their support of this work.


\clearpage

\bibliography{refs}
\bibliographystyle{abbrv}

\end{document}